\documentclass[12pt]{article}
\usepackage[left=1in,right=1in,top=1in,bottom=1in]{geometry}
\usepackage{parskip}
\DeclareMathAlphabet\mathbfcal{OMS}{cmsy}{b}{n}
    \newcommand{\mbc}[1]{\ensuremath{\mathbfcal{#1}}}
    
\usepackage{blkarray}

\usepackage{amsmath, amsthm, amsfonts, amssymb, mathtools,  mathrsfs}
\usepackage{thmtools}
   
\usepackage{tikz, graphicx}
    \usetikzlibrary{positioning}
    \usetikzlibrary{arrows, arrows.meta, decorations.pathmorphing, decorations.markings}
    \newcommand{\cf}[1]{\textsf{#1}}
    \tikzset{%
        fwdrxn/.style={thick, 
        arrows={-left to},  
        text=ratecnst, font=\scriptsize},
        revrxn/.style={very thick, arrows={-Stealth[length=5pt,width=5pt,left]}},
        newt/.style={turq, opacity=0.085}
    }
    \tikzset{%
        EdgeShift/.style n args={2}{transform canvas={xshift={#1}, yshift={#2}}}, %
        Jnode/.style={circle, draw=black, thick,  fill=JnodeColor, fill opacity = 1, inner sep=0pt,minimum size=0.48cm, outer sep=2pt, font=\footnotesize},
        posEdge/.style={very thick, arrows={-Stealth[length=7.5pt,width=7.5pt]} },
        negEdge/.style={very thick, arrows={-Bar[length=0pt,width=7.5pt]} },
        neutEdge/.style={very thick, -{Circle[open, scale=0.75]} } 
    }
    
    \tikzset{
        o/.style={very thick, shorten >={#1} , decoration={markings,
                mark={at position 1 with {\draw[xshift=-{0.5*#1}] circle [radius={2pt}];}}  }, postaction=decorate},
        o/.default=5pt
    }
    \definecolor{JnodeColor}{RGB}{235,230,237}
    \definecolor{ActivEdgeGreen}{RGB}{34,167,132}
    \definecolor{InhibEdgeRed}{RGB}{153,8,3}
    \definecolor{JnodeColorExternal}{RGB}{168, 165, 162}
    \definecolor{vblue}{RGB}{34, 69, 151}

\usepackage{color, xcolor}

    \definecolor{Teal}{RGB}{0,153,115}
    \definecolor{DarkTurq}{RGB}{0,153,115}
    \definecolor{Orange}{RGB}{255,95,5}
    \definecolor{NavyBlue}{RGB}{19,41,75}
    \definecolor{OrangeBlood}{RGB}{255,80,48}
    \definecolor{Pantina}{RGB}{0,126,142}  
    \definecolor{Harvest}{RGB}{252,179,22}  
    \definecolor{Industrial}{RGB}{29,88,167}  
    \definecolor{Arches}{RGB}{0,159,212} 
    \definecolor{Berry}{RGB}{92,14,65}  
    \definecolor{Storm1}{RGB}{113,115,114} 
    \definecolor{Storm2}{RGB}{156,154,157} 
    \definecolor{Storm3}{RGB}{200,198,199} 

    \colorlet{ratecnst}{Storm1}

\usepackage[margin=1.5cm]{caption}
    \usepackage{subcaption}
    
\usepackage{tabularx}
    \newcolumntype{C}{>{\centering\arraybackslash}X}

\usepackage[linkbordercolor=cyan, pdfborder={0 0 0}]{hyperref}
    \hypersetup{
        colorlinks=true,
        linkcolor=Industrial,
        filecolor=Industrial,      
        urlcolor=Industrial,
        citecolor=Industrial,
    }

\usepackage[titletoc,title]{appendix}
    \usepackage{titletoc}
\usepackage[compress]{cite}
\usepackage{bibunits}

\usepackage[nameinlink]{cleveref}
    
    \crefname{ex}{Example}{Examples}
    \crefname{thm}{Theorem}{Theorems} 
    \crefname{lem}{Lemma}{Lemmas}
    \crefname{prop}{Proposition}{Propositions}
    \crefname{cor}{Corollary}{Corollaries} 
    \crefname{conj}{Conjecture}{Conjectures} 
    \crefname{defn}{Definition}{Definitions}
    \crefname{rmk}{Remark}{Remarks} 
    \crefname{equation}{Eq.}{Eqs.} 
    \crefname{figure}{Fig.}{Figs.}

    \Crefname{sisection}{Appendix}{Appendix}
    \Crefname{sisubsection}{Appendix}{Appendix}
    \Crefname{sifig}{\scshape{Fig.}}{\scshape{Figs.}}
    \Crefname{sitab}{\scshape{Table}}{\scshape{Tables}}
    \Crefname{sieq}{Eq.}{Eqs.}
    
	\newtheorem{thm}{Theorem}[section]
	\newtheorem{lem}[thm]{Lemma}
	\newtheorem{prop}[thm]{Proposition}
	\newtheorem{cor}[thm]{Corollary}
	
	\newtheorem*{thm*}{Theorem}
	\newtheorem*{cor*}{Corollary}
	\theoremstyle{definition} 
		\newtheorem{defn}[thm]{Definition}
		\newtheorem{ex}[thm]{Example}
    	\newtheorem{rmk}[thm]{Remark}	
    \newcommand{\df}[1]{{\bf\emph{{#1}}}}

\newcommand{\eq}[1]{\begin{align*}#1\end{align*}}
	\newcommand{\eqn}[1]{\begin{align}#1\end{align}}  
\newcommand{\ds}{\displaystyle}	
\renewcommand{\epsilon}{\varepsilon}	
\renewcommand{\phi}{\varphi}
  
\newcommand\mc[1]{\mathcal{#1}}

\newcommand\mrm[1]{\mathrm{#1}}
 
\newcommand{\rr}{\ensuremath{\mathbb{R}}}

\DeclareMathOperator{\ran}{ran}
\DeclareMathOperator{\sgn}{sign}

\DeclareMathOperator{\diag}{diag}
\DeclareMathOperator{\Tr}{tr}

\newcommand\abs[1]{\left|{#1}\right|}	
\newcommand\norm[1]{\left\lVert{#1}\right\rVert}

\newcommand{\vv}[1]{{\boldsymbol{#1}}} 
    \newcommand{\xx}{\vv x}
\newcommand{\ehat}{\ensuremath{\hat{\mathbf{e}}}}

\newcommand{\mm}[1]{\mathbf{#1}} 
\newcommand{\rrp}{\rr_{\geq 0}}
\newcommand{\rrpp}{\rr_{>0}}

\usepackage{bbm}  
	\newcommand{\ones}{\mathbbm{1}}
\newcommand{\kk}{\ensuremath{\kappa}}

\newcommand{\deriv}[2]{\partial_{#2}{#1}}
\newcommand{\neutedge}{
    {\protect\tikz\protect
        \draw[EdgeShift={0pt}{-1pt}, -{Circle[open=false, length=1.5mm]}, thick]  
        (0,0) node[outer sep=1pt] {} -- 
        (0.45,0) node[outer sep=1pt] {};
    }}
\newcommand{\posedge}{
    {\protect\tikz \protect
        \draw[EdgeShift={0pt}{-0pt}, -stealth, thick]  
        (0,0) node[outer sep=1pt] {} -- 
        (0.4,0) node[outer sep=1pt] {};
    }}
\newcommand{\negedge}{
    {\protect\tikz \protect
        \draw[EdgeShift={0pt}{-0.5pt}, -{|[width=2.5mm]}, thick]  
        (0,0) node[outer sep=1pt] {} -- 
        (0.4,0) node[outer sep=1pt] {};
    }}
    \newcommand{\ppFBL}{\ensuremath{
        \mathrel{%
            \begin{tikzpicture}[baseline=-1.2ex]
            \draw[-stealth,thick] (0,0) -- (0.4,0);
            \draw[stealth-, EdgeShift={0pt}{-4pt}, thick] (0,0) -- (0.4,0);
            \end{tikzpicture}%
        }%
    }}
    
    \newcommand{\pnFBL}{\ensuremath{%
        \mathrel{%
            \begin{tikzpicture}[baseline=-1.2ex]
            \draw[-stealth, thick] (0,0) -- (0.4,0);
            \draw[{|[width=1.75mm]}-, EdgeShift={0pt}{-4pt}, thick] (0,0) -- (0.4,0);
            \end{tikzpicture}%
        }%
    }}
    
    \newcommand{\neutFBL}{\ensuremath{%
        \mathrel{%
            \begin{tikzpicture}[baseline=-1.2ex]
            \draw[-{Circle[open=false, length=1.5mm]}, thick] (0,0) -- (0.4,0);
            \draw[{Circle[open=false, length=1.5mm]}-, EdgeShift={0pt}{-4pt}, thick] (0,0) -- (0.4,0);
            \end{tikzpicture}%
        }%
    }}
\usepackage{mathtools}
    \newcommand{\FRxn}[1]{
        \xrightharpoonup{#1}
    }
    \newcommand{\RRxn}[2]{
        \xrightleftharpoons[#2]{#1}
    }

\newcommand{\figdraftboxwidth}{
}

\usepackage{multirow}
\usepackage{array}
\usepackage{enumitem}
    \setlist[enumerate]{itemsep=-5pt, topsep=-5pt}
    \setlist[itemize]{itemsep=-8pt, topsep=-8pt}
\usepackage{multicol}

\usepackage{nicematrix}

\newcommand{\uInt}{I}
\newcommand{\X}{\mathbf{X}}
\usepackage{authblk}\usepackage[symbol]{footmisc}

\title{On non-monotonic steady state response curves}

\author[1]{
    Polly Y. Yu%
}\author[2]{
    Eduardo D. Sontag%
}
\affil[1]{\small Department of Mathematics, University of Illinois Urbana-Champaign,
{\tt\small  pollyyu@illinois.edu}}
\affil[2]{\small Departments of Bioengineering and Electrical \& Computer Engineering, Northeastern University,
{\tt\small e.sontag@northeastern.edu}}
\date{}

\begin{document}
\maketitle
\renewcommand*{\thefootnote}{\arabic{footnote}}

\begin{abstract}
    Non-monotonic steady state response curves can readily arise from an incoherent feedforward loop (IFFL), which has also been conjectured to be a necessary mechanism. In a previous work, we showed this conjecture to be false for systems where the control parameter directly influences only one state variable. Here, we extend this result to arbitrary systems. We show that a stable non-monotonic response requires either an IFFL or a combination of positive and negative feedback loops. Finally, we apply our theoretical findings to three network-structured biological contexts: signaling pathways, gene regulation, and ecological consumer-resource models. For the first two contexts, we identify necessary motifs for non-monotonic responses, while for consumer-resource models, we provide an alternative proof of the Competitive Exclusion Principle. 
\end{abstract}

\begin{bibunit}[unsrt]
\section{Introduction}
\label{sec:intro}

Living organisms are constantly responding to their environments, and the magnitude of the response often depends on the signal strength. Traditionally in toxicology and pharmacology, dose-response curves are assumed to be monotonic and sigmoidal, reflecting the intuition that a stronger signal generates a stronger response until saturation. However, non-monotonic responses are also observed in biology, e.g., the effect of metal ions on rabbit myosin activity~\cite{NimuraMiuraShinobuImura1987}, phytochemicals on tumor cell proliferation~\cite{Jodynis-LiebertKujawska2020}, and antigens on T-cells activation~\cite{LeverLimKrugerNguyenEtAl2016}. In pharmacology, \emph{biphasic}%
    \footnote{\lq\lq Biphasic\rq\rq\ response normally refers to a two-phase non-monotonicity, i.e., either a bell-shaped or inverted bell-shaped response. Throughout this paper, we use the term biphasic to mean non-monotonicity, with no implication on the number of extrema.
    } 
response have clinical implications, as the optimal dosage may occur at an unexpected point. We are interested in uncovering the network architecture behind non-monotonic responses.

Many biological processes are represented as networks, analogous to wiring diagrams in engineering~\cite{TysonChenNovak2003}. Particularly in cellular signaling and gene regulatory networks, which are often presented as signed directed graphs, certain network motifs are often associated with specific dynamical behaviors. Among these motifs, incoherent feedforward loops (IFFLs) have been proposed as the primary mechanism responsible for biphasic responses~\cite{KimKwonCho2008, LeverLimKrugerNguyenEtAl2016}.

In our preliminary conference paper~\cite{YuSontag-biphasic1}, we showed that a stable biphasic response does not necessarily require an IFFL, but can also be generated by a combination of positive and negative feedback loops (PFBL and NFBL). However, we were only able to prove this for systems where the independent parameter $u$ only appears in a single equation, i.e., $\dot{\vv x} = \vv f(\vv x) + \ehat_1 g(u)$. 

In this work, we extend our findings to an arbitrary autonomous system $\dot{\xx} = \vv f(\xx; u)$ with a single control parameter $u$. We establish that a necessary condition for stable biphasic response remains either an IFFL, or a combination of PFBL and NFBL, where the PFBL is vertex-disjoint from an input--output path and the NFBL is downstream of the input and upstream of the output. We apply our developed framework to three biological contexts where network representations naturally arise. 
First, we revisit a model of cellular signaling in the activation of T-cells~\cite{LeverLimKrugerNguyenEtAl2016}. We also give an explicit example with no IFFLs that exhibits biphasic response; this definitively settles the question posed in \cite{LeverLimKrugerNguyenEtAl2016}. Next, we consider transcriptional regulation models, where activation and inhibition come from activatory/inhibitory Hill functions. 
Finally, stepping away from steady state responses, we give an alternative proof of the Competitive Exclusion Principle for consumer-resource models.

\subsection{Outline}
\label{sec:outline}

In \Cref{sec:prelim}, we lay out the minimal assumptions for the system under consideration and define its \emph{J-graph}, a signed directed graph based on the Jacobian matrix. Standing independently, \Cref{sec:symbolic_minor} contains technical lemmas from combinatorial matrix theory that identify graph-theoretic conditions for when a minor of a signed symbolic matrix contains terms of opposite signs. In \Cref{sec:main-results}, we turn to the steady state response curve $x_n^*(u)$, and prove our main results on stable non-monotonic response. Finally, \Cref{sec:application} contains applications, and we conclude with a discussion in \Cref{sec:conclusion}.

\subsection{Notation}
\label{sec:notation}

The following notations are used throughout this paper.
\begin{itemize}
    \item $g(s) \equiv 0$ means $g$ is the identically zero function on its domain; 
    \item $[n] = \{ 1,2,\ldots, n\}$, 
    \item $\hat{i} = [n]\setminus\{i\}$ for any $i \in [n]$, 
    \item $\ehat_i$ is the $i$th standard orthonormal vector of $\rr^n$,  
    \item  $\mm J(\gamma, \delta)$ is the submatrix of the matrix  $\mm J$ with rows indexed by $\gamma$ and columns by $\delta$, e.g., $\mm J(\hat i, \hat j)$ is submatrix obtained by deleting row $i$ and column $j$,
    \item $\mm J[\gamma, \delta] = \det \mm J(\gamma, \delta)$,  
    \item $\mm J(\gamma) = \mm J(\gamma, \gamma)$ and $\mm J[\gamma] = \mm J[\gamma, \gamma]$,
    \item $S_\gamma$ is the permutation group on the set $\gamma$, with $S_n = S_{[n]}$, 
    \item $\mm A \odot \mm B$ is the component-wise multiplication of two matrices of the same size. 
\end{itemize}

We say a continuous function $g(s)$ is \emph{strictly non-negative} (respectively \emph{strictly non-positive}) if and only if $g(s) \geq 0$ (respectively $g(s) \leq 0$) for all $s$, and there exists $s_0$ such that $g(s_0) \neq 0$. We say $g(s)$ is \emph{non-vanishing} if $g(s) \neq 0$ for all $s$.

\section{Preliminaries}
\label{sec:prelim}

Consider the autonomous system 
\eqn{
\label{eq:syst_main}
    \dot{\vv x} = \vv f(\xx;  u), \quad \xx(0) = \vv a \in \rr^n, 
}
with control parameter $u$ and output $x_n$. For a continuously differentiable branch of steady states $\xx^*(u)$, we seek conditions under which $x_n^*(u)$ is non-monotonic, that is, when the local sensitivity $d_u{x_n^*(u)}$ is not sign-definite. 

Suppose $\vv f \colon \X \times \uInt  \to \rr^n$ is $C^1$ on $(\xx,u) \in \X \times \uInt$, where $\X \subseteq \rr^n$ is an open non-empty set and $\uInt\subseteq \rr$ is an open non-empty interval. Denote by $\phi(t, \vv a, u)$ the solution to \eqref{eq:syst_main}. 
We make the following standing assumption:
\begin{align*}
    \forall u_0 \in \uInt \,\, 
    \exists \vv x_0\in \X 
        \text{ such that } 
        \vv f(\vv x_0; u_0) = \vv 0  
    \text{ and } 
    D_{\xx}\vv f (\vv x_0; u_0) \text{ has full rank}.
\end{align*}
By the Implicit Function Theorem, this implies: 
\begin{align}
\begin{gathered}
\label{eq:assumption-A0}
    \forall u_0 \in \uInt \,\, 
    \exists \text{ an open interval } \uInt_0 \subseteq \uInt 
        \text{ containing } u_0 \,\, 
    \exists \text{ a } C^1 \text{ function }  
        \xx^* \colon \uInt_0 \to \X 
        \\
        \text{ such that }
    \forall u \in \uInt_0  
        \text{ we have }
        \vv f(\xx^*(u); u) = \vv 0 
        \text{ and } 
        D_{\xx}\vv f (\vv x^*(u); u) \text{ has full rank},
\end{gathered}\tag{A0}
\end{align}
where $\xx^*$ is the locally unique branch with $\xx^*(u_0) = \xx_0$.

While \eqref{eq:assumption-A0} guarantees the existence of a steady state branch, it does not ensure stability. When stability is required, we assume the following property:
\begin{align}
\begin{gathered}
\label{eq:assumption-A1}
    \text{For any } \xx^* \text{ and } \uInt_0 \text{ as in \eqref{eq:assumption-A0}, }
    \forall u \in \uInt_0 \,\, 
    \exists \text{ an open ball } B_{\epsilon(u)}(\xx^*(u)) \subseteq \X 
        \\
        \text{ for some } \epsilon(u) > 0 
        \text{ such that } 
    \phi(t,\vv a, u)  \xrightarrow{t\to\infty} \xx^*(u)
        \text{ for any } \vv a \in B_{\epsilon(u)}(\xx^*(u)). 
        \\ 
    \text{ Moreover, the mapping } u \mapsto \epsilon(u) \text{ is continuous}. 
\end{gathered} \tag{A1}
\end{align}
Alternatively, we may assume the stronger and more easily checked property: 
\begin{align}
\begin{gathered}
\label{eq:assumption-A2}
    \forall u_0 \in \uInt \,\, 
    \exists \vv x_0\in \X 
        \text{ such that } 
        {\vv f}(\vv x_0; u_0) = \vv 0  
        \text{ and } 
        D_{\xx}\vv f (\vv x_0; u_0) \text{ is Hurwitz}.  
\end{gathered}\tag{A2}
\end{align}
In \Cref{sec:A2-implies-A1}, we prove that \eqref{eq:assumption-A2} implies \eqref{eq:assumption-A1}, which guarantees that for any $u$ sufficiently close to $u_0$, we have $\xx^*(u_0) \in B_{\epsilon(u)}(\xx^*(u))$, ensuring that $\phi(t, \xx^*(u_0), u) \xrightarrow{t\to\infty} \xx^*(u)$. In other words, if the system is initially at steady state $\xx^*(u_0)$, after a small perturbation in $u$, the system converges to the new steady state $\xx^*(u)$. See \Cref{fig:A1-implies-convergence} for a schematic illustration and \Cref{sec:A1-implies-convergence} for a proof of convergence.

\begin{figure}[h!t]
\centering
\begin{tikzpicture}
\clip (-3, -2.75) rectangle (3.7, 3);
    \draw[fill=Arches!10, very thick, dashed] (0,0) ellipse (0.35 and 1.18);
    \node at (1,1.5) [left] {\footnotesize $B_{\epsilon(u_1)}(\xx^*(u_1))$}; 
        
    \begin{scope} 
        \clip (-2, -2.2) rectangle (3.7, 3);
        \draw[ultra thick, Orange] (-2,-2) .. controls (-1.5,-1.8) and  (-1,-1.5).. (0,0) .. controls (1,1) .. (2.4, 1.7) node [right] {$\xx^*(u)$};
    \end{scope}
    \filldraw[black] (0,0) circle (2pt);
    
    \draw[black, -{Stealth[scale=1.25]}] (-0.7,-0.97)-- (-0.1,-0.97); 
    \filldraw[Industrial] (-0.7,-0.97) circle (1.5pt); 
    \node [Industrial] at (-0.7,-0.97) [left] {\footnotesize $\xx^*(u_0)$};
    \draw[Industrial, thick, -stealth] 
        (-0.1,-0.97) to[out=10, in=320, looseness=0.5]
        (0.05,0.6) to[out=130,in=80, looseness=1] 
        (-0.2,0.18) to[out=-100,in=160, looseness=0.65] 
        (-0.1,-0.45) to[out=-20,in=-80, looseness=1]
        (0.05,-0.1); 
        
    \begin{scope}
        \draw[thin, ->, EdgeShift={0pt}{-65pt}] (-2.5,0)--(3,0) node [right] {$u$};
        \draw[thin, ->, EdgeShift={0pt}{-65pt}] (-2.5,0)--(-2.5,5) node [left] {$\xx$};
        \draw[thin] (-0.7,-2.4)--(-0.7,-2.2); 
        \draw[thin] (0,-2.4)--(0,-2.2);
        \node[EdgeShift={0pt}{-67pt}] (u1) at (-0.7,0) [below] {\footnotesize $u_0$};
        \node[EdgeShift={0pt}{-67pt}] (u2) at (0,0) [below] {\footnotesize $u_1$};
    \end{scope} 
\node at (0,-2.5) {};
\end{tikzpicture}
\caption{Assumption~\eqref{eq:assumption-A1} guarantees that following a small perturbation in $u$, the previous steady state $\xx^*(u_0)$ lies within the basin of attraction of the new steady state $\xx^*(u_1)$.}
\label{fig:A1-implies-convergence}
\end{figure}

We also make several assumptions regarding the Jacobian matrix $\mm J(\xx; u) = D_{\xx} \vv f(\xx; u)$ of \eqref{eq:syst_main}. First, we assume that each entry of $\mm J(\xx; u)$ has a fixed sign on $\X \times \uInt_0$, i.e., each partial derivative $\deriv{f_j}{i}$ is either identically zero, strictly non-negative, or strictly non-positive. Moreover, the diagonal entries of $\mm J$ are assumed to be strictly non-positive. We also assume for each $i \in [n]$ that $\deriv{f_i}{u}$ is either identically zero, strictly non-negative, or strictly non-positive, with at least one such derivative not identically zero.

We refer to $u$ as the control parameter and treat it as the \emph{input} of the system%
    \footnote{Here, we treat the control parameter $u$ as the input. This contrasts with \cite{deOliveiraMadeiraAntoneli2024, GolubitskyStewartAntoneliHuangWang2020, YuSontag-biphasic1, AraujoLiotta2018}, where the \lq\lq input\rq\rq\ is any state variable $x_i$ for which $\deriv{f_i}{u} \not\equiv 0$.
    }%
. Without loss of generality, let $x_n$ be the \emph{output}. Then the system \eqref{eq:syst_main} is an \emph{input--output system}.  
The \df{(steady state) response} of the system is $x^*_n(u)$, while its derivative $d_ux^*_n(u)$ is the \df{(local) sensitivity}.

\subsection{Graph representation of an input--output system}
\label{sec:graph_of_syst}

We represent the input--output system \eqref{eq:syst_main} as a signed directed graph $G$ with $n+1$ nodes: one for each state variable $x_i$ and one for the control parameter $u$. The edges of $G$ encode the signs of the entries in the Jacobian matrix $\mm J = D_{\xx} \vv f$ and the vector $\deriv{\vv f}{u}$. We call $G$ the \emph{J-graph}, to emphasize its origin in the Jacobian matrix.

Numerous other papers have represented ODE systems as signed directed graphs, the so-called \emph{influence graph}, see \cite{Soule2003, KaufmanSouleThomas2007, Banaji2010, GolubitskyStewartAntoneliHuangWang2020} for example. The key difference between the J-graph in this work and those in some of earlier formulations, including our own~\cite{YuSontag-biphasic1}, is the \emph{explicit} inclusion of the control parameter $u$ as a distinct node. This seemingly innocuous change greatly simplifies the proofs. 

Formally, the \df{J-graph} of \eqref{eq:syst_main} is defined as follows. A positive directed edge exists from $x_i$ to $x_j$ (denoted $i \posedge j$) if and only if $\deriv{f_j}{i}$ is strictly non-negative, i.e., $\deriv{f_j}{i} \geq 0$ and not identically zero. A negative edge ($i \negedge j$) exists if and only if $\deriv{f_j}{i}$ is strictly non-positive. Similarly, a directed edge exists from $u$ to $x_j$ if and only if $\deriv{f_j}{u}$ is non-zero, with its sign determined by the sign of $\deriv{f_j}{u}$. To indicate that an edge exists from $x_i$ to $x_j$ without specifying its sign, we use the notation $i \neutedge j$. Finally, to designate $x_n$ as the output, we draw an outgoing half-edge from its node. Note that under the assumption of strictly non-positive diagonal, the J-graph of \eqref{eq:syst_main} contains negative self-loops at every state node, which we may omit from drawing for simplicity. 

More generally, the J-graph can be defined for any square signed matrix~\cite{vandenDriessche2018,BrualdiRyser1991}. Let $\mbc A$ be an $m \times m$ \emph{signed symbolic matrix}, i.e., $[\mbc A]_{j,i} \in \{0, a_{ji}, -a_{ji}\}$ where each $a_{ji}$ is a positive indeterminate. The \df{J-graph of $\mbc A$} has $m$ nodes, containing a positive edge $i \posedge j$ (respectively negative edge $i \negedge j$) if and only if $[\mbc A]_{j,i} = a_{ji}$ (respectively $[\mbc A]_{j,i} = -a_{ji}$). In this manner, $\mbc A$ serves as the adjacency matrix of its J-graph. 

We identify a subgraph of $G = (V,E)$ either by a subset of edges together with all incident vertices, or as an induced subgraph defined by a subset of vertices. In the latter case, say given any $\gamma\subseteq V$, the induced subgraph $G[\gamma]$ has vertex set $\gamma$ and edge set $\{(i,j) \in E \colon i,j \in \gamma\}$. In other words, $G[\gamma]$ is the subgraph obtained by keeping the nodes in $\gamma$ and all edges between them. 

More generally, given any $\gamma, \delta \subseteq V$, not necessarily disjoint, we define $G[\gamma, \delta]$ to be the subgraph containing all edges from $\delta$ to $\gamma$, that is, $G[\gamma, \delta]$ has vertex set $\gamma \cup \delta$, and edge set $\{(i,j) \in E \colon i  \in \delta \text{ and } j \in \gamma \}$. Of particular relevance is $G[\hat{i}, \hat{j}]$, obtained by removing all edges leaving $j$ and all edges entering $i$. See \Cref{ex:symbolic_graph} and \Cref{fig:symbolic_graph} for an example.

Our notation for induced subgraph is chosen to be consistent with indexing in the Jacobian matrix, where each column records the outgoing edges from a given node, and rows record incoming edges. More generally, induced subgraphs directly correspond to submatrices. Given any $\gamma, \delta \subseteq [m]$, the induced subgraph $G[\gamma]$ corresponds to the principal submatrix $\mbc A(\gamma)$, while $G[\gamma, \delta]$ corresponds to $\mbc A(\gamma, \delta)$. In particular, $G[\hat{i}, \hat{j}]$ corresponds to the submatrix $\mbc A(\hat{i}, \hat{j})$ obtained by deleting row $i$ and column $j$ from $\mbc A$.

\begin{figure}[th]
\centering 
\begin{subfigure}[t]{0.3\textwidth}
    \centering\figdraftboxwidth
    \includegraphics[]{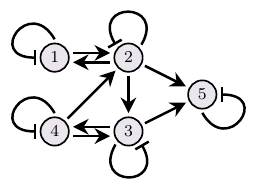}
    \caption{}
    \label{fig:symbolic_graph-a}
\end{subfigure}
\begin{subfigure}[t]{0.3\textwidth}
    \centering\figdraftboxwidth
    \includegraphics[]{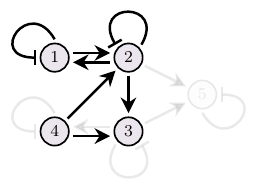}
    \caption{}
    \label{fig:symbolic_graph-b}
\end{subfigure}
\begin{subfigure}[t]{0.3\textwidth}
    \centering\figdraftboxwidth
    \includegraphics[]{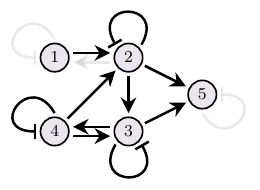}
    \caption{}
    \label{fig:symbolic_graph-c}
\end{subfigure}
\caption{
    (a) The J-graph of the symbolic matrix $\mbc A$ from \Cref{ex:symbolic_graph}. 
    (b) The induced subgraph $G[\gamma, \delta]$ where $\gamma = \{1,2,3\}$ and $\delta = \{1,2,4\}$. 
    (c) The induced subgraph $G[\hat 1, \hat 5]$.
}
\label{fig:symbolic_graph}
\end{figure}

\begin{ex}
\label{ex:symbolic_graph}
    Consider the signed symbolic matrix 
    \eq{
        \mbc A = 
        \begin{pmatrix*}[r]
            -a_{11} & a_{12} & 0 & 0 & 0\\
            a_{21} & -a_{22} & 0 & a_{24} & 0 \\ 
            0 & a_{32} & -a_{33} & a_{34} & 0 \\ 
            0 & 0 & a_{43} & -a_{44} & 0  \\ 
            0 & a_{52} & a_{53} & 0 & -a_{55}
        \end{pmatrix*}, 
    }
    where each $a_{ij}$ is a positive indeterminate. Its J-graph $G$ is shown in \Cref{fig:symbolic_graph-a}. 
    
    Let $\gamma = \{1,2,3\}$ and $\delta = \{1,2,4\}$. The induced subgraph $G[\gamma, \delta]$, shown in \Cref{fig:symbolic_graph-b}, is obtained by keeping only the edges that originate from a node in $\delta$ and terminating at a node in $\gamma$. This subgraph is associated with the submatrix 
    \eq{
        \mbc A(\gamma, \delta) = \begin{pmatrix*}[r]
            -a_{11} & a_{12}  & 0 \\
            a_{21} & -a_{22}  & a_{24}  \\ 
            0 & a_{32}  & a_{34}  
        \end{pmatrix*},
    }
    obtained by deleting from $\mbc A$, rows indexed by $\gamma^c$ and columns indexed by $\delta^c$. The induced subgraph $G[\hat 1, \hat 5]$,  shown in \Cref{fig:symbolic_graph-c} and associated with the submatrix $\mbc A(\hat 1, \hat 5)$, is obtained by deleting all edges that leave the node $5$ and all edges that enter the node $1$.
\end{ex}

Next, we review terminology related to signed directed graphs. A \df{walk} is a non-empty sequence of directed edges connecting a sequence of vertices. A \df{path} is a simple walk, i.e., a walk with no repeating vertices. A \df{cycle} is a closed path. The \df{length} of a path or a cycle is the number of edges in it. The \df{sign} of a subset of edges $E'$ is defined to be $\sgn(E') = \prod_{e \in E'} \sgn(e)$.

A subset of nodes $\chi$ is said to be \df{downstream of the node $i$} if there exists a walk from $i$ to $v$ for every $v \in \chi$. It is said to be \df{upstream}%
    \footnote{\lq\lq Upstream\rq\rq\ and \lq\lq downstream\rq\rq\ do not in general define a partial ordering of the nodes, only a pre-ordering.} 
\df{of the node $j$} if there is a walk from every node in $\chi$ to $j$.  Similarly, a subset of edges is downstream (respectively upstream) of a node $v$ if the set of incident nodes is downstream (respectively upstream) of $v$. 
For example, in \Cref{fig:symbolic_graph-a}, the cycle $3 \ppFBL 4$ is downstream of $1$, but not of $5$.

A \df{feedback loop} is a cycle of length two or more. It is a \df{positive feedback loop} (\df{PFBL}) if its sign is positive; otherwise it is a \df{negative feedback loop} (\df{NFBL}). A \df{feedforward loop from $i$ to $j$} is a pair of distinct paths $P_1 \neq P_2$, both originating from node $i$, and terminating at node $j \neq i$. We allow $P_1$ and $P_2$ to share some common internal nodes and edges. Without reference as to where it starts or ends, an IFFL is by default from input to output. A feedforward loop is said to be \df{coherent} (\df{CFFL}) if $\sgn(P_1) = \sgn(P_2)$; otherwise it is \df{incoherent} (\df{IFFL}). For example, in \Cref{fig:symbolic_graph-a}, $1 \posedge 2 \posedge 5$ and $1 \posedge 2 \posedge 3 \posedge 5$ form a CFFL. An \df{input--output path} (\df{I/O path}) is a path from the input node to the output node. When $G$ is the J-graph of \eqref{eq:syst_main}, we take $u$ as input, and $n$ as output.

A \df{vertex-disjoint cycle cover} of $G$ (or \df{disjoint cycle cover} for short) is a subgraph $H$ consisting of cycles, such that every node of $G$ lies in exactly one cycle, and no two cycles share a common vertex. Some cycles in $H$ may be self-loops. Two example disjoint cycle covers of the graph in \Cref{fig:symbolic_graph-a} are shown in \Cref{fig:symbolic_cover-a,fig:symbolic_cover-b}. 
In the case of the J-graph of a symbolic matrix $\mbc A$, it is well known that each disjoint cycle cover (also called a \emph{Hamiltonian hooping}~\cite{Soule2003}) corresponds to a non-zero term in the expansion of $\det(\mbc A)$; see \Cref{lem:nonzeroterms-subgraphs} or \cite{Soule2003, Harary1962}.

Each non-zero term in a non-principal first minor corresponds to a different kind of vertex-disjoint graph covering (\Cref{fig:symbolic_cover-c}). 

\begin{defn}
\label{def:IO-cover}
    Let $G$ be the J-graph with input node $i$ and output node $j$. A subgraph $H$ is a \df{disjoint I/O cover} if it is a vertex-disjoint collection of an I/O path and cycles, such that together they cover all the nodes of $G$. 
\end{defn}

\begin{figure}[t!]
\centering 
\begin{subfigure}[t]{0.3\textwidth}
    \centering\figdraftboxwidth
    \includegraphics[]{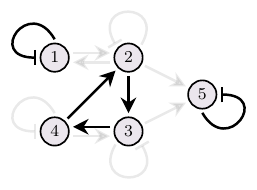}
    \caption{}
    \label{fig:symbolic_cover-a}
\end{subfigure}
\begin{subfigure}[t]{0.3\textwidth}
    \centering\figdraftboxwidth
    \includegraphics[]{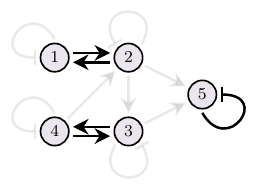}
    \caption{}
    \label{fig:symbolic_cover-b}
\end{subfigure}
\begin{subfigure}[t]{0.37\textwidth}
    \centering\figdraftboxwidth
    \includegraphics[]{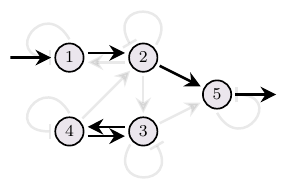}
    \caption{}
    \label{fig:symbolic_cover-c}
\end{subfigure}
\caption{
    (a-b) Two disjoint cycle covers of the J-graph from \Cref{fig:symbolic_graph-a}. 
    (c) A disjoint I/O cover with $1$ as input and $5$ as output.
}
\label{fig:symbolic_graph-cover}
\end{figure}

\section{Subgraphs and minors of a signed symbolic matrix}
\label{sec:symbolic_minor}

In this section, we collect several results concerning minors of a signed symbolic matrix and its J-graph. They culminate in \Cref{cor:symbolic-minor-neg_diag}, which states that for a  matrix with negative diagonal, either an IFFL or a PFBL disjoint from an I/O path is necessary and sufficient for a non-principal first minor to contain two terms of opposite signs.

Mixed signs in minors have been investigated in combinatorial matrix theory~\cite{BassettMaybeeQuirk1968,FiedlerPtak1966, BrualdiRyser1991} and \emph{sign-solvability} of a linear system $\mm A \xx = \vv b$, particularly when the matrix has negative diagonal, see, e.g., \cite[Chapter~3.2]{BrualdiShader1995}. The motivation here is to ask given a solution to $\mm A \xx = \vv b$, can the sign of $x_j$ be determined by the sign pattern of $\mm A$ and $\vv b$ alone. Since generically linear equations can be solved using Cramer's rule, the question reduces to one about cofactors. 

Note that \Cref{cor:symbolic-minor-neg_diag} is a combination of Theorem~3.2.1 and Corollary~3.2.3 in \cite{BrualdiShader1995}. However, for our last application in \Cref{sec:application-CRM}, the diagonal of the Jacobian graph is non-positive. Since the proofs are relatively elementary, and the insight connecting the matrix with its J-graph are important, we derive the results from the beginning, with fewer constraints on the diagonal entries.

It is worth mentioning that a key assumption throughout this section is that entries of the matrix are algebraically independent. Thus, there is no cancellation when computing minors. Clearly this is false for general ODE systems (e.g., \Cref{ex:PA-suff_notNecessary}); nonetheless, these algebraic results can provide necessary conditions for the non-monotonicity of response curves.

Recall for any permutation $\sigma \in S_m$, the \df{disjoint cycle decomposition} is the decomposition of $\sigma$ into disjoint cycles that partition $[m]$. Its \df{non-trivial cycle decomposition} drops the fixed points. For example, $\sigma = (1)(24)(356)(7) \in S_7$ is a disjoint cycle decomposition, while $(24)(356)$ is its non-trivial cycle decomposition. By a slight abuse of notation, we identify cycles with their underlying subsets (e.g., $(356)$ with $\{3,5,6\}$). 

\begin{lem}
\label{lem:nonzeroterms-subgraphs}
    Let $\mbc A$ be an $m \times m$ signed symbolic matrix with J-graph $G$. Let $\gamma \neq \delta \subseteq [m]$ where $|\gamma| = |\delta|$, and let $\Phi \colon \delta^c \to \gamma^c$ be a bijection.
    \begin{enumerate}[label={(\alph*)}]
        \item\label{lem:nonzeroterms-subgraphs-a} The collection of non-zero terms in the principal minor $\mbc A[\gamma]$ is in bijection with the collection of disjoint cycle covers of $G[\gamma]$. 

        \item\label{lem:nonzeroterms-subgraphs-b}  The collection of non-zero terms in the non-principal minor $\mbc A[\gamma, \delta]$ is in bijection with the collection of disjoint cycle covers of $G[\gamma, \delta] \cup \{ i \posedge \Phi(i) \colon i \in \delta^c \}$. 
        
        \item\label{lem:nonzeroterms-subgraphs-c} Let $i$ be the input, and $j \neq i$ be the output. The collection of non-zero terms in the non-principal first minor $\mbc A[\hat{i}, \hat{j}]$ is in bijection with the collection of disjoint I/O covers of $G$.
    \end{enumerate}
\end{lem}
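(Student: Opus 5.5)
The plan is to start from the Leibniz expansion and use the algebraic independence of the indeterminates to identify nonzero terms with permutations. Fix an ordering of $\gamma$ and $\delta$. Then $\mbc A[\gamma,\delta] = \sum_{\tau} \pm \prod_{k\in\gamma} [\mbc A]_{k,\tau(k)}$, where $\tau$ ranges over bijections $\gamma\to\delta$. Since each entry is $0$ or $\pm$ a distinct positive indeterminate, two distinct bijections give distinct monomials. Hence no cancellation occurs, and the nonzero terms correspond exactly to the bijections $\tau$ with $[\mbc A]_{k,\tau(k)}\neq 0$ for all $k$. Recall that column index $\tau(k)$ is the source and row index $k$ the target, so a nonzero entry is an edge $\tau(k)\neuledge k$ in $G$. (Here I mean the edge $\tau(k)\to k$.) Everything therefore reduces to building a bijection between such $\tau$ and the relevant subgraphs.

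For \ref{lem:nonzeroterms-subgraphs-a}, take $\gamma=\delta$. An admissible $\tau$ is a permutation of $\gamma$ whose disjoint cycle decomposition (fixed points included) uses only edges of $G[\gamma]$. Each fixed point is a self-loop, i.e.\ a nonzero diagonal entry. Thus $\tau$ determines a disjoint cycle cover of $G[\gamma]$. Conversely, a cover assigns to each vertex $k$ a unique predecessor $\tau(k)$, and this map is a permutation. The two constructions are mutually inverse.

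For \ref{lem:nonzeroterms-subgraphs-b}, extend a bijection $\tau\colon\gamma\to\delta$ to a permutation $\sigma$ of $\gamma\cup\delta$ by setting $\sigma=\tau$ on $\gamma$ and $\sigma(\Phi(i)) = i$ on $\gamma^c\cap(\gamma\cup\delta)$. The latter encodes the added edges $i\to\Phi(i)$ for $i\in\delta^c$. I need to check the bookkeeping on $[m]$. Working on all of $[m]$, define $\sigma(k)=\tau(k)$ for $k\in\gamma$ and $\sigma(\Phi(i))=i$ for $i\in\delta^c$. Since $\gamma^c=\Phi(\delta^c)$ and $\delta\sqcup\delta^c=[m]$, this $\sigma$ is a permutation of $[m]$. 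Its edges are exactly the edges of $G[\gamma,\delta]$ used by $\tau$ together with the added edges $i\posedge\Phi(i)$. So $\sigma$ is a disjoint cycle cover of $G[\gamma,\delta]\cup\{i\posedge\Phi(i)\}$, viewed as a graph on $[m]$.

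Conversely, in any such cover, every vertex of $\gamma^c$ has its unique incoming edge in the cover. That edge must be an added edge, because $G[\gamma,\delta]$ has no edges entering $\gamma^c$. Likewise, every vertex of $\delta^c$ has an outgoing edge that must be an added one. Removing the added edges therefore leaves a bijection $\gamma\to\delta$ using edges of $G[\gamma,\delta]$. The main care point is this forcing argument, together with being consistent about the vertex set: vertices in $\gamma^c\cap\delta^c$ may become self-loops of added type when $\Phi(i)=i$.

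For \ref{lem:nonzeroterms-subgraphs-c}, apply \ref{lem:nonzeroterms-subgraphs-b} with $\gamma=\hat i$, $\delta=\hat j$ and $\Phi(j)=i$. The graph becomes $G[\hat i,\hat j]$ plus the single added edge $j\posedge i$. A disjoint cycle cover of this graph must use the added edge, since $j$ has no other outgoing edge. Deleting that edge turns the cycle containing it into a path from $i$ to $j$ in $G[\hat i,\hat j]$; the remaining cycles are untouched. Every such path and every cycle avoiding $i$ and $j$ uses edges of $G$ that neither leave $j$ nor enter $i$. Conversely, in any disjoint I/O cover of $G$ the edges automatically avoid entering $i$ and leaving $j$: $i$ is the start of the path and $j$ its end, and neither lies on a cycle. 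So the covers of $G[\hat i,\hat j]\cup\{j\posedge i\}$ correspond exactly to disjoint I/O covers of $G$, and composing with \ref{lem:nonzeroterms-subgraphs-b} gives the claim.
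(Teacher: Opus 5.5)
Your proposal is correct and follows essentially the same route as the paper. For (a) you use the Leibniz expansion with cycle decomposition, and for (c) you specialize (b) so that the single added edge $j \to i$ is forced and closes the I/O path into a cycle. The only difference is in (b): the paper embeds $\mbc A(\gamma,\delta)$ into an auxiliary $m \times m$ matrix $\mbc B$ with $1$'s at positions $(\Phi(i), i)$, expands along the columns in $\delta^c$ to get $\det \mbc B = \pm \mbc A[\gamma,\delta]$, and then invokes (a), whereas you extend each bijection $\tau\colon\gamma\to\delta$ to a permutation of $[m]$ directly---the same forcing of the added edges, but with the no-cancellation and inverse-map checks made explicit.
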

\begin{proof}
\begin{enumerate}[label={(\alph*)}, leftmargin=0pt, itemindent=21pt]
\item 
    Without loss of generality, let $\gamma = [m]$. By the Leibniz formula, $\det(\mbc A) = \sum_{\sigma \in S_m} T_\sigma$, where the monomial $T_\sigma = \sgn(\sigma) \prod_{i=1}^m [\mbc A]_{\sigma(i), i}$ is non-zero if and only if $[\mbc A]_{\sigma(i),i} \neq 0$ for all $i$, i.e., $i \neutedge \sigma(i) \in G$. 
    For any $\sigma \in S_m$ such that $T_\sigma \neq 0$, let $\sigma = \tau_1\cdots \tau_q$ be its disjoint cycle decomposition. Each $\tau_j$ is associated to a cycle $C_j$ in $G$:  any $\tau_j = (k)$ is associated to the self-loop on the node $k$, and any $\tau_j = (k_1k_2\cdots k_\ell)$ with $\ell \geq 2$  is associated to the feedback loop ${k_1} \neutedge \cdots \neutedge {k_\ell} \neutedge {k_1}$. Because the disjoint cycle decomposition partitions $[m]$, the collection $\{ C_j\}_{j=1}^q$ forms a disjoint cycle cover of $G$. 

\item 
    Let $\delta^c = \{ \ell_1 , \cdots , \ell_r\}$, and $\gamma^c = \{ k_j = \Phi(\ell_j) \colon \ell_j \in \delta^c\}$. Consider the matrix $\mbc B$ where 
    \eq{
        [\mbc B]_{ji} = \left\{
        \begin{array}{cl}   
            {[\mbc A]}_{ji} & \text{ if } i \in \delta, \, j \in \gamma,  \\
            1 & \text{ if } i \in \delta^c,\,  j = \Phi(i) , \\
            0 & \text{ otherwise}. 
        \end{array}
        \right.
    }
    Expanding along the columns in $\delta^c$, we arrive at $\det(\mbc B) = (-1)^{\sum_j (k_j + \ell_j)} \mbc A[\gamma, \delta]$, so the non-zero terms in $\mbc A[\gamma, \delta]$ are in a one-to-one correspondence with those in $\det(\mbc B)$. 
    The J-graph of $\mbc B$ is 
    $G[\gamma, \delta] \cup \{ i \posedge \Phi(i) \colon i \in \delta^c \}$. The out-degree of $\ell_j \in \delta^c$ is one, so every cycle that contains $\ell_j$ must contain the edge $\ell_j \neutedge k_j$. Then \ref{lem:nonzeroterms-subgraphs-b} follows from \ref{lem:nonzeroterms-subgraphs-a}. 

\item     
    This is immediate from \ref{lem:nonzeroterms-subgraphs-b}, because every disjoint cycle cover of $G[\hat i, \hat j] \cup \{ j \posedge i\}$ is a vertex-disjoint collection of cycles, where exactly one of them is $P \cup \{j \posedge i\}$ with $P$ being a path from $i$ to $j$. 
    \qedhere 
\end{enumerate}
\end{proof}

\begin{rmk}
\label{rmk:cycles-permutation-cover}
    The main idea behind \Cref{lem:nonzeroterms-subgraphs}, which we return to repeatedly, is the connection between a non-zero term in a minor (e.g., $T_\sigma$ in $\det(\mbc A)$ for some $\sigma \in S_m$) and a graph cover of $G$ (e.g., a disjoint cycle cover $E_\sigma$). The cycles in the disjoint cycle decomposition of $\sigma \in S_m$ correspond to the connected components of $E_\sigma$. In the case of a non-principal first minor, we require  $\sigma(j) = i$. 
\end{rmk}

\begin{figure}[h!t]
\centering 
\begin{subfigure}[t]{0.22\textwidth}
    \centering\figdraftboxwidth
    \raisebox{0.65cm}{\includegraphics{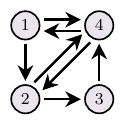}}
    \caption{}
    \label{fig:ex:cycles-permutation-cover-principal-Jgraph}
\end{subfigure}
\begin{subfigure}[t]{0.375\textwidth}
    \centering\figdraftboxwidth
    \includegraphics{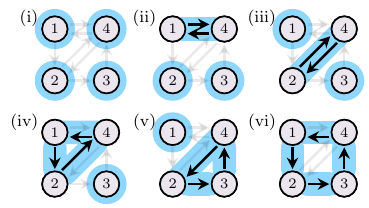}
    \caption{}
    \label{fig:ex:cycles-permutation-cover-principal-covers}
\end{subfigure}
\hspace{0.45cm}
\begin{subfigure}[t]{0.25\textwidth} 
    \centering\figdraftboxwidth
    \includegraphics{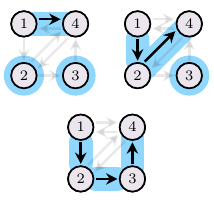}
    \caption{}
    \label{fig:ex:cycles-permutation-cover-nonprincipal-covers}
\end{subfigure}
\caption{
    (a) The J-graph $G$ of the matrix $\mbc A$ from \Cref{ex:cycles-permutation-cover}, with self-loops omitted for simplicity.
    (b) The six disjoint cycle covers of $G$, corresponding to the six non-zero terms in $\det(\mbc A)$. 
    (c) The three disjoint I/O covers corresponding to the three non-zero terms in $\mbc A[\hat 1,\hat 4]$. 
}
\label{fig:ex:cycles-permutation-cover}
\end{figure}

\begin{ex}
\label{ex:cycles-permutation-cover}
    Consider the signed symbolic matrix 
    \eq{
        \mbc A = \begin{pmatrix*}[r]
            -a_{11} & 0 & 0 & a_{14} \\ 
            a_{21} & -a_{22} & 0 & a_{24} \\ 
            0 & a_{32} & -a_{33} & 0 \\ 
            a_{41} & a_{42} & a_{43} & -a_{44}
        \end{pmatrix*}, 
    }
    whose J-graph is shown in \Cref{fig:ex:cycles-permutation-cover-principal-Jgraph}, with self-loops omitted for simplicity. Its determinant  
        $\det(\mbc A) = a_{11} a_{22} a_{33} a_{44} -a_{14}a_{41} a_{22} a_{33} - a_{24}a_{42}a_{11}a_{33} - a_{21} a_{42} a_{14} a_{33} - a_{32} a_{43} a_{24}  a_{11} -  a_{21} a_{32} a_{43}a_{14}$ 
    contains six terms, each of which corresponds to a disjoint cycle covers by \Cref{lem:nonzeroterms-subgraphs}. These are shown in \Cref{fig:ex:cycles-permutation-cover-principal-covers}. For example, the term $a_{14}a_{41} a_{22} a_{33}$ corresponds to the cycle cover $(14)(2)(3)$, shown in \hyperref[fig:ex:cycles-permutation-cover-principal-covers]{(ii)}. Since $\mbc A$ has negative diagonal, $T_\mrm{id} = (-1)^4 \prod_i a_{ii}$ is a term in $\det(\mbc A$), corresponding to self-loops on all nodes (\hyperref[fig:ex:cycles-permutation-cover-principal-covers]{Fig.~4(b, i)}). 

    Let node $1$ be the input and $4$ be the output. The terms in the non-principal first minor $\mbc A[\hat 1, \hat 4] = a_{22} a_{33} a_{41} + a_{21} a_{33} a_{42} + a_{21} a_{32} a_{43}$  correspond to the disjoint I/O covers in \Cref{fig:ex:cycles-permutation-cover-nonprincipal-covers}. 
\end{ex}

We next turn our attention to the sign of each non-zero term. 

\begin{lem} 
\label{lem:nonzeroterms-sign}
    Let $\mbc A$ be an $m \times m$ signed symbolic matrix with J-graph $G$. 
    \begin{enumerate}[label={(\alph*)}]
        \item \label{lem:nonzeroterms-sign-a}
        Let $\gamma \subseteq [m]$, and $T_\sigma$ be a non-zero term in $\mbc A[\gamma]$. Let $E_\sigma$ be the corresponding disjoint cycle cover described in \Cref{rmk:cycles-permutation-cover}. Then $\sgn(T_\sigma) = (-1)^{|\gamma|+p}$, where $p$ is the number of positive cycles in $E_\sigma$. 
        \item \label{lem:nonzeroterms-sign-b}
        With $i$ as input and $j \neq i$ as output, let $T_\sigma$ be a non-zero term in $\mbc A[\hat i, \hat j]$, corresponding to the disjoint I/O cover $E_\sigma$. Then $\sgn(T_\sigma) = (-1)^{m+i+j-1} (-1)^{p}\sgn(P)$, where $P$ is the I/O path, and $p$ is the number of positive cycles in $E_\sigma$. 
        
    \end{enumerate}
\end{lem}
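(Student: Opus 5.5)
For part \ref{lem:nonzeroterms-sign-a}, I would work directly from the Leibniz expansion as in the proof of \Cref{lem:nonzeroterms-subgraphs}. Write $T_\sigma = \sgn(\sigma)\prod_{k\in\gamma}[\mbc A]_{\sigma(k),k}$. Let $\sigma=\tau_1\cdots\tau_q$ be its disjoint cycle decomposition on $\gamma$, with $\tau_r$ of length $\ell_r$, so that $\sum_r \ell_r = |\gamma|$. Each $\tau_r$ corresponds to a cycle $C_r$ of $E_\sigma$, and the product of the matrix entries along $\tau_r$ has sign $\sgn(C_r)$.

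The key steps are then:
\begin{itemize}
\item Since $\sgn(\sigma)=\prod_r(-1)^{\ell_r-1}$, each cycle contributes $(-1)^{\ell_r-1}\sgn(C_r)$. This factor equals $(-1)^{\ell_r}$ if $C_r$ is negative and $(-1)^{\ell_r+1}$ if $C_r$ is positive.
\item Self-loops are covered by the same rule: for $\ell_r=1$ the factor is simply the sign of the diagonal entry.
\item Multiplying over $r$ gives $(-1)^{\sum_r\ell_r}(-1)^p=(-1)^{|\gamma|+p}$.
\end{itemize}

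For part \ref{lem:nonzeroterms-sign-b}, I would reduce to part \ref{lem:nonzeroterms-sign-a} using the construction in \Cref{lem:nonzeroterms-subgraphs}\ref{lem:nonzeroterms-subgraphs-b}. Take $\gamma=\hat i$, $\delta=\hat j$, and $\Phi(j)=i$. Form $\mbc B$ by replacing column $j$ of $\mbc A$ with $\ehat_i$, which adds the positive edge $j\posedge i$. Expanding $\det\mbc B$ along column $j$ gives $\det\mbc B=(-1)^{i+j}\mbc A[\hat i,\hat j]$, and this expansion is a term-by-term bijection. A term $T_\sigma$ of $\mbc A[\hat i,\hat j]$ therefore corresponds to the term $(-1)^{i+j}T_\sigma$ of $\det\mbc B$. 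Its disjoint cycle cover in the J-graph of $\mbc B$ is $E_\sigma$ with the path $P$ closed into the cycle $C=P\cup\{j\posedge i\}$, and $\sgn(C)=\sgn(P)$.

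Applying part \ref{lem:nonzeroterms-sign-a} to $\mbc B$ with $|\gamma|=m$ gives $\sgn\big((-1)^{i+j}T_\sigma\big)=(-1)^{m+p'}$. Here $p'=p+1$ if $P$ is positive and $p'=p$ otherwise. Equivalently, $(-1)^{p'}=(-1)^p\cdot(-\sgn(P))$. Hence $\sgn(T_\sigma)=(-1)^{m+i+j}(-1)^{p}(-\sgn(P))=(-1)^{m+i+j-1}(-1)^p\sgn(P)$, which is the claim.

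The main obstacle is the sign bookkeeping in the reduction. First, I must confirm that expanding along column $j$, whose only nonzero entry is $1$ in row $i$, gives exactly the cofactor sign $(-1)^{i+j}$ and no extra permutation sign from reindexing. Second, I must confirm that the added edge counts as positive in $C$. I would check both on \Cref{ex:cycles-permutation-cover}. There $m=4$, $i=1$, $j=4$, all paths are positive and $p=0$, so the formula predicts positive terms. This matches $\mbc A[\hat 1,\hat 4]$, all of whose terms are positive.
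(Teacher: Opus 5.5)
Your proposal is correct and follows essentially the same route as the paper. Part (a) groups the Leibniz term by the cycle decomposition of $\sigma$. Part (b) closes the I/O path $P$ into the cycle $P \cup \{j \posedge i\}$ of the auxiliary matrix from \Cref{lem:nonzeroterms-subgraphs}\ref{lem:nonzeroterms-subgraphs-b}, then combines part (a) with the cofactor sign $(-1)^{i+j}$. One cosmetic difference: your $\mbc B$ keeps the other entries of row $i$, which the paper zeros out. This is harmless, since every non-zero term of $\det \mbc B$ must use the entry in row $i$, column $j$, so those entries and the corresponding edges into $i$ never appear in any term or cycle cover.
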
 
\begin{proof}
\begin{enumerate}[label={(\alph*)}, leftmargin=0pt, itemindent=21pt]
\item  
    Without loss of generality, let $\gamma = [m]$. For any $\sigma \in S_m$ such that $T_\sigma \neq 0$ in $\det(\mbc A)$, let $\sigma = \tau_1 \cdots \tau_q$ be its disjoint cycle decomposition, and $C_j$ be the cycle in $E_\sigma$ associated to $\tau_j$ (see \Cref{rmk:cycles-permutation-cover}). Note that some cycles have length $1$. 
    Sorting the Leibniz formula by cycles, the sign of this term is 
    \eq{
        \sgn(T_\sigma)  
        &= \sgn(\sigma) \prod_{i=1}^m \sgn\left( [\mbc A]_{\sigma(i), i}\right) 
        = \prod_{j=1}^q \sgn(\tau_j) \sgn(C_j)
        \\&= \prod_{j=1}^q (-1)^{|\tau_j|+1} \sgn(C_j)
        = (-1)^{m+q} \prod_{j=1}^q \sgn(C_j). 
    }
    Since $\sgn(C_j) = -1$ if and only if $C_j$ is a negative cycle, the sign of $T_\sigma$ is determined by the size of the matrix and the number of positive cycles in $E_\sigma$. 

\item 
    By \Cref{lem:nonzeroterms-subgraphs}\ref{lem:nonzeroterms-subgraphs-c} and \Cref{rmk:cycles-permutation-cover}, the I/O path $P$ in $E_\sigma$ comes from a cycle $C$ in a disjoint cycle cover of $G[\hat{i}, \hat{j}] \cup \{ j \posedge i\}$. Since $\sgn(P) = \sgn(C)$, the desired result follows from \ref{lem:nonzeroterms-sign-a} along with a factor of $(-1)^{i+j}$ from \Cref{lem:nonzeroterms-subgraphs}\ref{lem:nonzeroterms-subgraphs-b}. 
    \qedhere 
\end{enumerate} 
\end{proof}

For the purpose of looking for mixed signs in $\mbc A[\hat i, \hat j]$, note that in \ref{lem:nonzeroterms-sign-b}, $m,i,j$ are fixed. Therefore, we only need to focus on  $(-1)^p \sgn(P)$, the number of PFBLs, and the sign of the I/O path.

When the diagonal of $\mbc A$ is strictly negative, any vertex-disjoint collection of cycles and I/O paths can be completed into a vertex-disjoint graph cover, by including self-loops at the remaining uncovered nodes. This yields a simple characterization of when a minor contains mixed signs.   

The following result can be found in existing literature. \Cref{cor:symbolic-minor-neg_diag}\ref{cor:symbolic-minor-neg_diag-principal} can be found in \cite{BassettMaybeeQuirk1968} and \cite[Theorem~3.2.1]{BrualdiShader1995}; \ref{cor:symbolic-minor-neg_diag-nonprincipal-noIOpath} in \cite[Theorem~3.2.5(ii)]{BrualdiShader1995}, and \ref{cor:symbolic-minor-neg_diag-nonprincipal} in \cite[Corollary~3.2.3]{BrualdiShader1995}.

\begin{cor}
\label{cor:symbolic-minor-neg_diag}
    Let $\mbc A$ be an $m \times m$ signed symbolic matrix with negative diagonal and J-graph $G$. 
    \begin{enumerate}[label={(\alph*)}]
        \item\label{cor:symbolic-minor-neg_diag-principal} 
            For any $\gamma \subseteq [m]$, the polynomial $\mbc A[\gamma]$ has mixed signs if and only if $G[\gamma]$ contains a PFBL.
        \item\label{cor:symbolic-minor-neg_diag-nonprincipal-noIOpath} 
            Let $i$ be the input, and $j \neq i$ be the output. The non-principal first minor $\mbc A[\hat i, \hat j]$ is identically zero if and only if $G$ has no I/O path.
        \item\label{cor:symbolic-minor-neg_diag-nonprincipal} 
            Let $i$ be the input and $j \neq i$ be the output. Then $\mbc A[\hat i, \hat j]$ has mixed signs if and only if $G$ contains either an IFFL from $i$ to $j$ or a PFBL that is vertex-disjoint from an I/O path.
    \end{enumerate}
\end{cor}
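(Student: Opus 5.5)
The plan is to derive all three parts from \Cref{lem:nonzeroterms-subgraphs} and \Cref{lem:nonzeroterms-sign}. The one new ingredient is the observation made just before the statement. Because the diagonal is negative, every node carries a negative self-loop. So any vertex-disjoint family of cycles, together possibly with an I/O path, can be completed to a disjoint cycle cover or disjoint I/O cover by adding self-loops on the uncovered nodes. Adding these self-loops does not change the number $p$ of positive cycles, because self-loops are negative. It also does not change the path sign.

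For \ref{cor:symbolic-minor-neg_diag-principal}, the identity permutation gives the all-self-loop cover of $G[\gamma]$. By \Cref{lem:nonzeroterms-sign}\ref{lem:nonzeroterms-sign-a} this term has sign $(-1)^{|\gamma|}$, since $p=0$. If $G[\gamma]$ contains a PFBL $C$, I complete $\{C\}$ with self-loops. The resulting cover has $p=1$ and hence sign $(-1)^{|\gamma|+1}$, which gives mixed signs. Conversely, if $G[\gamma]$ has no PFBL, every cover has $p=0$, so all terms share one sign.

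For \ref{cor:symbolic-minor-neg_diag-nonprincipal-noIOpath}, I use \Cref{lem:nonzeroterms-subgraphs}\ref{lem:nonzeroterms-subgraphs-c}. Every disjoint I/O cover contains an I/O path, so without such a path there are no terms. Conversely, any I/O path completed by self-loops is a disjoint I/O cover and gives a nonzero term. Since the entries are algebraically independent, distinct permutations give distinct monomials, so there is no cancellation and the minor is not identically zero.

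For \ref{cor:symbolic-minor-neg_diag-nonprincipal}, by \Cref{lem:nonzeroterms-sign}\ref{lem:nonzeroterms-sign-b} the sign of a term is $\pm(-1)^p\sgn(P)$, with the prefactor fixed.

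For sufficiency, suppose first there is an IFFL $P_1,P_2$. Completing each path separately by self-loops gives two I/O covers with $p=0$ and opposite path signs. One subtlety is that the feedforward loop is defined via paths, and paths are simple, so each $P_k$ is a genuine I/O path and the completion is valid. Next, suppose there is a PFBL $C$ vertex-disjoint from an I/O path $P$. Then $P$ plus self-loops and $P\cup C$ plus self-loops give terms with $p=0$ and $p=1$ respectively, with the same $P$, so their signs are opposite.

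Necessity is the step I expect to need the most care. Suppose there are two terms of opposite sign, with covers $E_1\ni P_1$ and $E_2\ni P_2$. If $\sgn(P_1)\ne\sgn(P_2)$, then $P_1\neq P_2$ and they form an IFFL. Otherwise $(-1)^{p_1}\ne(-1)^{p_2}$, so some $p_k\ge1$. That cover $E_k$ then contains a positive cycle disjoint from its own I/O path $P_k$. Such a cycle has length $\ge2$, since self-loops are negative, so it is a PFBL vertex-disjoint from an I/O path. The main thing to check is that no further cases arise, for instance two covers sharing the same path but with different cycle sets; these are handled by the second branch.
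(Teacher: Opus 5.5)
Your proposal is correct and follows the paper's proof essentially step for step. The common route is to complete cycles and paths with negative self-loops, which gives the identity term of sign $(-1)^{|\gamma|}$ and the paired covers with $p=0$ and $p=1$; to use \Cref{lem:nonzeroterms-subgraphs}\ref{lem:nonzeroterms-subgraphs-c} for (b); and, for the converse of (c), to split on whether the two I/O paths have different signs (an IFFL) or the two values of $p$ have different parity (a positive cycle, necessarily of length at least two, disjoint from its path). Your remarks on non-cancellation of distinct monomials and on IFFL paths being simple are details the paper leaves implicit.
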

\begin{proof}
\begin{enumerate}[label={(\alph*)}, leftmargin=0pt, itemindent=21pt]
\item 
    Because the diagonal of $\mbc A$ is negative, $T_{\mrm{id}}$ is a term in $\mbc A[\gamma]$, corresponding to the disjoint cycle cover consisting of negative self-loops on every node. The sign of this term is $(-1)^{|\gamma|}$, so there are mixed signs in $\mbc A[\gamma]$ if and only if there exists $\sigma \in S_{\gamma}$ for which $\sgn(T_\sigma)= (-1)^{|\gamma|+1}$. By \Cref{lem:nonzeroterms-sign}\ref{lem:nonzeroterms-sign-a}, this is possible if and only if $G[\gamma]$ has a PFBL $C$, since $C$ with self-loops on the remaining nodes is a disjoint cycle cover with one PFBL. 

\item 
    In a J-graph with self-loops on every node, any I/O path $P$ immediately gives rise to a disjoint I/O cover $E_{P}$, consisting of $P$ and self-loops on nodes not traversed by $P$. Then \ref{cor:symbolic-minor-neg_diag-nonprincipal-noIOpath} follows from \Cref{lem:nonzeroterms-subgraphs}\ref{lem:nonzeroterms-subgraphs-c}. 

\item 
    For an I/O path $P$, let $E_{P}$ be the disjoint I/O cover made of $P$ and self-loops, and let $T_{P}$ be the corresponding non-zero term in $(-1)^{m+i+j}\mbc A[\hat i, \hat j]$. Because $E_{P}$ has no positive cycles, we have $\sgn(T_{P}) = -\sgn(P)$. 
    Suppose $G$ has an IFFL made of I/O paths $P$ and $Q$; then $\sgn(T_{Q}) = -\sgn(T_{P})$.  
    If instead of an IFFL, $G$ has a PFBL $C$ that is vertex-disjoint from $P$, then the disjoint I/O cover consisting of $P$, $C$, and self-loops on the remaining nodes form a disjoint I/O cover. The sign of the corresponding term $T_{P,C}$ in $(-1)^{m+i+j}\mbc A[\hat i, \hat j]$ is $\mrm{sgn}(T_{P,C}) =  +\sgn(P)$. Thus, $T_P$ and $T_{P,C}$ are two terms with opposite signs in $\mbc A[\hat i, \hat j]$. 

    Conversely, suppose the expansion of $(-1)^{m+i+j}\mbc A[\hat i, \hat j]$ has mixed signs, say $\sgn(T_\sigma) = - \sgn(T_\eta)$. Let $E_\sigma$ be the corresponding disjoint I/O cover, where $E_\sigma$ contains the I/O path $P$ and exactly $p$ PFBLs. Similarly, let $E_\eta$ be the I/O cover with the I/O path $Q$ and $q$ PFBLs. 
    If $\sgn(P) \neq \sgn(Q)$, then $P$ and $Q$ form an IFFL. If $\sgn(P) = \sgn(Q)$, by \Cref{lem:nonzeroterms-sign}\ref{lem:nonzeroterms-sign-b}, $p-q$ is odd, so there must be a PFBL in either $E_\sigma$ or $E_\eta$, disjoint from either $P$ or $Q$ respectively. 
    \qedhere 
\end{enumerate}
\end{proof}

\begin{ex}
\label{ex:cycles-permutation-cover--continuation}
    We saw in \Cref{ex:cycles-permutation-cover} that $\det(\mbc A)$ has six terms, only one of which is positive. This is consistent with the six disjoint cycle covers in \Cref{fig:ex:cycles-permutation-cover-principal-covers}, five of which contain exactly one PFBL. 
    
    Similarly, the non-principal first minor $\mbc A[\hat 1, \hat 4]$ has three positive terms, consistent with the fact that all I/O paths are positive (no IFFL for the J-graph), and there is no PFBL vertex-disjoint from an I/O path (\Cref{fig:ex:cycles-permutation-cover-nonprincipal-covers}). 
\end{ex}

\section{Main result}
\label{sec:main-results}

In this section, we characterize when a response curve $x^*_n(u)$ is non-monotonic. The local sensitivity $d_u x^*_n(u)$ can be expressed in terms of a non-principal first minor of a matrix, whose J-graph is essentially that of the system. This allows us to relate the sensitivity to disjoint I/O covers of the J-graph.

We begin by introducing classifications of response curves based on their shapes. Then, using Cramer's rule, we derive an algebraic expression for sensitivity, before applying \Cref{cor:symbolic-minor-neg_diag} to obtain graph motifs associated to each type of responses. Finally, assuming a stable steady state branch, we identify one additional motif associated to non-monotonicity.

\begin{defn}
\label{def:response-types}
    Consider an ODE system $\dot{\xx} = \vv f(\xx; u)$ with control parameter $u$ and output variable $x_n$. Let $\xx^*$ be a differentiable branch of steady states defined on the interval $\uInt_0$. We say the branch $\xx^*$ exhibits 
    \begin{enumerate}[label={(\alph*)}]
    \item 
        \df{perfect adaptation} if  $x_n^*$ is constant on $\uInt_0$;  
    \item 
        \df{quasi-adaptation} if there exists $u_0 \in \uInt_0$ such that $d_u{x_n^*(u_0)} = 0$; 
    \item\label{def:biphasic} 
        \df{biphasic response} if there exist $u_1$, $u_2 \in \uInt_0$ such that $[d_u{x_n^*(u_1)}][d_u{x_n^*(u_2)}] < 0$; 
    \item 
        \df{stable biphasic response} if it is biphasic and $\xx^*(u)$ is locally attracting for every $u \in \uInt_0$. 
    \end{enumerate}
    We say the system exhibits perfect adaptation (quasi-adaptation, etc.) if there exists a steady state branch with the corresponding property. 
\end{defn} 

A biphasic response, also known as \emph{hormesis}, or \emph{bell-shaped}, \emph{U-shaped} or \emph{J-shaped} response~\cite{Calabrese2013,LeverLimKrugerNguyenEtAl2016}, protects the system from overstimulation or overactivation. Although a biphasic response technically involves exactly two phases, \Cref{def:response-types}\ref{def:biphasic} defines a non-monotonic (or multiphasic) response, with no implication on the number of extrema. We elect to use the more standard terminology from pharmacology.

Perfect adaptation, or \emph{zero-sensitivity}, ensures that the system's output is resilient to perturbations. Its relation to the underlying network structure has been investigated in, e.g., \cite{AraujoLiotta2018, OkadaMochizuki2016, GuptaKhammash2022, Khammash2021, Feliu2019, Sontag2014}.  
Quasi-adaptation, also called \emph{infinitesimal homeostasis}~\cite{GolubitskyStewart2017, WangHuangAntoneliGolubitsky2021,ReedBestGolubitskyStewartNijhout2017}, is related to the notion of \emph{homeostasis}, an approximately constant response over an interval. 

Stable biphasic response clearly implies biphasic response, which, as we argue in \Cref{rmk:biphasicImpliesQA} below, implies quasi-adaptation under assumption~\eqref{eq:assumption-A0}. However, as the following examples illustrate, these are distinct phenomena (\Cref{fig:previous-examples}).

\begin{figure}[h]
\centering 
\begin{tabularx}{\linewidth}{CCC}
    \begin{subfigure}{0.9\linewidth}
        \includegraphics{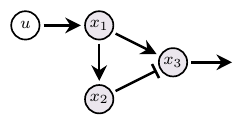}
        
        \includegraphics[height=1.5in]{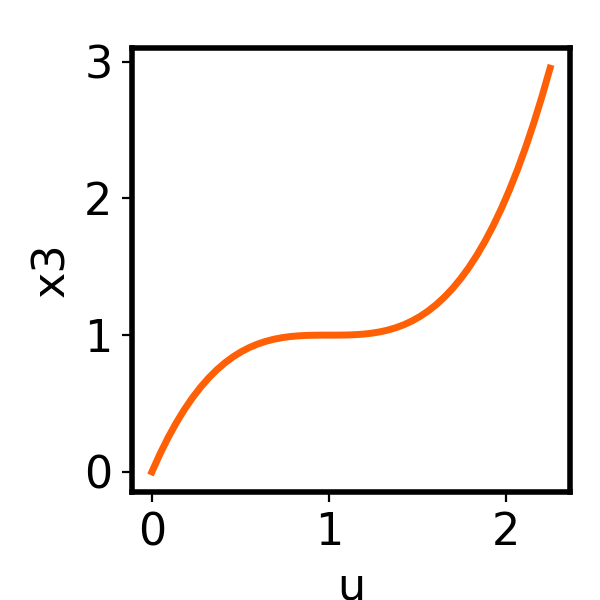} 
        \caption{}
        \label{fig:quasiadaptive-Jgraph}
        \label{fig:quasiadaptive-ss_curves}
    \end{subfigure}
    & 
    \begin{subfigure}{0.9\linewidth}
        \includegraphics{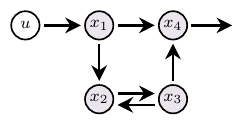}

        \includegraphics[height=1.5in]{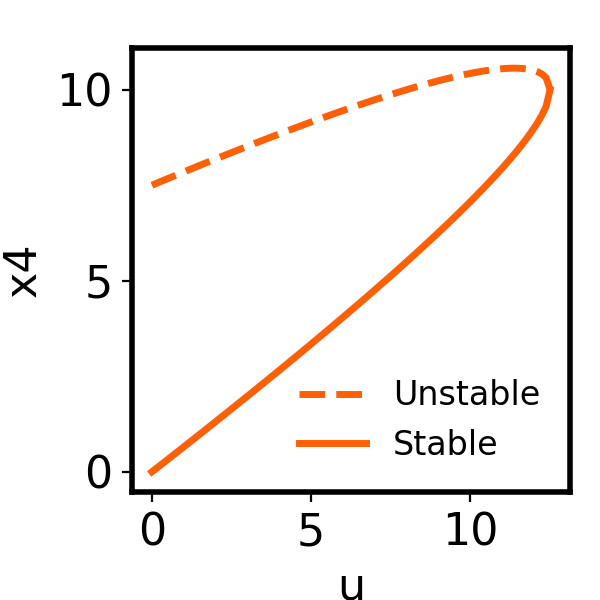} 
        \caption{}
        \label{fig:biphasic_unstable-Jgraph}
        \label{fig:biphasic_unstable-ss_curves}
    \end{subfigure}
    & 
    \begin{subfigure}{0.9\linewidth}
        \includegraphics{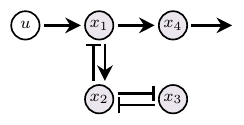}

        \includegraphics[height=1.5in]{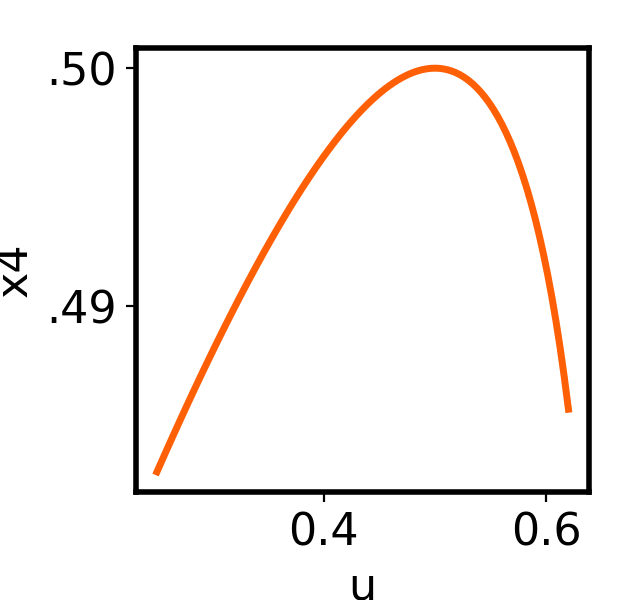} 
        \caption{}
        \label{fig:biorxiv2-Jgraph}
        \label{fig:biorxiv2-ss_curve}
    \end{subfigure}
\end{tabularx}
\caption{({\sc Top}) J-graphs of systems with self-loops omitted for simplicity and ({\sc Bottom}) their corresponding response curves. (a) Quasi-adaptation (\Cref{ex:quasiadaptive}), (b) biphasic response (\Cref{ex:biphasic_unstable}), and  (c) stable biphasic response (\Cref{ex:biorxiv2}). 
}
\label{fig:previous-examples}
\end{figure}

\begin{ex}
\label{ex:quasiadaptive}
    The following system, with $x_3$ as output, is quasi-adaptive, but not biphasic (\Cref{fig:quasiadaptive-ss_curves}):
    \eq{
        \dot x_1 &= u - x_1  \\ 
        \dot x_2 &= x_1^2 - x_2 \\ 
        \dot x_3 &= x_1^3 + 3x_1 - 3 x_2  -x_3.
    }
    At steady state, 
        $x_1^* = u$, 
        $x_2^* = u^2$, and 
        $x_3^* =  (u-1)^3 + 1$. 
    So $x_3^*(u)$ has a vanishing derivative at $u=1$. 
\end{ex}

\begin{ex}
\label{ex:biphasic_unstable}
    This example is biphasic, thus quasi-adaptive, but not stable biphasic (\Cref{fig:biphasic_unstable-ss_curves}). 
    Consider the system on $\rrpp^4$ with $u \in (0,12.5)$ and output $x_4$: 
    \begin{align*}
        \dot x_1 &= u - 2 x_1      \\
        \dot x_2 &=   x_1 -2 x_2^2 + x_3   + {x_3^2}/{4}   \\
        \dot x_3 &=   x_2^2  - {7x_3}/{4}     \\
        \dot x_4 &=  x_1    + {3x_3}/{4}   - x_4. 
    \end{align*} 
    This system has two branches of steady states: a stable monotonic branch and an unstable non-monotonic branch. A full analysis can be found in \cite[Example~III.9, Appendix~C]{YuSontag-biphasic1}. 
\end{ex}

\begin{ex}
\label{ex:biorxiv2}
    This example is stable biphasic (\Cref{fig:biorxiv2-ss_curve}). 
    Consider the system on $\rrpp^4$ with $u \in (0.4,0.6)$ and output $x_4$:
    \begin{align*}
        \dot x_1 &= h(x_2) - 2x_1 + u \\ 
        \dot x_2 &=  h(x_3) + x_1 - x_2  \\ 
        \dot x_3 &= f(x_2) - x_3\\ 
        \dot x_4 &= x_1 - x_4 , 
\intertext{where $h(s) = \frac{1}{2} e^{2(1-\sigma(s))}$, $f(s) = e^{1-\sigma(s)}$, and} 
    \sigma(s) &= \left\{\begin{array}{cl}%
    {0.8}&{0\leq s<0.8} \\ %
    {s  }&{0.8\leq s\leq 1.2} \\%
    {1.2}&{1.2\leq s}.
    \end{array}\right.  \nonumber 
    \end{align*}
In the region where $\sigma$ is smooth, the response curve is not monotonic. A full analysis can be found in \cite{Sontag2020_biphasic} and \cite[Example~III.8, Appendix~B]{YuSontag-biphasic1}. 
\end{ex}

\subsection{A formula for local sensitivity}
\label{sec:main-algebra}

Cramer's rule grants us access to the sensitivity. To streamline the proof, we augment the Jacobian matrix. Then the sign of $d_u{x_n^*(u)}$ is determined by a non-principal first minor of the augmented matrix. 

\begin{defn}
\label{def:augmented-Jacobian} 
    Consider an ODE system $\dot{\xx} = \vv f(\xx; u)$ with $C^1$ right-hand side, control parameter $u$, and a steady state branch $\xx^*$ on the interval $\uInt_0$. Suppose along $\xx^*$, the Jacobian matrix $\mm J = D_{\xx}\vv f$ is non-singular and its entries are signed. 
    The \df{augmented Jacobian matrix} is the $(n+1)\times(n+1)$ matrix 
    \eqn{\label{eq:Cramer-Jacobian-A_mtx} 
        \mm A = \begin{pNiceArray}{c|ccc}
            -1 &  & \vv 0^\top & \\
            \hline 
            \Block{3-1}<>{\deriv{\vv f}{u}} & \Block{3-3}<>{\mm J} & & \\
             & && \\
             & \hphantom{w} &&\hphantom{w}
        \end{pNiceArray}.
    } 
\end{defn}

Note that each entry of $\mm A$ is signed along $\xx^*$, and $\mm A$ is singular precisely when $\mm J$ is singular. The J-graph of $\mm A$ coincides with that of the system, except for the self-loop on $u$. 
The loop $u\negedge u$ and the entry $[\mm A]_{11} = -1$ can be thought of as a technicality, so we can apply \Cref{cor:symbolic-minor-neg_diag}. An alternative perspective is to recast the system into one whose Jacobian matrix is precisely $\mm A$. For example, one such augmented system is $\dot{x}_0 = \tilde{u} - u_0$, and $\dot{\xx} = \vv f(\xx; x_0)$. 

Regardless of interpretation, the entry $[\mm A]_{11}$ and the self-loop at $u$ are inconsequential for the purpose of calculating sensitivity, since we only need a non-principal minor of $\mm A$.

\begin{lem}
\label{lem:Cramer}
    Consider the ODE system in \Cref{def:augmented-Jacobian}, with a steady state branch $\xx^*$, Jacobian matrix $\mm J$, and augmented Jacobian matrix $\mm A$. For any $1 \leq j \leq n$, the local sensitivity is given by 
    \eqn{\label{eq:Cramer} 
        d_u{x_j^*(u)} 
        = \frac{(-1)^{j} \mm A^*[\hat 1, \widehat{j+1}]}{\det (\mm J^*)}, 
    }
    where asterisks denote evaluation along $(\xx^*(u); u)$. 
\end{lem}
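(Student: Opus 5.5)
Differentiate the steady state identity and solve the resulting linear system with Cramer's rule. Since $\vv f(\xx^*(u); u) = \vv 0$ on $\uInt_0$ and $\vv f$ is $C^1$, the chain rule gives
\eq{
    \mm J^* \, d_u \xx^*(u) = -\deriv{\vv f}{u}^*.
}
$\mm J^*$ is non-singular by \eqref{eq:assumption-A0}, so Cramer's rule yields $d_u x_j^*(u) = \det(\mm M_j)/\det(\mm J^*)$. Here $\mm M_j$ is $\mm J^*$ with its $j$th column replaced by $-\deriv{\vv f}{u}^*$.

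The remaining task is to identify $\det(\mm M_j)$ with $(-1)^j\mm A^*[\hat 1, \widehat{j+1}]$. Deleting row $1$ of $\mm A^*$ removes the row $(-1, \vv 0^\top)$. What remains is the $n\times(n+1)$ block $\bigl(\deriv{\vv f}{u}^* \mid \mm J^*\bigr)$. Deleting column $j+1$ then removes the $j$th column of $\mm J^*$. So $\mm A^*(\hat 1, \widehat{j+1})$ is the matrix whose first column is $\deriv{\vv f}{u}^*$, followed by the columns $1,\dots,j-1,j+1,\dots,n$ of $\mm J^*$ in order.

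To reach $\mm M_j$ from this matrix, move the first column to position $j$ by $j-1$ adjacent transpositions, which contributes a factor $(-1)^{j-1}$. Then negate that column, which contributes another factor of $-1$. Hence $\det(\mm M_j) = (-1)^{j-1}(-1)\,\mm A^*[\hat 1, \widehat{j+1}] = (-1)^j \mm A^*[\hat 1,\widehat{j+1}]$, which gives \eqref{eq:Cramer}.

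The only delicate point is the sign bookkeeping: the column shift and the negation together must give exactly $(-1)^j$. A quick check with $n=1$ confirms this. There $d_u x_1^* = -\deriv{f_1}{u}/\deriv{f_1}{1}$, and $\mm A[\hat 1,\hat 2] = \deriv{f_1}{u}$, so the formula gives $(-1)^1\deriv{f_1}{u}/\deriv{f_1}{1}$, as required. Note also that the entry $[\mm A]_{11}=-1$ never enters the computation, because row $1$ is deleted.
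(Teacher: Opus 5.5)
Your proof is correct and follows the same route as the paper: implicit differentiation of $\vv f(\xx^*(u);u)=\vv 0$, Cramer's rule, and then identifying the Cramer determinant with $(-1)^{j-1}\mm A^*[\hat 1,\widehat{j+1}]$ by moving the column $\deriv{\vv f}{u}^*$ into position $j$. Your explicit count of $j-1$ adjacent transpositions, together with the extra sign from the negated right-hand side, spells out the sign bookkeeping that the paper compresses into the phrase ``after swapping columns.''
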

\begin{proof}
    Implicitly differentiate $\vv 0 = \vv f(\xx^*(u); u)$ to obtain the matrix equation $\vv 0 = \mm J^* \,( \deriv{\xx^*}{u}) + \vv b$ where $\vv b = \deriv{\vv f}{u}^*$. Let $\mm J^*(j \leftarrow \vv b)$ be the matrix obtained by replacing the $j$th column of $\mm J^*$ by $\deriv{\vv f}{u}^*$. By Cramer's rule, 
    \eq{ 
        d_ux_j^*(u) &= \frac{-\det(\mm J^*(j \leftarrow \vv b))}{\det(\mm J^*)},   
    } 
    which is equivalent to \eqref{eq:Cramer} after swapping columns of $\mm J^*(j \leftarrow \vv b)$. 
\end{proof}

\begin{rmk}
\label{rmk:biphasicImpliesQA} 
    Our earlier claim that biphasic response implies quasi-adaptation is a consequence of \Cref{lem:Cramer}. Under assumption~\eqref{eq:assumption-A0}, the Jacobian matrix is never singular along $\xx^*(u)$, so $\det(\mm J^*)$ is constant in sign. 
    Thus, for $x^*_j(u)$ to be non-monotonic, it must have a critical point $u_0$, which must be a root of $\mm A^*[\hat 1, \widehat{n+1}]$ as a function of $u$. 
\end{rmk}

\begin{rmk}
\label{rmk:old_vs_new_perspective}
    When the control parameter only directly influences one state variable, say $u$ only appear in $\dot{x}_1$ as in \cite{WangHuangAntoneliGolubitsky2021, YuSontag-biphasic1, AraujoLiotta2018},  \Cref{eq:Cramer} simplifies to 
    \eqn{\label{eq:Cramer-single-input} 
        d_u{x^*_j}(u) =  \frac{(-1)^j \deriv{f_1}{u}}{\det (\mm J^*)} \mm J^*[\hat 1, \hat j].
    }
    As explained in \cite{AraujoLiotta2018, YuSontag-biphasic1}, there appears to be two cases: when $j = 1$, the sensitivity is dictated by a principal minor of $\mm J$, whereas for $j\neq 1$, it is a non-principal minor. In this paper, the sensitivity is always given by a non-principal first minor of the augmented matrix $\mm A$. This streamlines the proofs. We may assume, without loss of generality now, that the output variable is $x_n$. 
\end{rmk}

\begin{rmk}
\label{rmk:crit_pt-mixed_signs} 
    Although \Cref{eq:Cramer} requires the steady state branch $\xx^*$, we work with a signed symbolic matrix $\mbc A$ consistent with the sign pattern of $\mm A$. A necessary condition for $\mm A^*[\hat 1, \widehat{n+1}] = 0$ to have a solution is for $\mbc A[\hat 1, \widehat{n+1}]$ to have a factor that can vanish. 

    These factors can be further categorized. The first type requires one or more $a_{ij}$ to vanish (as a function of $u$ after the appropriate substitution). For example, if $a_{22}$ divides $\mbc A[\hat 1, \widehat{n+1}]$, then a root exists when $a_{22}$ vanishes; alternatively, a root of $a_{22}a_{33}+a_{23}a_{32}$ can occur if say $a_{22}$ and $a_{23}$ simultaneously vanish. The second type arises from a factor with mixed signs. For example, if $a_{22}a_{33}-a_{23}a_{32}$ is a factor, then a root exists when $a_{22}a_{33} = a_{23}a_{32}$. Given the assumption of $a_{ij} \geq 0$, biphasic response necessarily requires a factor with mixed signs.   
\end{rmk}

\begin{rmk}
\label{rmk:trimgraph}
    We note that the entire matrix $\mm A$ may not be needed in \eqref{eq:Cramer}, since there may be cancellations between the numerator and denominator in \eqref{eq:Cramer}. In \Cref{sec:app-trimgraph}, we outline which entries of $\mm A$ may be removed when considering whether $x^*_j$ is monotonic. 
\end{rmk}

\subsection{Graph-theoretic condition for sign-sensitivity}
\label{sec:main-graph}

Next, we relate sensitivity to motifs in the J-graph. The first step is algebraic in nature; \Cref{prop:main-alg} is based on the symbolic minor $\mbc A[\hat 1, \widehat{n+1}]$. 
The second step involves transient non-monotonicity (\Cref{prop:main-stable_biphasic}).   
Combining the algebraic and dynamical aspects, \Cref{cor:main} states that either an IFFL or a combination of a PFBL and an NFBL is necessary for stable biphasic response.

\begin{prop}
\label{prop:main-alg}
    Consider the system \eqref{eq:syst_main} under assumption~\eqref{eq:assumption-A0} with output variable $x_n$, and J-graph $G$. 
    \begin{enumerate}[label={(\alph*)}]
        \item\label{prop:main-alg-perfect}
            If $G$ has no I/O path, the system exhibits perfect adaptation along any steady state branch. 
            
        \item\label{prop:main-alg-QA} 
            Suppose $G$ has an I/O path, and $\deriv{f_j}{i}$ is non-vanishing along the steady state branch $\xx^*$  for all $i \neutedge j \in G$, where $i \in [n]\cup\{u\}$ and $j \in [n]$.  
            If the system exhibits quasi-adaptation, then either $G$ has an IFFL from $u$ to $x_n$, or it has a PFBL that is vertex-disjoint from an I/O path.
            
        \item\label{prop:main-alg-biphasic} 
             If the system exhibits biphasic response, then either $G$ has an IFFL from $u$ to $x_n$, or it has a PFBL that is vertex-disjoint from an I/O path.
    \end{enumerate}
\end{prop}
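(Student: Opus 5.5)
The plan is to reduce everything to the symbolic minor $\mbc A[\hat 1, \widehat{n+1}]$ of a signed symbolic matrix $\mbc A$ that has the sign pattern of the augmented Jacobian $\mm A$, and then apply \Cref{cor:symbolic-minor-neg_diag}. The J-graph of $\mbc A$ is $G$ with a negative self-loop added at $u$. The diagonal of $\mbc A$ is negative, because the diagonal of $\mm J$ is strictly non-positive and $[\mm A]_{11} = -1$. In the augmented indexing, node $u$ is index $1$ and $x_n$ is index $n+1$. Hence the I/O paths, IFFLs, and PFBLs vertex-disjoint from I/O paths from $1$ to $n+1$ in this J-graph are exactly those of $G$ from $u$ to $x_n$. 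Throughout, each term of $\mm A^*[\hat 1, \widehat{n+1}]$ is obtained from the corresponding monomial of $\mbc A[\hat 1, \widehat{n+1}]$ by substituting the absolute values of the partial derivatives along $(\xx^*(u);u)$. Each actual term therefore has sign equal to the symbolic sign or is zero.

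For \ref{prop:main-alg-perfect}, if $G$ has no I/O path, then \Cref{cor:symbolic-minor-neg_diag}\ref{cor:symbolic-minor-neg_diag-nonprincipal-noIOpath} says $\mbc A[\hat 1, \widehat{n+1}]$ is identically zero, since it has no monomials at all. Then $\mm A^*[\hat 1, \widehat{n+1}] \equiv 0$ along any branch. Under \eqref{eq:assumption-A0}, $\det \mm J^* \neq 0$, so \Cref{lem:Cramer} gives $d_u x_n^* \equiv 0$ on $\uInt_0$, which is perfect adaptation.

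For \ref{prop:main-alg-QA} and \ref{prop:main-alg-biphasic}, argue by contrapositive. Assume $G$ has neither an IFFL from $u$ to $x_n$ nor a PFBL vertex-disjoint from an I/O path. By \Cref{cor:symbolic-minor-neg_diag}\ref{cor:symbolic-minor-neg_diag-nonprincipal}, all monomials of $\mbc A[\hat 1, \widehat{n+1}]$ share one sign $s$. Hence $s\,\mm A^*[\hat 1, \widehat{n+1}](u) \ge 0$ for all $u$, and $d_u x_n^*$ has the fixed weak sign $s(-1)^n\sgn\det\mm J^*$. By \Cref{rmk:biphasicImpliesQA}, $\det \mm J^*$ has constant sign on $\uInt_0$ by \eqref{eq:assumption-A0} and continuity. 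So $d_u x_n^*(u_1)\, d_u x_n^*(u_2) \ge 0$ for all $u_1,u_2$, and no biphasic response occurs, proving \ref{prop:main-alg-biphasic}. No non-vanishing hypothesis is needed here. For \ref{prop:main-alg-QA}, we must in addition rule out $d_u x_n^*(u_0)=0$. Since $G$ has an I/O path $P$, the disjoint I/O cover $E_P$ (the path $P$ plus self-loops) gives a nonzero monomial. Its evaluation is a product of the edge entries along $P$ and the diagonal entries off $P$, which are all non-vanishing by hypothesis; this includes the diagonal entries, which are self-loop edges of $G$. So this term is strictly nonzero at every $u$. All other terms have the same sign or vanish, so the sum cannot be zero, and $d_u x_n^*$ never vanishes.

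The main subtlety is the passage from the symbolic statement to the evaluated minor: the entries $a_{ij}$ are not algebraically independent along the branch. This must be handled by the termwise sign argument above rather than by invoking \Cref{cor:symbolic-minor-neg_diag} directly. A second check is that the extra self-loop at $u$ never enters a disjoint I/O cover, since row $1$ is deleted. It therefore creates no spurious PFBL, and the motifs of $G$ and of the J-graph of $\mbc A$ agree as claimed.
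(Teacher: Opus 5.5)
Your proposal is correct and follows the paper's proof. You reduce to the symbolic minor $\mbc A[\hat 1,\widehat{n+1}]$ via \Cref{lem:Cramer} and invoke \Cref{cor:symbolic-minor-neg_diag}, parts (b) and (c); your explicit termwise sign comparison is a more careful version of the step the paper delegates to \Cref{rmk:crit_pt-mixed_signs} (each evaluated term has the symbolic sign or vanishes, and under the non-vanishing hypothesis the $E_P$ term is strictly nonzero, so the minor cannot vanish).
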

\begin{proof} \begin{enumerate}[label={(\alph*)}]
\item
    Let $\mbc A$ be the signed symbolic matrix consistent with the augmented Jacobian matrix $\mm A$ from \eqref{eq:Cramer-Jacobian-A_mtx}. The J-graph of \eqref{eq:syst_main} is also that of $\mbc A$, except for the missing self-loop on $u$. 
    By \Cref{cor:symbolic-minor-neg_diag}\ref{cor:symbolic-minor-neg_diag-nonprincipal-noIOpath}, the absence of any I/O path implies $\mbc A[\hat 1, \widehat{n+1}] \equiv 0$, so $\mm A[\hat 1, \widehat{n+1}]$ is identically zero. The result follows from \Cref{lem:Cramer}. 

\item 
    Suppose $G$ has an I/O path, so as a polynomial, $\mbc A[\hat 1, \widehat{n+1}]$ is not identically zero. Because each $\deriv{f_j}{i}$ is non-vanishing, that the relevant minor of the augmented Jacobian matrix along $\xx^*$ can vanish implies $\mbc A[\hat 1, \widehat{n+1}]$ has mixed signs; see \Cref{rmk:crit_pt-mixed_signs}. Thus, \ref{prop:main-alg-QA} follows from \Cref{cor:symbolic-minor-neg_diag}\ref{cor:symbolic-minor-neg_diag-nonprincipal}. 

\item 
    Since biphasic response requires the relevant minor of the augmented Jacobian matrix to take strictly positive and strictly negative values somewhere, this means $\mbc A[\hat 1, \widehat{n+1}]$ must have mixed signs. The claim follows from \Cref{cor:symbolic-minor-neg_diag}\ref{cor:symbolic-minor-neg_diag-nonprincipal}.  \qedhere 
\end{enumerate} 
\end{proof}

\begin{rmk}
\label{rmk:Cramer-at-ss}
    The assumptions for \Cref{prop:main-alg} can be relaxed slightly. Instead of asking $\mm A(\xx;u)$ to be signed on all of $\X \times \uInt_0$, we only need it to be signed along $\xx^*$, since \Cref{lem:Cramer} is a local result. 
\end{rmk}

We give three examples below to demonstrate that the sufficient (respectively necessary) condition in \Cref{prop:main-alg} is not necessary (respectively sufficient). This is mainly due to entries in $\mm J$, and thus in $\mm A$, not being algebraically independent.

\begin{ex}
\label{ex:PA-suff_notNecessary}
    Consider the reaction network
    $\cf{0} \FRxn{\,\,\kk_1\,\,}  \cf{X}_1 \FRxn{\,\, u\,\,} \cf{X}_2 \FRxn{\,\,\kk_2\,\,} \cf{0}$,  
    with rate constants $\kk_1$, $\kk_2$, $u > 0$. 
    Under mass-action kinetics, 
    its dynamics is governed by $\dot{x}_1 = \kk_1 - u x_1$, $\dot{x}_2 = ux_1 - \kk_2 x_2$. 
    The augmented Jacobian matrix is 
    \eq{
        \mm A = \begin{pNiceArray}{r|rc} 
            -1   & 0  & 0  \\
            \hline 
            -x_1 & -u & 0 \\ 
            x_1  & u  & -\kk_2 
        \end{pNiceArray}, 
    }
    so the J-graph is a 3-node IFFL ($u \negedge x_1 \posedge x_2$, $u \posedge x_2$), along with self-loops on the nodes for $x_1$, $x_2$. 
    
    By \Cref{lem:Cramer}, the sensitivity is determined by the non-principal minor $\mm A[\hat 1, \hat 3]$:  
    \eq{
        d_u x_2^* \propto \det 
         \begin{pNiceArray}{r|r}
            -x_1^* &  -u  \\ 
            x_1^* &  u 
        \end{pNiceArray} 
        \equiv 0.
    }
    In other words, $x_2^*$ exhibits perfect adaptation, despite the presence of I/O paths. A direct calculation confirms this:  $x_1^* = \kk_1/u$ and $x_2^* = \kk_1/\kk_2$. Moreover, the steady state is asymptotically stable since $\det(\mm J) > 0$ and $\Tr(\mm J) < 0$. In this case, perfect adaptation is a consequence of algebraic dependence among the entries of $\mm A$. 
\end{ex}

\begin{ex}
\label{ex:IFFL-not_suff}   
    The following system on $\rrpp^3$ with $u > 0$ has a monotonic response despite an IFFL architecture ($u \posedge x_1 \negedge x_2 \posedge x_3$, $x_1 \posedge x_3$): 
    \eq{
        \dot{x}_1 &= u - x_1 \\ 
        \dot{x}_2 &= 1 - x_2 - x_1x_2 \\ 
        \dot{x}_3 &= 1 + x_1 x_2 - x_3 . 
    }
    At steady state, $x_1^* = u$, $x_2^* = (1+u)^{-1}$, and $x_3^* = 1 + u/(1+u)$, which is strictly increasing.  
\end{ex}

\begin{figure}[h!t]
\centering 
\begin{subfigure}[b]{0.35\textwidth}
    \centering\figdraftboxwidth
    \raisebox{0.8cm}{\includegraphics{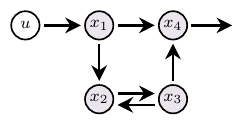}}
    \caption{}
    \label{fig:ex-param_dep_biphasic-J-graph}
\end{subfigure}
\begin{subfigure}[b]{0.35\textwidth}
    \centering\figdraftboxwidth
    \includegraphics[height=1.5in]{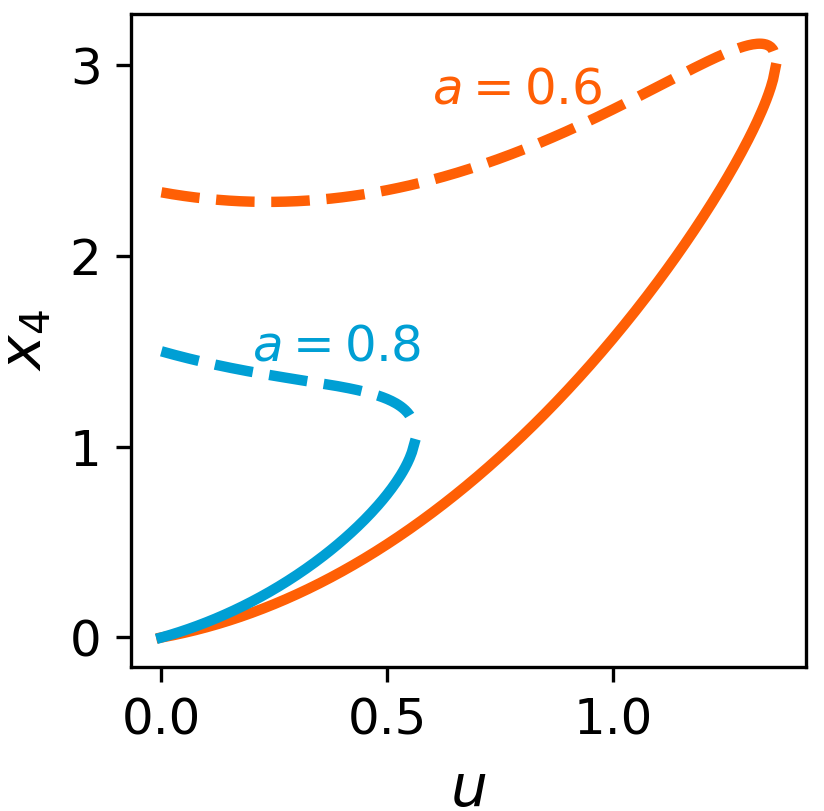} 
    \caption{}
    \label{fig:ex-param_dep_biphasic-response_curve}
\end{subfigure}
\caption{
    (a) The J-graph from \Cref{ex:PFBL-not_suff} with self-loops omitted for simplicity. (b) The positive branches of $x_4^*(u)$ for $a = 0.6$, $0.8$. The stable branch is always monotonic, while the unstable branch is monotonic when $a = 0.8$ and exhibits biphasic response when $a=0.6$. 
}
\label{fig:ex-param_dep_biphasic}
\end{figure}

\begin{ex}
\label{ex:PFBL-not_suff}
    This system, though having a PFBL disjoint from an I/O path (\Cref{fig:ex-param_dep_biphasic-J-graph}), can exhibit biphasic response, depending on the choice of parameter. 
    Let $0 < a < 2$ be a constant to be specified later, and $u \in (0,u_c)$, where $u_c = (a-2)^2/4a^2$. Consider the following system on $\rrpp^4$:
    \eq{
        \dot x_1 &= u - x_1 \\ 
        \dot x_2 &= x_1 - 2 x_2^2  + x_3 + x_3^2 \\ 
        \dot x_3 &= a x_2^2 -x_3  \\ 
        \dot x_4 &= x_1^2 + x_3 -x_4  .
    }
    It has four branches of steady states, two of which are positive
    , but only one of these is asymptotically stable. 
    The unstable positive branch exhibits biphasic response only when $a$ is sufficiently small ($0 < a <  \sqrt{3} - 1 \approx 0.73$). For example, when $a = 0.6$, the unstable response curve $x_4^*$ has a local minimum and a local maximum (\Cref{fig:ex-param_dep_biphasic-response_curve}). However, when $a = 0.8$, both branches are monotonic. A full analysis of this system can be found in \Cref{app:ex-PFBL-not_suff}. 
\end{ex}

Experimentally, only \emph{stable} biphasic response can be observed. For this, a PFBL alone is not enough; stable biphasic response requires an NFBL in addition to a PFBL. 
We apply \cite[Theorem~1]{AscensaoDattaHanciogluSontagEtAl2016}, which relies on monotone systems theory to identify graph-theoretic conditions associated to monotonic transient dynamics.

The graph used in \cite[Theorem~1]{AscensaoDattaHanciogluSontagEtAl2016} does \emph{not} have any self-loops. We define the \df{loop-free J-graph} $G^\circ$ to be the subgraph of $G$ after removing all self-loops. 

\begin{thm}[{\cite[Theorem~1]{AscensaoDattaHanciogluSontagEtAl2016}}]
\label{thm:montone-thm1}
    Consider a system $\dot{\vv x} = \vv F(\vv x(t); u(t))$ with $C^1$ right-hand side, defined on an open set $\X \subseteq \rr^n$, and $u$ is a monotonic positive function with at most a discrete set of discontinuities. Suppose each $\deriv{F_j}{i}$, $\deriv{F_j}{u}$ is either strictly non-positive, strictly non-negative, or identically zero on $\X \times \ran(u)$. Let $G^\circ$ be the loop-free J-graph. Assume every walk in $G^\circ$ from $u$ to $x_n$ has the same sign $\epsilon \in \{-1,1\}$. 
    If the system is initially at steady state, the output $x_n$ will be monotonic in $t$. In particular, if $\epsilon = 1$, monotonically increasing (respectively decreasing) $u$ will trigger monotonic increase (respectively decrease) in $x_n(t)$; if $\epsilon = -1$, it will trigger monotonic decrease (respectively increase) in $x_n(t)$. 
\end{thm}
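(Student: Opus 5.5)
The plan is to prove \Cref{thm:montone-thm1} with monotone systems theory, specifically Kamke's comparison theorem, after the sign-consistency hypothesis on walks from $u$ to $x_n$ has been turned into an order structure.

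\textbf{Step 1: extract an orthant cone from the walk condition.} Let $R$ be the set of nodes of $G^\circ$ lying on some walk from $u$ to $x_n$, that is, nodes downstream of $u$ and upstream of $x_n$. I would define a sign $s_v\in\{\pm1\}$ for each $v\in R$ as the sign of any walk from $u$ to $v$. I then check that this is well defined. Suppose two walks from $u$ to $v$ had opposite signs. Concatenating each with a walk from $v$ to $x_n$ gives two walks from $u$ to $x_n$ of opposite signs, contradicting the hypothesis. For the same reason, every edge $i\to j$ of $G^\circ$ with $i,j\in R\cup\{u\}$ has sign $s_is_j$, where we set $s_u=1$. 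We also get $s_{x_n}=\epsilon$.

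\textbf{Step 2: reduce to the relevant subsystem.} The nodes outside $R$ cause trouble, so I would split them.
\begin{itemize}
\item Nodes not upstream of $x_n$ cannot influence $x_n$ at all. Indeed, the variables in $R\cup(\text{upstream of }x_n)$ form a subsystem closed under influence: its right-hand sides do not depend on the remaining variables, since otherwise a walk would exist. So we may restrict to the upstream set $U$ of $x_n$.
\item Nodes in $U$ that are not downstream of $u$ receive no input influence, directly or indirectly. The set $D$ of such nodes is a closed subsystem not depending on $u$ or on nodes reachable from $u$. Starting at steady state, these variables therefore stay constant for all $t$.
\end{itemize}
Substituting their constant values, the dynamics on $R$ becomes a non-autonomous-free system $\dot y=\tilde F(y;u)$ on $R$. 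Its loop-free J-graph edges all satisfy the sign relation from Step 1. Hence, after the change of variables $z_v=s_v y_v$, the system is cooperative, with nonnegative off-diagonal Jacobian entries, and monotone in $u$. No sign is required on the diagonal.

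\textbf{Step 3: comparison argument.} I would apply Kamke's theorem in the transformed coordinates. First take $u$ nondecreasing. For $h>0$, compare $z(t+h)$ with $z(t)$: the shifted trajectory solves the same system with input $u(t+h)\ge u(t)$, and $z(h)\ge z(0)$ holds componentwise. This last inequality itself needs an argument. Since $z(0)$ is an equilibrium for $u(0)$ and the input only increases, comparison with the constant solution gives $z(t)\ge z(0)$ for all $t$. Comparison then yields $z(t+h)\ge z(t)$. Thus every $z_v$ is nondecreasing in $t$, so $x_n=\epsilon z_n$ is monotone with the stated direction. The decreasing case is symmetric. Discontinuities of $u$ are handled piecewise, since the solutions are Carathéodory and Kamke's theorem holds for measurable monotone inputs.

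\textbf{Main obstacle.} I expect the main obstacle to be Step 2. It needs three things: verifying rigorously that variables outside $R$ either do not feed $x_n$ or remain at their steady values, handling the fact that $\X$ may not be order-convex (so comparison must be local, or we assume trajectories stay in $\X$), and checking that the restricted vector field is still $C^1$ with the required signs.
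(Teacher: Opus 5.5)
Your proposal is correct and follows essentially the argument behind the cited result. The paper does not prove this theorem itself; it cites it and, in the appendix, describes the proof as a ``Kalman decomposition''-type reduction followed by monotone systems theory. That matches your outline: restrict to the nodes downstream of $u$ and upstream of $x_n$, freeze the upstream-only nodes at their steady-state values, use walk-sign consistency to obtain an orthant order in which the reduced system is cooperative and nondecreasing in $u$, then compare with the constant solution and apply the time-shift argument.

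The obstacle you flag is resolved by the local route you suggest. Compare only $\xx(t)$ with $\xx(t+h)$ for small $h>0$; compactness of the trajectory on $[0,T]$ keeps the connecting segments inside the open set $\X$. The vanishing partial derivatives and Kamke's condition can then be used along those segments, so no global order-convexity or global independence assumption is needed.
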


In short, \Cref{thm:montone-thm1} says that a non-monotonic transient response requires two walks of opposite signs. A walk in $G^\circ$ can be translated as FBLs and FFLs in $G$. 

\begin{lem}
\label{lem:walks-equiv_proposition}
    Let $G^\circ$ be a simple signed directed graph with input $i$ and output $j \neq i$. There are two oppositely signed walks in $G^\circ$ from $i$ to $j$ if and only if $G^\circ$ has either an IFFL, or an NFBL that is downstream of $i$ and upstream of $j$.
\end{lem}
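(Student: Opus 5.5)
We prove the two directions separately, working only with the combinatorics of walks in the simple graph $G^\circ$, which has no self-loops.

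\emph{Sufficiency.} This direction is direct.
If $G^\circ$ has an IFFL from $i$ to $j$, its two paths are already walks of opposite signs.
Suppose instead that $C$ is an NFBL containing a node $v$, with a walk $W_1$ from $i$ to $v$ and a walk $W_2$ from $v$ to $j$. We can choose such walks because $C$ is downstream of $i$ and upstream of $j$.
Then $W_1W_2$ and $W_1 C W_2$ are both walks from $i$ to $j$. Since $\sgn(C)=-1$, they have opposite signs.

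\emph{Necessity.} Suppose there are walks $W,W'$ from $i$ to $j$ with $\sgn(W)\neq\sgn(W')$, and suppose there is no NFBL downstream of $i$ and upstream of $j$. We will show that an IFFL exists.
The key step is a reduction lemma: every walk $W$ from $i$ to $j$ can be reduced to a path $P\subseteq W$ from $i$ to $j$ by repeatedly excising closed subwalks. Concretely, whenever a vertex repeats, we cut out the segment between its first and last occurrence.
Each excised closed walk decomposes, as a multiset of edges, into cycles. There are no self-loops in $G^\circ$, so every such cycle has length at least two and is therefore a feedback loop. Every vertex of such a cycle lies on $W$, so the cycle is downstream of $i$ and upstream of $j$.
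By assumption, all these cycles are positive. Since the sign of a closed walk is the product of the signs of its cycles, $\sgn(P)=\sgn(W)$.

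The cycle decomposition needs justifying: a closed walk's edge multiset has in-degree equal to out-degree at every vertex, so it splits into edge-disjoint cycles. Signs multiply over this decomposition, giving $\sgn(\text{closed walk})=\prod\sgn(\text{cycles})$.
Applying the reduction to both $W$ and $W'$ gives paths $P$ and $P'$ from $i$ to $j$ with $\sgn(P)=\sgn(W)\neq\sgn(W')=\sgn(P')$. In particular $P\neq P'$, so $P$ and $P'$ form an IFFL.

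The main obstacle is making the reduction step rigorous. It combines the sign-multiplicativity of the cycle decomposition with the observation that every extracted cycle inherits the downstream/upstream property from lying on a walk from $i$ to $j$.
Simplicity of $G^\circ$ is what guarantees that every extracted cycle is a genuine feedback loop rather than a self-loop. Repeated edges cause no difficulty, since the decomposition is carried out on edge multisets.
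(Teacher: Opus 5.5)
Your proposal is correct and follows the paper's proof. Sufficiency goes by traversing the NFBL one extra time. Necessity goes by trimming the closed detours from each walk, which must be positive when no such NFBL exists, so that oppositely signed walks yield oppositely signed I/O paths and hence an IFFL. Your explicit decomposition of each excised closed walk into cycles of length at least two makes rigorous the paper's briefer assertion that the detours ``must be PFBLs,'' since a closed detour need not itself be a cycle.
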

\begin{proof}
    Clearly, an IFFL contains two paths of opposite signs. 
    Suppose $C$ is an NFBL downstream of $i$ and upstream of $j$. Then there exist paths from $i$ to a node in $C$, along $C$, and from $C$ to $j$. Comparing this, with the walk obtained by walking around $C$ one more time, we see that there exist two walks of opposite signs. 
    
    Conversely, any walk from $i$ to $j$ contains an I/O path, possibly with detours made of closed walks. If such a walk traverses an NFBL, we are done. If no walk contains an NFBL, then trimming away the detours, all of which must be PFBLs, produces an I/O path whose sign is the same as that of the walk. Thus, an IFFL exists.
\end{proof}

\begin{prop}
\label{prop:main-stable_biphasic} 
    Consider the system \eqref{eq:syst_main} under assumption~\eqref{eq:assumption-A1} with output variable $x_n$, and J-graph $G$. 
    If the system exhibits stable biphasic response, then $G$ has either an IFFL from $u$ to $x_n$, or an NFBL that is downstream of $u$ and upstream of $x_n$. 
\end{prop}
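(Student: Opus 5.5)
We argue by contrapositive. Suppose $G$ has neither an IFFL from $u$ to $x_n$ nor an NFBL downstream of $u$ and upstream of $x_n$. Since the self-loops are exactly what distinguishes $G$ from $G^\circ$, these motifs are absent from $G^\circ$ as well: a feedback loop has length at least two by definition, and an IFFL consists of paths, which never use self-loops. By \Cref{lem:walks-equiv_proposition}, all walks in $G^\circ$ from $u$ to $x_n$ then share a common sign $\epsilon$. If there is no such walk at all, then there is no I/O path, and \Cref{prop:main-alg}\ref{prop:main-alg-perfect} gives perfect adaptation, which is not biphasic. So we may assume that walks exist, and \Cref{thm:montone-thm1} applies on $\X\times\uInt_0$, using the sign assumptions on $\mm J$ and $\deriv{\vv f}{u}$. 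We will show that $x_n^*$ is then monotone on $\uInt_0$, contradicting biphasic response.

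The key step converts transient monotonicity into monotonicity of the steady state branch. Fix $u_0<u_1$ in $\uInt_0$ close enough that $\xx^*(u_0)\in B_{\epsilon(u_1)}(\xx^*(u_1))$. This is possible because $\xx^*$ is continuous and $u\mapsto\epsilon(u)$ is continuous and positive; this is exactly the convergence argument from \Cref{sec:A1-implies-convergence}. Start the system at $\xx^*(u_0)$ and apply the step input $u(t)=u_0$ for $t<0$, $u(t)=u_1$ for $t\ge 0$. This input is monotonically increasing, positive after a shift of the parameter if needed, and has a single discontinuity. By \Cref{thm:montone-thm1}, $x_n(t)$ is monotone in $t$, increasing if $\epsilon=1$ and decreasing if $\epsilon=-1$. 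By \eqref{eq:assumption-A1}, $\phi(t,\xx^*(u_0),u_1)\to\xx^*(u_1)$. Hence $x_n^*(u_1)\ge x_n^*(u_0)$ when $\epsilon=1$, and the reverse inequality holds when $\epsilon=-1$. The sign $\epsilon$ is a graph invariant, independent of $u_0$.

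Next we globalize. The pairs for which this comparison holds cover a neighborhood of every point, so $x_n^*$ is locally monotone in the fixed direction $\epsilon$ everywhere on $\uInt_0$. Compactness of any $[u_a,u_b]\subset\uInt_0$ then chains these local comparisons into $\epsilon\,(x_n^*(u_b)-x_n^*(u_a))\ge0$. It follows that $\epsilon\, d_u x_n^*\ge 0$ throughout $\uInt_0$, so no two points can have sensitivities of opposite sign, and the response is not biphasic.

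The main obstacle is the careful use of \eqref{eq:assumption-A1}. We must ensure that the previous steady state lies in the new basin uniformly enough to chain the comparisons, which is where continuity of $\epsilon(u)$ enters. We must also check that the hypotheses of \Cref{thm:montone-thm1} hold along the trajectory, namely that it stays in $\X$ and that the sign conditions hold on $\X\times\{u_0,u_1\}$. A minor subtlety is the positivity of $u$ in \Cref{thm:montone-thm1}, which we handle by a translation of the parameter, since this does not change any signs in the J-graph.
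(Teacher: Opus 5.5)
Your proposal is correct and takes essentially the same route as the paper. Both arguments pass from $G$ to $G^\circ$ and use \Cref{lem:walks-equiv_proposition} to get a common walk sign, then apply \Cref{thm:montone-thm1} to a step input started at $\xx^*(u_0)$ and invoke \eqref{eq:assumption-A1} to conclude $x_n^*(u_0+\delta)\ge x_n^*(u_0)$. The differences are cosmetic: the paper does not single out the no-walk case, where the hypothesis of \Cref{thm:montone-thm1} holds vacuously, and instead of your compactness chaining it simply lets $\delta\to0^+$ to obtain $d_u x_n^*(u_0)\ge 0$ at every $u_0$.
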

\begin{proof}
    The J-graph $G$ and the loop-free $G^\circ$ share the same edges between distinct nodes. Thus, $G^\circ$ has an IFFL (or an NFBL downstream of input and upstream of output) if and only if $G$ has one. 

    Suppose $G$ has neither an IFFL nor an NFBL downstream of $u$ and upstream of $x_n$. Then by \Cref{lem:walks-equiv_proposition}, all walks from input to output in $G^\circ$ have the same sign. Without loss of generality, assume all I/O paths are positive; we claim that $x_n^*(u)$ is monotonically non-decreasing. (If all I/O paths are negative, the response curve would be monotonically non-increasing.)
    
    Fix any $u_0 \in \uInt_0$, and let $\delta > 0$ be sufficiently small such that $u' = u_0 + \delta \in \uInt_0$ and $\xx^*(u_0) \in B_{\epsilon(u')}(\xx^*(u'))$; for the existence of $\delta$, see \Cref{sec:A1-implies-convergence}. This ensures that with control parameter $u'$ and initial state $\xx^*(u_0)$, the system converges to $\xx^*(u')$. 
    Define the piecewise function $u(t) = u_0 + \delta$ for $t \geq 0$, and $u(t) = u_0$ for $t < 0$. By \Cref{thm:montone-thm1}~\cite{AscensaoDattaHanciogluSontagEtAl2016}, the solution $\xx(t)$ to the initial value problem $\dot{\xx} = \vv f(\xx; u(t))$, $\xx(0) = \xx^*(u_0)$ is monotonic in time. Therefore, $x_n(t) \geq x_n(0)$ for all $t > 0$, and  $\lim_{t\to\infty} x_n(t) = x_n^*(u_0 + \delta) \geq x^*_n(u_0)$.  Since $\delta$ can be arbitrarily small, taking the limit as $\delta \to 0^+$, we conclude that $d_u x_n^*(u_0) \geq 0$, as desired.  
\end{proof}

Combining \Cref{prop:main-alg}\ref{prop:main-alg-biphasic} and \Cref{prop:main-stable_biphasic} gives us tighter necessary conditions for stable biphasic response.

\begin{thm}
\label{cor:main}
    Consider the system \eqref{eq:syst_main} under assumption~\eqref{eq:assumption-A1} with output variable $x_n$, and J-graph $G$. If the system exhibits stable biphasic response, then $G$ has either (i) an IFFL from $u$ to $x_n$, or (ii) both an NFBL that is downstream of $u$ and upstream of $x_n$ and a PFBL that is vertex-disjoint from an I/O path. 
\end{thm}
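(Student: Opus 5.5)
The theorem follows by combining \Cref{prop:main-alg}\ref{prop:main-alg-biphasic} with \Cref{prop:main-stable_biphasic}, followed by a case split. Assume the system satisfies \eqref{eq:assumption-A1} and exhibits stable biphasic response along a branch $\xx^*$ on $\uInt_0$. First check that both propositions apply. A stable biphasic response is in particular biphasic. The branch $\xx^*$ in \eqref{eq:assumption-A1} is by definition a branch as in \eqref{eq:assumption-A0}, so the Jacobian is non-singular along it. Hence \Cref{lem:Cramer}, and with it \Cref{prop:main-alg}\ref{prop:main-alg-biphasic}, applies. Likewise \Cref{prop:main-stable_biphasic} applies directly under \eqref{eq:assumption-A1}.

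From \Cref{prop:main-alg}\ref{prop:main-alg-biphasic} we obtain alternative (A): $G$ has an IFFL from $u$ to $x_n$, or (B): $G$ has a PFBL vertex-disjoint from an I/O path. From \Cref{prop:main-stable_biphasic} we obtain alternative (A): an IFFL from $u$ to $x_n$, or (C): an NFBL downstream of $u$ and upstream of $x_n$. If $G$ has an IFFL from $u$ to $x_n$, conclusion (i) holds. Otherwise, (A) fails, so the first dichotomy forces (B) and the second forces (C). This is exactly conclusion (ii).

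The only point needing care is that the two propositions refer to the same graph objects. In \Cref{prop:main-alg}, the IFFL is read in the J-graph of $\mbc A$, which is $G$ together with the self-loop at $u$. That self-loop cannot lie on any path from $u$ to $x_n$, because paths are simple, and a self-loop is not a feedback loop, which has length at least two. So IFFLs, PFBLs and I/O paths coincide in the two graphs. In \Cref{prop:main-stable_biphasic}, the proof already notes that $G$ and $G^\circ$ have the same IFFLs and the same downstream/upstream NFBLs. So "IFFL from $u$ to $x_n$" means the same thing in both dichotomies, and the case split above is legitimate.

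No step is a real obstacle. The substantive work is already contained in the two propositions, and the proof is bookkeeping to ensure the hypotheses are met and the motifs are identified consistently across $G$, $G^\circ$, and the J-graph of $\mbc A$.
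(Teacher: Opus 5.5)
Your proposal is correct and is the paper's own argument: the paper obtains the theorem simply by combining \Cref{prop:main-alg}\ref{prop:main-alg-biphasic} with \Cref{prop:main-stable_biphasic}, which amounts to your case split $(A\lor B)\land(A\lor C)\iff A\lor(B\land C)$. The bookkeeping you add is left implicit in the paper but is accurate; it checks that IFFLs, PFBLs and I/O paths mean the same thing in $G$, in $G^\circ$, and in the J-graph of $\mbc A$.
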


\begin{rmk}
\label{rmk:Cramer-at-ss-dynamical} 
    \Cref{prop:main-stable_biphasic} and \Cref{cor:main} requires entries of the augmented Jacobian matrix to be signed on all of $\X \times \uInt_0$, not just along a steady state branch $\xx^*$, which is the case for \Cref{prop:main-alg}; see \Cref{rmk:Cramer-at-ss}. 
\end{rmk}

\begin{figure}[th]
\centering 
\begin{subfigure}[b]{0.35\textwidth}
    \centering\figdraftboxwidth
    \raisebox{0.8cm}{\includegraphics{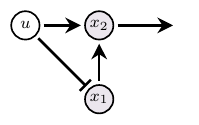}}
    
    \caption{}
    \label{fig:ex-CRN-Jgraph}
\end{subfigure}
\begin{subfigure}[b]{0.35\textwidth}
    \centering\figdraftboxwidth
    \includegraphics[height=1.5in]{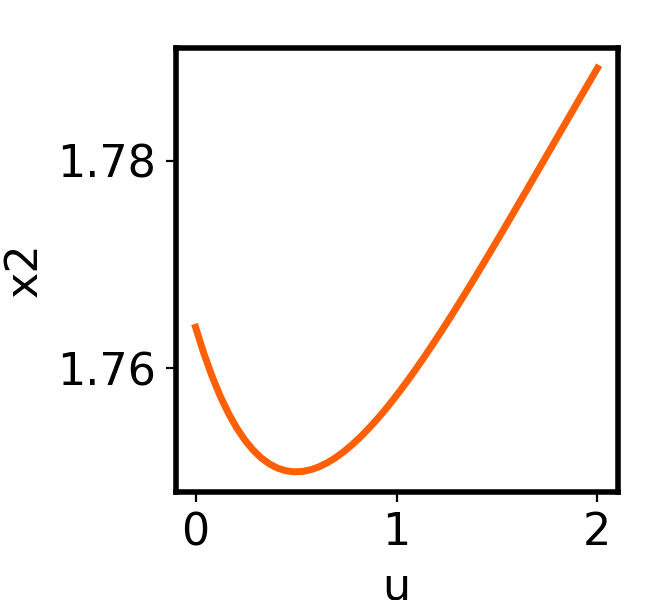} 
    \caption{}
    \label{fig:ex-CRN-response}
\end{subfigure}
\caption{
    (a) The loop-free J-graph from \Cref{ex:CRN-rateconstant}, and   
    (b) the response curve $x_2^*(u)$ when the parameters $\kk_i$ are set to $1$.  
}
\label{fig:ex-CRN}
\end{figure}

\begin{ex}
\label{ex:CRN-rateconstant}
    Consider the reaction network
    \begin{gather*}
        \cf{X}_1 \FRxn{\,\,u\,\,}  \cf{X}_2,  \quad 
        \cf{0} \RRxn{\,\,\kk_1\,\,}{\,\,\kk_2\,\,} \cf{X}_1 , \quad 
        \cf{0} \RRxn{\,\,\kk_3\,\,}{\kk_4} \cf{X}_2 , \quad 
        2\cf{X}_1  \FRxn{\,\,\kk_5\,\,}  \cf{X}_1 + 2\cf{X}_2  . 
    \end{gather*}
    Under mass-action kinetics, the dynamics is governed by
    \eq{
        \dot{x}_1 &= \kk_1 - \kk_2 x_1  - \kk_5 x_1^2 - u x_1\\
        \dot{x}_2 &= \kk_3 - \kk_4 x_2  + 2 \kk_5 x_1^2 + u x_1, 
    }
    where $\kk_i$, $u > 0$ are parameters. The augmented Jacobian matrix is 
    \eq{
        \mm A = \begin{pNiceArray}{r|cc} 
            -1 &  0 & 0  \\
            \hline 
            -x_1 &  - (u +\kk_2 + 2\kk_5 x_1) & 0 \\ 
            x_1 &  u + 4\kk_5 x_1 & -\kk_4 
        \end{pNiceArray}, 
    }
    so its J-graph is a simple IFFL (\Cref{fig:ex-CRN-Jgraph}). Stability can be derived from the fact that $\det(\mm J) > 0$ and $\Tr(\mm J) < 0$.

    Because of the lack of FBLs, by \Cref{cor:main} the IFFL is necessary for biphasic response. 
    Applying \Cref{lem:Cramer}, we find that $d_u x_2^* \propto \mm A[\hat 1, \hat 3] = x_1^* (\kk_2 - 2\kk_5 x_1^* )$. 
    In other words, the system exhibits biphasic response in $x_2$ if and only if $\kk_2 - 2\kk_5 x_1^*(u)$ changes sign as $u$ increases. In \Cref{app:ex-CRN-rateconstant}, we show that biphasic response is possible if and only if $4 \kk_1 \kk_5 > 3 \kk_2^2$; for example, \Cref{fig:ex-CRN-response} shows the response curve when $\kk_r = 1$ for all $r$. 
\end{ex}

\begin{rmk}
\label{rmk:FFL-motif}
    Our definition of feedforward loop permits common internal nodes among the two I/O paths. However, an IFFL necessarily contains within in an IFFL with disjoint interior. See \Cref{sec-app:rmk:FFL-motif} for detail. 
\end{rmk}

\section{Applications}
\label{sec:application}

Signed directed graphs are ubiquitous in biology, particularly in signaling cascades and gene regulation, which make up the first two applications.  We show that in these models, the regulatory networks, as they are typically drawn, are essentially the loop-free J-graphs of the corresponding models. Thus, \Cref{cor:main} provides necessary network structure for stable biphasic response in regulatory network models. 

For our last application (\Cref{sec:application-CRM}), we consider consumer-resource models. These ecological models are typically represented by community networks --- bipartite graphs depicting interactions between the resources and the consumers. From the community network, we derive the J-graph at an arbitrary coexistence equilibrium. In this case, the diagonal has some zero entries. Using \Cref{lem:nonzeroterms-subgraphs}, we give an alternative proof of the Competitive Exclusion Principle for consumer-resource models.

\subsection{Signaling pathway and T-cell activation}
\label{sec:application-Tcell}

Experimentally, the amount of cytokines released by T-cells depends non-monotonically on the concentration of antigen~\cite{LeverLimKrugerNguyenEtAl2016}. A model of T-cell activation was investigated computationally in \cite{LeverLimKrugerNguyenEtAl2016}, and a special case of it was studied analytically in \cite{YuSontag-biphasic1}. 
Here, we analyze the model in general to determine the necessary network architectures for a biphasic response. Moreover, we give a concrete example where the signaling pathway does \emph{not} have an IFFL, yet the steady state response is biphasic.

The model from \cite{LeverLimKrugerNguyenEtAl2016} consists of two parts (\Cref{fig:appl-Tcell-Lever}). The first part, occurring at the cell membrane, involves the binding of an extracellular antigen to a T-cell receptor (TCR). The resulting TCR complex undergoes tyrosine phosphorylation according to the McKeithan kinetic proofreading model~\cite{McKeithan1995}. The second part is an intracellular signaling network $G^\circ$ governed by catalytic regulation (see \Cref{sec:application-Tcell-IC} for detail). The coupling between the two parts takes place when a phosphorylated TCR complex triggers the intracellular pathway.

\begin{figure}[h!t]
    \centering\setlength\tabcolsep{1.5em}
    \begin{tabular}{cc}
    \multirow{2}{*}[1.83cm]{\subcaptionbox{\label{fig:appl-Tcell-Lever}}{
        \includegraphics[]{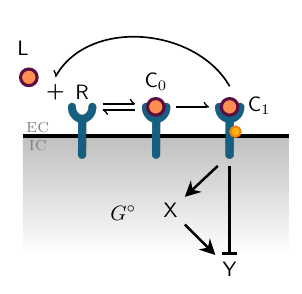}
    }}
    &
    \subcaptionbox{\label{fig:appl-Tcell-KPR}}{
        \includegraphics[]{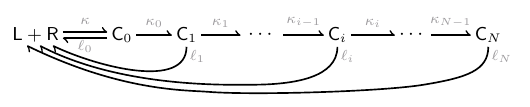}
    }
    \\[1cm]
    &
    \subcaptionbox{\label{fig:appl-Tcell-downstream-rxns}}{
        \includegraphics[]{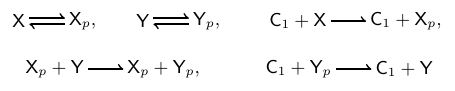}
    }
    \end{tabular}
\caption{
    (a) A model of T-cell activation from \cite{LeverLimKrugerNguyenEtAl2016}, where the kinetic proofreading stage at the cell membrane is coupled to an intracellular signaling network $G^\circ$. 
    (b) The reaction scheme for the kinetic proofreading stage where a T-cell receptor $\cf{R}$ binds to an antigen $\cf{L}$ and undergoes sequential phosphorylation. 
    (c) The reaction scheme for the signaling network $G^\circ$ shown in (a). Here, $\cf{X}_p$ and $\cf{Y}_p$ are active forms of their respective proteins. 
}
\label{fig:appl-Tcell}
\end{figure}

\subsubsection{McKeithan kinetic proofreading model}
\label{sec:application-Tcell-KPR}

At the cell membrane, an extracellular antigen ($\cf{L}$) binds to a TCR ($\cf{R}$) that has $N$ phosphorylation sites. Assuming sequential phosphorylation, the elementary reactions are shown in \Cref{fig:appl-Tcell-KPR}. Under mass-action kinetics, the model has a globally attracting positive equilibrium for any positive parameters~\cite{Sontag2001}. 

In \cite[Appendix~D]{YuSontag-biphasic1}, we derived the equilibrium concentration of each phosphorylated antigen-TCR complex: 
\eqn{\label{eq:Tcell-Ci-ss}
    [\cf{C}_i]^* = A_i \underbrace{\left( L_T + \rho + \alpha - \sqrt{(L_T +  \rho + \alpha)^2 - 4 \rho L_T} \right)}_{U(L_T)} , \qquad 0 \leq i \leq N, 
}
where $A_i, \alpha > 0$ are parameters dependent on rate constants, $\rho > 0$ is the total amount of TCRs, and $L_T$ is the total amount of antigen. 

Since $U(L_T)$, and hence $[\cf{C}_i]^*$, increases monotonically with $L_T$, the biphasic response observed in the cytokine released must stem from the intracellular signaling network $G^\circ$.

\subsubsection{Intracellular signaling network}
\label{sec:application-Tcell-IC}

We now turn our attention to the signaling pathway $G^\circ$. For simplicity, we assume mass-action kinetics; however, the main result of this section applies under more general kinetics (e.g., Michaelis--Menten, Hill-type). These alternative kinetics are covered in \Cref{sec:application-Tcell-appendix}. 

Consider a signaling network $G^\circ$, a simple signed directed graph. Each node represents either a TCR complex $\cf{C}_i$ or a signaling protein in one of two forms. We assume spontaneous conversion between the inactive form ($\cf{X}$) and active form ($\cf{X}_p$). Each edge in $G^\circ$ represents either \emph{catalytic activation} or \emph{catalytic inhibition}. 

The activation regulation $\cf{X} \posedge \cf{Y}$ refers to the elementary reaction $\cf{X}_p + \cf{Y} \FRxn{} \cf{X}_p + \cf{Y}_p$, where $\cf{X}_p$ catalyzes the activation of $\cf{Y}$. Under mass-action kinetics, the reaction rate function of the activation regulation is 
    \eq{
        \lambda = k [\cf{X}_p] [\cf{Y}]  
    }
with rate constant $k > 0$. In contrast, the inhibition regulation $\cf{X} \negedge \cf{Y}$ represents the reaction $\cf{X}_p + \cf{Y}_p \FRxn{} \cf{X}_p + \cf{Y}$, with rate function 
    \eq{
        \mu &= 
                 k [\cf{X}_p] [\cf{Y}_p]  . 
    }
We assume no auto-regulation, that is, any regulation $\cf{X} \neutedge \cf{Y}$ involves two distinct proteins. 
Finally, any TCR complex $\cf{C}_i$ ($i \geq 1$) can act as a regulating protein, provided all complexes regulate any intracellular proteins consistently; for example, if $\cf{C}_i$ activates $\cf{Y}$, then no $\cf{C}_j$ can inhibit $\cf{Y}$.

\begin{ex}
\label{ex:Tcell-3node}
    Consider the three-node IFFL signaling network $G^\circ$ in \Cref{fig:appl-Tcell-Lever}. Its reaction scheme, shown in  \Cref{fig:appl-Tcell-downstream-rxns},  
    gives rise to the following ODE system on $\rrpp^4$ under mass-action kinetics: 
    \eq{
        -\dot{[\cf{X}]} = \dot{[\cf{X}_p]} &= \gamma_{1} [\cf{X}] - \delta_{1} [\cf{X}_p]  
            + \eta_1 [\cf{C}_1] [\cf{X}]  \\ 
        - \dot{[\cf{Y}]} = \dot{[\cf{Y}_p]} &= \gamma_{2} [\cf{Y}] - \delta_{2} [\cf{Y}_p]  
            + k_1  [\cf{X}_p] [\cf{Y}]
            - \eta_2  [\cf{C}_1]  [\cf{Y}_p] , 
    }
    with parameters $\gamma_i, \delta_i, \eta_i, k_1 >0$. 
    Using the conservation laws $[\cf{X}] + [\cf{X}_p] = X_T$, $[\cf{Y}] + [\cf{Y}_p] = Y_T$, and rescaling $x = [\cf{X}_p]/X_T$,  $y = [\cf{Y}_p]/Y_T$, we reduce the system to one on $(0,1)^2 \subseteq \rrpp^2$:  
    \eq{
        \dot{x} &=  \gamma_{1} (1-x) - \delta_{1} x 
            + \eta_1 [\cf{C}_1] (1-x)  \\ 
        \dot{y} &= \gamma_2 (1-y) - \delta_2 y + \hat{k}_1 x(1-y) - \eta_2 [\cf{C}_1] y ,
    }
    where $\hat{k}_1 = k_1X_T > 0$. 
    
    Without loss of generality, we assume $[\cf{C}_1]$ has reached its globally attracting equilibrium, and substitute in $[\cf{C}_1] = A_1 U(L_T)$ from \eqref{eq:Tcell-Ci-ss}. 
    Let $L_T$ be the control parameter. The augmented Jacobian matrix of the reduced system is 
    \eq{
        \mm A = 
        \begin{pNiceArray}{c|cc} 
            -1 &  0 & 0  \\
            \hline 
            \eta_1 A_1 (1-x) U' & -(\gamma_1 + \delta_1 + \eta_1 A_1 U) & 0 \\ 
            - \eta_2 A_1 y U' & \hat{k}_1 (1-y) & -(\gamma_2 + \delta_2 + \hat{k}_1 x + \eta_2  A_1 U) 
        \end{pNiceArray}.
    }
    Observe that the signaling network $G^\circ$ in \Cref{fig:appl-Tcell-Lever} is the loop-free J-graph of $\mm A$. The diagonal of $\mm A$ is negative because of the spontaneous conversions. Since the J-graph does not have any FBLs, \Cref{cor:main} implies that the IFFL is necessary for stable biphasic response. 
\end{ex}

The example above contains the key ideas for the proof of \Cref{thm:T-cell} on necessary condition for stable biphasic response in the T-cell activation model. In the case where multiple TCR complexes regulate the intracellular proteins, there really is only one input node for the control parameter because $[\cf{C}_j]^* = A_j U(L_T)$ for all $j$. Thus, we merge the nodes for $\cf{C}_j$ into a single vertex to obtain the loop-free J-graph $\tilde{G}^\circ$ from the signaling network $G^\circ$. 
Thus, by an \df{IFFL} of the signaling network $G^\circ$, we mean an IFFL of $\tilde{G}^\circ$. Note that the two paths in the IFFL can start at different $\cf{C}_j$'s. Analogous definitions can be made for FBLs that are downstream of input and upstream of output.

We now state and prove the main result of this section. 

\begin{thm}
\label{thm:T-cell}
    Consider a model of T-cell activation, consisting of (i) the kinetic proofreading model for a TCR with $N$ sites; (ii) an intracellular signaling pathway $G^\circ$ with $n$ intracellular signaling proteins, and (iii) regulation of some intracellular proteins by the TCR complexes. Assume regulation by the TCR complexes happen in a consistent manner. 
    Suppose the amount of cytokines released by the T-cell is proportional to $\cf{X}_{np}$, and the Jacobian matrix of the reduced system is Hurwitz at any positive equilibrium. Let $\tilde{G}^\circ$ the graph obtained from $G^\circ$ after merging all TCR complexes. 

    If $[\cf{X}_{np}]$ exhibits stable biphasic response with respect to the total amount of antigen $L_T$, then $\tilde{G}^\circ$ must contain either an IFFL, or a combination of an NFBL and a PFBL, where the PFBL is vertex-disjoint from an I/O path, and the NFBL is downstream of $\cf{C}_j$ for some $j$ and upstream of $\cf{X}_n$. 
\end{thm}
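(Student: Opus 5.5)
The plan is to reduce \Cref{thm:T-cell} to \Cref{cor:main}, following \Cref{ex:Tcell-3node}. First, use the conservation laws $[\cf{X}_i] + [\cf{X}_{ip}] = X_{iT}$ to eliminate the inactive forms, and rescale $x_i = [\cf{X}_{ip}]/X_{iT} \in (0,1)$. This gives an $n$-dimensional system on $(0,1)^n$. Since the kinetic proofreading stage has a globally attracting equilibrium independent of the intracellular proteins, assume the $\cf{C}_j$ are at equilibrium and substitute $[\cf{C}_j] = A_j U(L_T)$ from \eqref{eq:Tcell-Ci-ss}. With $L_T$ as the control parameter, the reduced system is
\[
\dot x_i = \gamma_i(1-x_i) - \delta_i x_i + \sum_{k \posedge i} \hat k_{ki} x_k (1-x_i) - \sum_{k \negedge i} \hat k_{ki} x_k x_i + \Big(\sum_{j} \eta_{ji} A_j\Big) U(L_T)\,(1-x_i \text{ or } {-x_i}).
\]
The last term carries $1-x_i$ if the complexes activate $\cf{X}_i$ and $-x_i$ if they inhibit it. Consistency of TCR regulation guarantees that exactly one of these occurs for each $i$.

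Second, compute the augmented Jacobian $\mm A$ and check the hypotheses of \Cref{cor:main}. On $(0,1)^n$, an activation edge $k \posedge i$ contributes $\hat k_{ki}(1-x_i) > 0$ to $\deriv{f_i}{k}$. An inhibition edge contributes $-\hat k_{ki} x_i < 0$. Because there is no auto-regulation, each off-diagonal entry comes from at most one edge and so has a fixed sign. The diagonal entries equal $-(\gamma_i + \delta_i + \sum \hat k_{ki} x_k + \ldots)$, which is strictly negative because of the spontaneous conversions. The entry $\deriv{f_i}{L_T}$ is $(\sum_j \eta_{ji}A_j)U'(L_T)$ times $(1-x_i)$ or $-x_i$. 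Since $U' > 0$, this entry has the sign of the (consistent) TCR regulation. The loop-free J-graph of $\mm A$ is therefore $\tilde G^\circ$ with the merged TCR node playing the role of the input $u$.

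Third, verify the standing assumptions. The Hurwitz hypothesis at every positive equilibrium gives \eqref{eq:assumption-A2}, and hence \eqref{eq:assumption-A0} and \eqref{eq:assumption-A1} via \Cref{sec:A2-implies-A1}. Applying \Cref{cor:main} with output $x_n$ then yields the dichotomy, translated back to $\tilde G^\circ$ by identifying $u$ with the merged complex node:
\begin{itemize}
\item an IFFL; or
\item an NFBL downstream of some $\cf{C}_j$ and upstream of $\cf{X}_n$, together with a PFBL vertex-disjoint from an I/O path.
\end{itemize}
A walk from the merged node corresponds to a walk from some $\cf{C}_j$, which handles the phrase "downstream of $\cf{C}_j$ for some $j$."

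The main subtlety is justifying the reduction. The full system includes the kinetic proofreading variables, so stable biphasic response of the full system must be shown to imply stable biphasic response of the reduced system with $L_T$ as parameter. The cascade structure handles this: the $\cf{C}$ subsystem is autonomous and globally attracting, and its steady state depends monotonically on $L_T$. The steady states of the full system therefore correspond exactly to those of the reduced system with $[\cf{C}_j]^*$ substituted. A second point to check is that the entries of $\mm A$ are sign-definite on all of $(0,1)^n \times \uInt_0$, not just along the branch, as \Cref{rmk:Cramer-at-ss-dynamical} requires; the explicit formulas above give this.
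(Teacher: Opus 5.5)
Your proposal is correct and follows the paper's proof essentially step for step. It reduces the system via the conservation laws and the substitution $[\cf{C}_j]=A_jU(L_T)$, checks that the augmented Jacobian has negative diagonal and sign-definite entries whose loop-free J-graph is $\tilde G^\circ$, and then applies \Cref{cor:main}; your treatment of the cascade reduction is, if anything, more explicit than the paper's ``without loss of generality''. One small misattribution: the off-diagonal entries are sign-definite because $G^\circ$ is a simple signed graph, whereas the no-auto-regulation hypothesis is what keeps the diagonal negative (an auto-activation term $\hat k x_i(1-x_i)$ would contribute $\hat k(1-2x_i)$ there).
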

\begin{proof}
    Without loss of generality, we assume that the kinetic proofreading system has reached its globally stable equilibrium~\cite{Sontag2001}, and substitute in $[\cf{C}_j] = A_j U(L_T)$ from \eqref{eq:Tcell-Ci-ss} into the ODE system for the intracellular signaling pathway. 

    Let $\mm \Gamma \in \rr^{n\times N}$ be the adjacency matrix for how $\cf{C}_j$ regulates $\cf{X}_i$, and let $\mm \Xi \in \rr^{n\times n}$ be the matrix encoding regulations among the species $\cf{X}_i$.
    For example, for the network in \Cref{fig:appl-Tcell-Lever},\\[-2em] 
    \eq{
        \mm \Gamma = \begin{blockarray}{cr}
            & {\color{gray}\scriptstyle C_1} \\ 
            \begin{block}{c(r)}
            {\color{gray}\scriptstyle X} & 1 \\ 
            {\color{gray}\scriptstyle Y} & -1 \\
            \end{block} 
        \end{blockarray} \, , 
        \quad \text{and} \quad 
        \mm \Xi = \begin{blockarray}{ccc}
            & {\color{gray}\scriptstyle X} & {\color{gray}\scriptstyle Y} \\ 
            \begin{block}{c(cc)}
            {\color{gray}\scriptstyle X} & 0 & 0 \\ 
            {\color{gray}\scriptstyle Y} & 1 & 0 \\
            \end{block} 
        \end{blockarray} \, . \\[-3.25em]
    }
    By our assumption of consistent regulation by TCR complexes, each row of $\mm \Gamma$ cannot have mixed signs. Because no auto-regulation is allowed, the diagonal of $\mm \Xi$ is zero.

    The full ODE system on $\rrpp^{2n}$ modeling the intracellular species is given by $\dot{[\cf{X}_i]} = - \dot{[\cf{X}_{ip}]}$ for $i=1,\ldots, n$, where 
    \eq{
        \dot{[\cf{X}_{ip}]} &=  
            \gamma_i [\cf{X}_i] - \delta_i [\cf{X}_{ip}] 
            + \underbrace{\sum_{\substack{j=1, \\ [\mm \Gamma]_{i,j} = 1}}^N \eta_{ji}[\cf{C}_j] [\cf{X}_i] 
                - \sum_{\substack{j=1, \\ [\mm \Gamma]_{i,j} = -1}}^N \eta_{ji}[\cf{C}_j] [\cf{X}_{ip}] 
                }_{\text{Regulation of $\cf{X}_i$ by $\cf{C}_j$}}
            \\&\qquad + \underbrace{\sum_{\substack{j=1, \\ [\mm \Xi]_{i,j} = 1}}^n k_{ji}[\cf{X}_{jp}] [\cf{X}_i] 
                - \sum_{\substack{j=1, \\ [\mm \Xi]_{i,j} = -1}}^n k_{ji}[\cf{X}_{jp}] [\cf{X}_{ip}] 
                }_{\text{Regulation of $\cf{X}_i$ by $\cf{X}_j$}},
    }
    where $[\cf{C}_j] = A_jU(L_T)$ and $\gamma_i$, $\delta_i$, $\eta_{ji}$, $k_{ji} > 0$.  
    Using the conservation laws $[\cf{X}_i] + [\cf{X}_{ip}] = T_i$ and rescaling the state variables $x_i = [\cf{X}_{ip}]/T_i$, we reduce the system to one on $(0,1)^n \subseteq \rrpp^n$: 
    \eqn{\label{eq:Tcell-nondim}
    \begin{split}
        \dot{x}_i &= \gamma_i (1-x_i) - \delta_i x_i
            + \sum_{j \colon [\mm \Gamma]_{i,j} = 1} \hat{\eta}_{ji}  (1-x_i) U(L_T)
                - \sum_{j \colon [\mm \Gamma]_{i,j} = -1} \hat{\eta}_{ji} x_i U(L_T) 
           \\&\qquad + \sum_{j \colon [\mm \Xi]_{i,j} = 1} \hat{k}_{ji} x_j (1-x_i)
                - \sum_{j \colon [\mm \Xi]_{i,j} = -1} \hat{k}_{ji} x_j x_i,  
    \end{split}
    }
    where $\hat{\eta}_{ji} = \eta_{ji} A_j$ and $\hat{k}_{ji} = k_{ji}T_j$. 
    
    Next, treating $L_T$ as the control parameter, we compute the augmented Jacobian matrix $\mm A \in \rr^{(n+1)\times(n+1)}$. To keep notation simple, we start indexing at $0$ (for the control $L_T$). The diagonal entries are strictly negative, since $[\mm A]_{0,0} = -1$ and for $i \in [n]$, 
    \eq{
        [\mm A]_{i,i} &= -\gamma_i - \delta_i 
            - \sum_{j \colon [\mm \Gamma]_{i,j} \neq 0} \hat{\eta}_{ji} U(L_T)
            - \sum_{j \colon [\mm \Xi]_{i,j} \neq 0} \hat{k}_{ji} x_j < 0. 
    }
    For the off-diagonal entry, we have $[\mm A]_{0,j} = 0$ for all $j \in [n]$. Let $i \neq j \in [n]$; differentiating the right-hand side of \eqref{eq:Tcell-nondim} with respect to $x_j$, we obtain 
    \eq{
        [\mm A]_{i,j} &= \left\{
            \begin{array}{cl}
               \hat{k}_{ji} (1-x_i)  & \text{ if } [\mm\Xi]_{i,j} = 1, \\
                -\hat{k}_{ji} x_i & \text{ if } [\mm\Xi]_{i,j} = -1, \\
               0  & \text{ if } [\mm\Xi]_{i,j} = 0 .
            \end{array}
            \right. 
    } 
    Finally, the first column of $\mm A$, which captures the direct effect $L_T$ has on $x_i$, is 
    \eq{
        [\mm A]_{i,0} &= \left\{ 
        \begin{array}{cl}
            0 & \text{ if } [\mm \Gamma]_{i,\cdot} = \vv 0, \\ 
             \sum_{j \colon [\mm \Gamma]_{i,j} \neq 0 } \hat{\eta}_{ji} (1-x_i) U'(L_T) & \text{ if } [\mm \Gamma]_{i,\cdot} \neq \vv 0 ,  \, [\mm \Gamma]_{i,\cdot} \geq \vv 0,   \\
             -\sum_{j \colon [\mm \Gamma]_{i,j} \neq 0 } \hat{\eta}_{ji} x_i U'(L_T)  & \text{ if }  [\mm \Gamma]_{i,\cdot} \neq \vv 0 , \,  [\mm \Gamma]_{i,\cdot} \leq \vv 0  ,
        \end{array}
        \right. 
    }
    for any $i \in [n]$. In other words, the loop-free J-graph of $\mm A$ is precisely $\tilde{G}^\circ$.  
    If $x_n$ exhibits biphasic response with respect to $L_T$, then by \Cref{cor:main}, $\tilde{G}^\circ$ must contain either an IFFL, or a combination of a PFBL and an NFBL with the stated properties. 
\end{proof}

\begin{figure}[h!t]
\centering 
\begin{subfigure}[b]{0.45\textwidth}
    \centering\figdraftboxwidth
    \raisebox{0.3cm}{\includegraphics{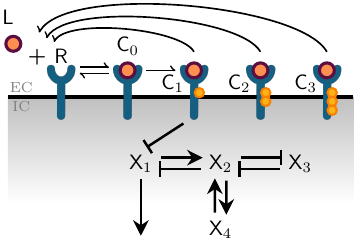}}
    \caption{}
    \label{fig:ex-noIFFL-scheme}
\end{subfigure}
\begin{subfigure}[b]{0.4\textwidth}
    \centering\figdraftboxwidth
    \includegraphics[height=1.75in]{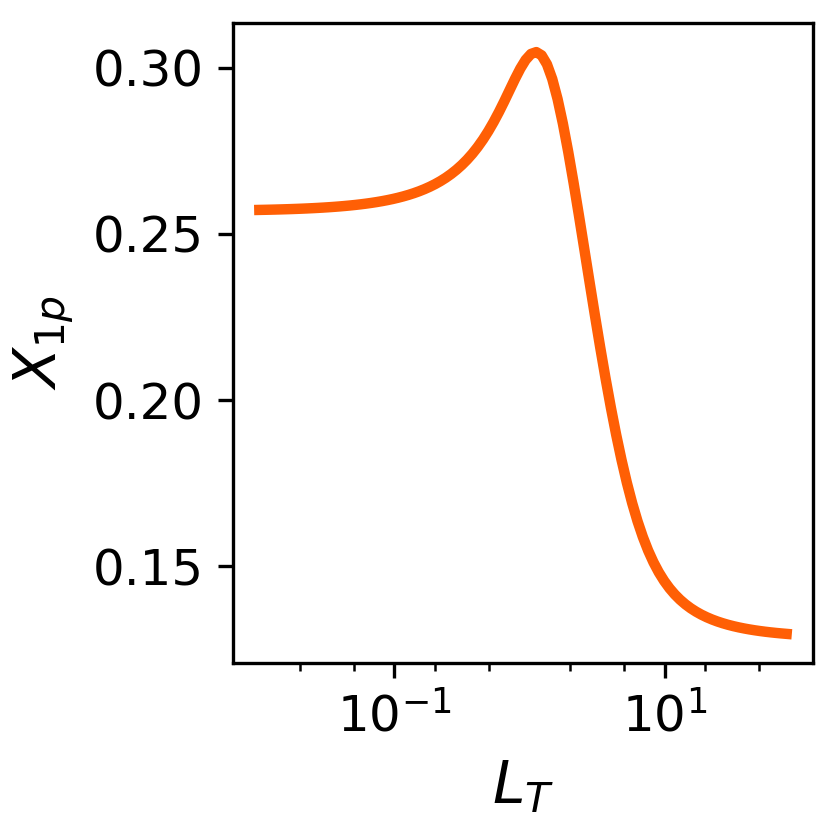} 
    \caption{}
    \label{fig:ex-noIFFL-response}
\end{subfigure}
\caption{(a) A model without an IFFL, (b) that is capable of exhibiting biphasic response.}
\label{fig:appl-Tcell-noIFFL}
\end{figure}

\begin{ex}
\label{ex:appl-Tcell-noIFFL}
    The model depicted in \Cref{fig:ex-noIFFL-scheme} is capable of exhibiting biphasic response without an IFFL (\Cref{fig:ex-noIFFL-response}). 
    See \Cref{app:ex:appl-Tcell-noIFFL} for details about this model. 
\end{ex}

\subsection{Gene regulation}
\label{sec:application-GRN}

Consider a gene regulatory network (GRN) model that tracks only transcript concentrations. 
We assume an arbitrary GRN $G^\circ$ consisting of $n$ genes, where the first gene is subject to external stimulation, and the last gene produces the output. Assume \emph{no} auto-activation. Note that $G^\circ$ is a signed directed graph, possibly with some negative self-loops (if there are auto-inhibition), but no positive self-loops.

For $i\in [n]$, let $x_i$ be the transcript concentration of gene $i$. Each gene has a basal transcription rate $\gamma_i > 0$, and each transcript degrades at rate $\delta_i > 0$. We assume that $\gamma_1 = \gamma_1(u)$ is a monotonically increasing function of the strength of an external signal $u$. 
Regarding the regulatory edges in $G^\circ$, a positive edge $x_i \posedge x_j$ ($j \neq i$) implies we add to the ODE for $x_j$, the Hill activation function 
\eqn{\label{eq:activation_fcn-GRN}
    \lambda(x_i) = \frac{k x_i^{m}}{V + x_i^{m}}, 
}
where $k$, $V$, $m > 0$. Similarly, for a negative edge $x_i \negedge x_j$, we add the repressive term 
\eqn{\label{eq:inhibition_fcn-GRN}
    \mu(x_i) = \frac{k}{V + x_i^{m}}.
}
Note that $\lambda(x_i)$ is a positive, increasing function of $x_i$, whereas $\mu(x_i)$ is a positive, decreasing function.

\begin{figure}[h!t]
\centering 
\begin{subfigure}[b]{0.35\textwidth}
    \centering\figdraftboxwidth
    \raisebox{1.2cm}{\includegraphics{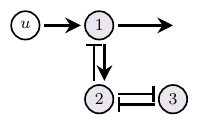}}
    
    \caption{}
    \label{fig:ex-GRN-4node-Jgraph}
\end{subfigure}
\begin{subfigure}[b]{0.35\textwidth}
    \centering\figdraftboxwidth
    \includegraphics[height=1.75in]{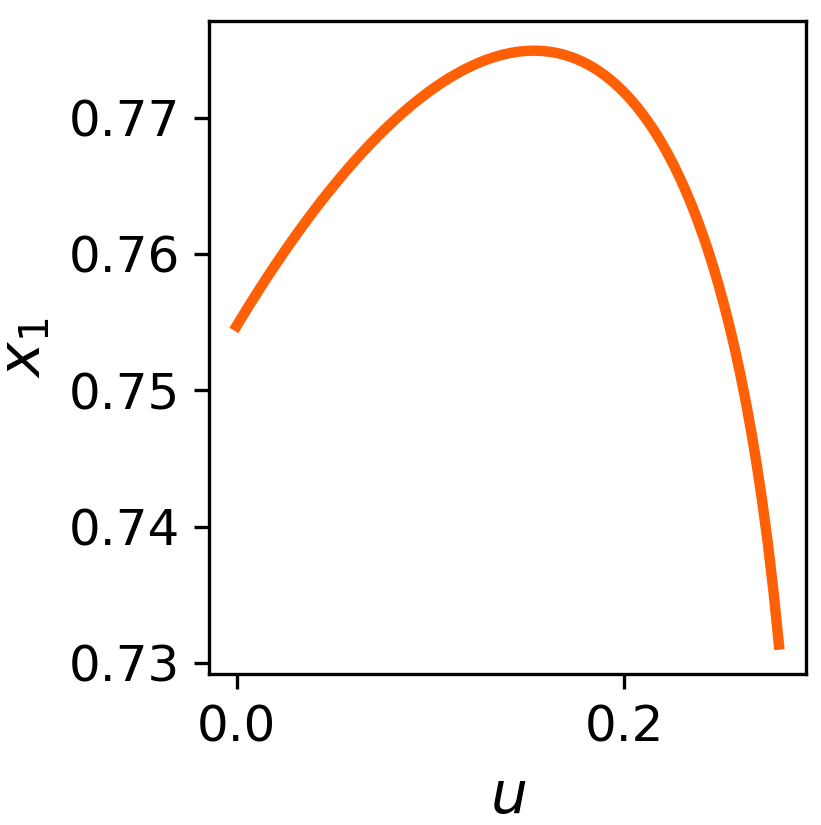} 
    \caption{}
    \label{fig:ex-GRN-4node-ss}
\end{subfigure}
\caption{
    (a) The GRN from \Cref{ex:GRN-4node}.    
    (b) The response curve $x_1^*(u)$. 
}
\label{fig:ex-GRN-4node}
\end{figure}

\begin{ex}
\label{ex:GRN-4node}
    Consider the four-node GRN $G^\circ$ with output $x_1$ in \Cref{fig:ex-GRN-4node-Jgraph}. With $\lambda_r(x_i) = k_r x_i^{m_r}/(V_r + x_i^{m_r})$ and $\mu_r(x_i) = k_r / (V_r + x_i^{m_r})$, the system modeling this network is 
    \eq{
        \dot{x}_1 &= u - \delta_1 x_1 + \mu_1(x_2) \\ 
        \dot{x}_2 &= \gamma_2 - \delta_2 x_2 + \lambda_2(x_1) + \mu_3(x_3)
            \\ 
        \dot{x}_3 &= \gamma_3 - \delta_3 x_3 +  \mu_4(x_2) 
            . 
    }
    Assuming all parameters are positive, and $u$ is the control parameter, the augmented Jacobian matrix is 
    \eq{
        \mm A = \begin{pNiceArray}{r|rrr} 
        -1 & 0 & 0 & 0 \\ 
        \hline 
        1&    -\delta_1 & \mu_1' & 0 \\ 
        0&    \lambda_2' & -\delta_2 & \mu_3' \\
        0&    0 & \mu_4' & -\delta_3 
        \end{pNiceArray},
    }
    where 
    \eqn{\label{eq:derivative-fcn-GRN}
        \lambda_r' = \frac{V_r m_r}{k_r x_i^{m_r+1}} \lambda_r^2
            > 0 
        \quad\text{and} \quad 
        \mu_r' = \frac{-m_rx_i^{m_r-1} }{k_r } \mu_r^2 
            < 0. 
    }
    The GRN $G^\circ$ is also the loop-free J-graph of this system.  
    By \Cref{lem:Cramer}, the sensitivity is determined by the minor 
    \eq{
        d_ux_1^* \propto \mm A[\hat 1, \hat 2] 
            = \delta_2 \delta_3 - |\mu_3'| |\mu_4'|. 
    }
    This indicates that biphasic response cannot be ruled out (\Cref{fig:ex-GRN-4node-ss}). See \Cref{app:ex:GRN-4node} for parameter values and an argument for stability.  
\end{ex}

This example above contains the key ideas for the proof of \Cref{thm:GRN} on necessary condition for stable biphasic response in GRNs. 
Crucial to the proof is that activating involves an increasing term ($\lambda_r' > 0$), and inhibition involves a decreasing term ($\mu_r' < 0$). Note that auto-activation would have introduced mixed signs along the diagonal of the Jacobian matrix, and thus fall outside the scope of our framework. 

We now state and prove the main result of this section.

\begin{thm}
\label{thm:GRN} 
    Consider a GRN $G^\circ$ consisting of $n$ genes with no auto-activation. Suppose the basal transcription rate of $x_1$ is a monotonically increasing function of an external signal $u$, and let $x_n$ be the output. Assume the Jacobian matrix is Hurwitz at any positive equilibrium. 

    If $x_n$ exhibits stable biphasic response with respect to $u$, then the GRN $G^\circ$ must have either an IFFL, or a combination of an NFBL and a PFBL, where the PFBL is vertex-disjoint from an I/O path, and the NFBL is downstream of $x_1$ and upstream of $x_n$. 
\end{thm}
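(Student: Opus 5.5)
The plan mirrors the proof of \Cref{thm:T-cell}: write down the ODE model attached to $G^\circ$, compute its augmented Jacobian matrix, check that its loop-free J-graph is exactly $G^\circ$ with the input attached to $x_1$, and then invoke \Cref{cor:main}. The ODE system is
\eq{
    \dot x_i = \gamma_i(u)\,[i=1] + \gamma_i\,[i\neq 1] - \delta_i x_i + \sum_{j \posedge i} \lambda_{ji}(x_j) + \sum_{j \negedge i} \mu_{ji}(x_j),
}
where the sums range over the edges of $G^\circ$. A negative self-loop $i \negedge i$ contributes a term $\mu_{ii}(x_i)$. Because there is no auto-activation, no $\lambda_{ii}$ terms appear.

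First I will compute the entries of $\mm A$. The only $u$-dependence is through $\gamma_1(u)$, so the first column of $\mm A$ is $\gamma_1'(u)\ehat_1$, which is non-negative. To satisfy the standing sign assumptions, I will take it to be strictly non-negative, or assume $\gamma_1$ strictly increasing. Thus the only edge out of the input node is $u \posedge x_1$. For $i \neq j$, the entry $\deriv{f_i}{j}$ is a sum of $\lambda'_{ji} > 0$ or of $\mu'_{ji} < 0$, using \eqref{eq:derivative-fcn-GRN}. This gives a positive or negative edge precisely as in $G^\circ$, provided $G^\circ$ has no two edges of opposite sign between the same ordered pair. I will take $G^\circ$ to be simple, which implicitly excludes such pairs. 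Each diagonal entry is $-\delta_i + \mu'_{ii}(x_i) < 0$ on $\rrpp^n$, so the diagonal is strictly negative. Hence every partial derivative is signed on all of $\rrpp^n \times \uInt$, as \Cref{rmk:Cramer-at-ss-dynamical} requires. The loop-free J-graph of the system is $G^\circ$ with the additional node $u$ and the edge $u \posedge x_1$.

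Second, I will verify the standing assumptions. The Hurwitz hypothesis gives \eqref{eq:assumption-A2} at every positive equilibrium, and by \Cref{sec:A2-implies-A1} this yields \eqref{eq:assumption-A1}. I will also need positive equilibria to exist for each $u$. Solutions stay bounded, because the Hill terms are bounded and the degradation is linear. In addition, $\dot x_i > 0$ whenever $x_i$ is small, since $\gamma_i > 0$ and all added Hill terms are positive. A Brouwer fixed-point argument on a box then gives a positive equilibrium. In any case, the theorem concerns a given stable biphasic branch, so existence of that branch is implicitly assumed.

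Finally, I will translate the motifs. Every I/O path from $u$ to $x_n$ has the form $u \posedge x_1$ followed by a path from $x_1$ to $x_n$ in $G^\circ$, and it has the same sign as that path. Hence an IFFL from $u$ to $x_n$ corresponds to an IFFL from $x_1$ to $x_n$ in $G^\circ$. Two different I/O paths must differ after $x_1$, since $x_1$ is the only out-neighbour of $u$. The remaining cases translate in the same way. Being downstream of $u$ is equivalent to being downstream of $x_1$. A PFBL never contains $u$, and it is vertex-disjoint from an I/O path exactly when it is disjoint from the corresponding $x_1$-to-$x_n$ path together with $x_1$. Applying \Cref{cor:main} then gives the conclusion.

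The main obstacle I expect is bookkeeping in the entry computation rather than anything deep. One point is the degenerate case where $\gamma_1'$ vanishes on an interval. The other is the possibility of parallel edges of opposite sign, which would make an entry of $\mm J$ unsigned. I will handle both by noting they are excluded by the modeling setup, with $G^\circ$ a simple signed graph.
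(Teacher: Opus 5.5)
Your proposal is correct and takes essentially the same route as the paper: compute the Jacobian, observe that its J-graph is $G^\circ$ with negative self-loops on every node plus the input node and the single edge $u \to x_1$, and then invoke \Cref{cor:main}. The extra checks you supply are details the paper leaves implicit: deriving \eqref{eq:assumption-A1} from the Hurwitz hypothesis via \Cref{sec:A2-implies-A1}, forward invariance and existence of positive equilibria, the caveats on $\gamma_1'$ and parallel edges, and translating motifs for input $u$ into motifs for input $x_1$.
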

\begin{proof}
    The proof mirrors that of \Cref{thm:T-cell}. 
    Let $\mm \Xi \in \rr^{n\times n}$ be the adjacency matrix of $G^\circ$; it has non-positive diagonal because there is no auto-activation. The model for the GRN is 
    \eq{
        \dot{x}_i & = \gamma_i  - \delta_i x_i + \sum_{j \colon [\mm \Xi]_{i,j} =1 } \lambda_{ji}(x_j) +  \sum_{j \colon [\mm \Xi]_{i,j} = -1 } \mu_{ji}(x_j), \qquad i=1, 2,\ldots, n, 
    }
    where $\gamma_1 = \gamma_1(u)$, each $\lambda_{ji}$ is of the form \eqref{eq:activation_fcn-GRN}, and each $\mu_{ji}$ is of the form \eqref{eq:inhibition_fcn-GRN}. The off-diagonal entries of the Jacobian matrix $\mm J$ are
    \eq{
        [\mm J]_{j,i} = \left\{ 
            \begin{array}{cl}
                \lambda_{ji}' > 0 & \text{ if }  [\mm \Xi]_{j,i} =1, \\
                \mu_{ji}' < 0 & \text{ if }  [\mm \Xi]_{j,i} = -1,  \\
                0 & \text{ otherwise,} 
            \end{array}
            \right. 
    }
    with $\lambda_{ji}'$ and $\mu_{ji}'$ as in \eqref{eq:derivative-fcn-GRN}.   
    Each diagonal entry is either $[\mm J]_{i,i} = -\delta_i$ or $[\mm J]_{i,i} = -\delta_i + \mu'_{ii} < 0$, depending on whether gene $i$ auto-inhibits itself. Thus, after adding negative self-loops to each node of $G^\circ$, a node for $u$, and an edge $u \posedge x_1$, we obtain the J-graph of the system. Note any I/O path must go through $x_1$, so the claim then follows from \Cref{cor:main}. 
\end{proof}

\subsection{A consumer-resource model}
\label{sec:application-CRM}

Finally, we consider an ecological consumer-resource model (CRM) of microbes in a chemostat with cross-feeding. At a positive equilibrium, some entries on the diagonal of the Jacobian matrix are negative, while others are zero. As such, \Cref{cor:main} does not apply; nonetheless, the matrix is signed, so invoking \Cref{lem:nonzeroterms-subgraphs}, we give an alternative proof of the Competitive Exclusion Principle.

A CRM consists of resources and consumers, where a consumer uses a subset of resources to grow and reproduce, and it may produce a resource as waste. A common representation of a CRM is the \emph{community network} (\Cref{fig:ex-CRM-CN}), which describes the flow of energy or biomass through the community. Formally, a community network is a bipartite graph $\mc G = (V_x, V_y, E, E_\mrm{x'fd})$, where nodes in $V_x$ and $V_y$ represent resource and consumer species respectively. Consumer $y_j$ uses resource $x_i$ to grow if and only if $(x_i, y_j) \in E$; we assume every consumer must be able to consume at least one resource. Consumer $y_j$ produces $x_i$ as waste if and only if $(y_j, x_i) \in E_\mrm{x'fd}$. We assume that $E \cap E_\mrm{x'fd} = \emptyset$, i.e., a consumer cannot use its own waste as a resource. To distinguish between the two types of edges, we draw edges in $E_\mrm{x'fd}$ as a directed edge from consumer to resource, and those in $E$ as undirected.

\begin{figure}[h!t]
\centering 
\begin{subfigure}[b]{0.45\textwidth}
    \centering\figdraftboxwidth
    \raisebox{0cm}{
    \includegraphics{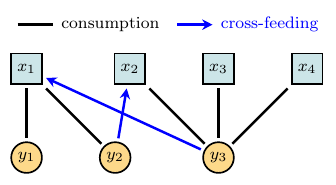}}
    \caption{}
    \label{fig:ex-CRM-CN}
\end{subfigure}
\begin{subfigure}[b]{0.45\textwidth}
    \centering\figdraftboxwidth
    \raisebox{0cm}{
    \includegraphics{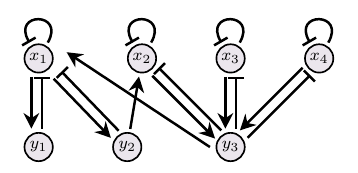}}
    \caption{}
    \label{fig:ex-CRM-Jgraph}
\end{subfigure}
\caption{
    (a) The community network with cross-feeding from \Cref{ex:CRM}. (b) The corresponding J-graph.
}
\label{fig:ex-CRM}
\end{figure}

We assume a community of $N$ microbial consumers and $M$ abiotic resources in a well-mixed chemostat. In other words, resources are supplied at constant rates, and microbes grow using a Holling-type functional response. Furthermore, we assume $M \geq N$, which is necessary for the existence of a non-degenerate coexistence equilibrium (\Cref{rmk:application-CRM-Jac}).  

In general, the ODE system takes the form 
\begin{alignat}{2}
    \dot{x}_i &= s_i - \mu_i x_i - \sum_{j=1}^N \alpha_{ij}   f_{ij}(x_i) y_j + \sum_{j=1}^N \gamma_{ij} y_j, \qquad &i =1,\ldots, M, \label{eq:application-CRM-ODE-R}\\ 
    \dot{y}_j &= y_j \left( \sum_{i=1}^M \beta_{ij} f_{ij}(x_i) - \delta_j \right) , \qquad &j = 1,\ldots, N, \label{eq:application-CRM-ODE-C}
\end{alignat}
with growth and decay/death rates  
\begin{gather*}
    \vv s = (s_i)_i \in \rrpp^M, \quad 
    \vv\mu = (\mu_i)_i \in \rrpp^M, \quad 
    \vv\delta = (\delta_j)_j \in \rrpp^N, 
\end{gather*}
consumption rates and response coefficients 
\begin{gather*}
    \tilde{\mm A} = [\alpha_{ij}]_{i,j}, \quad 
    \tilde{\mm B} = [\beta_{ij}]_{i,j} \in \rrp^{M\times N}
\end{gather*}
respectively. The parameter $\alpha_{ij}$, when positive, is the consumption rate of $x_i$ by $y_j$; the response coefficient $\beta_{ij} > 0$ is $\alpha_{ij}$ scaled by the conversion efficiency. We assume that $\alpha_{ij} > 0$ if and only if $\beta_{ij} > 0$, so the interaction-encoding matrices $\tilde{\mm A}$, $\tilde{\mm B}$ share the same sign pattern, which is the adjacency matrix of $(V_x, V_y, E)$.  
The cross-feeding terms are encoded in 
\begin{gather*}
    \tilde{\mm \Gamma} =[\gamma_{ij}]_{i,j} \in \rrp^{M\times N}, 
\end{gather*}
where we assume  $\gamma_{ij} > 0$ implies $\alpha_{ij} = \beta_{ij} = 0$. 

Finally, let $\tilde{\mm F} = [f_{ij}]_{i,j}$, where $f_{ij} = f_{ij}(x_i)$ is any positive and strictly increasing function whenever $\alpha_{ij} > 0$ (and $f_{ij}(x_i) = 0$ otherwise). A common choice is a Holling-type functional response, i.e., for each $\alpha_{ij} > 0$, we can choose between 
\eq{
    f_{ij}(x) = \left\{ 
        \begin{array}{cl}
            x & \text{Type-I}, \\[8pt]
            \ds \frac{x}{(V_{ij}+r)} & \text{Type-II}, \\[10pt]  
            \ds \frac{x^{m_{ij}}}{(V_{ij}+x^{m_{ij}})} & \text{Type-III} ,
        \end{array}
        \right. 
}
where $V_{ij}$, $m_{ij} > 0$, and $x = x_i$.  
If only Type-I responses are used, we have a generalized Lotka--Volterra system with first-order interactions, also known the MacArthur model in the context of CRMs~\cite{MacArthur1970}. Type-II and Type-III responses account for a limit on how fast a consumer can process a resource. We denote the matrix of derivatives $[f_{ij}'(x_i)]_{i,j}$ by $\tilde{\mm F}'$. 

\Cref{eq:application-CRM-ODE-R,eq:application-CRM-ODE-C} can be written in vector form as 
    \eqn{\label{eq:application-CRM-ODE-matrix}
    \begin{split}
        \frac{d\vv x}{dt}  &= \vv s - \diag(\vv\mu) \vv x - \left( \tilde{\mm A} \odot \tilde{\mm F} - \tilde{\mm \Gamma}\right)\vv y
        \\
        \frac{d\vv y}{dt} &= \left( (\tilde{\mm B}\odot \tilde{\mm F})^\top - \diag(\vv \delta) \right)\vv y . 
    \end{split}
    }

\begin{ex}
\label{ex:CRM}
    Consider the community network in \Cref{fig:ex-CRM-CN}. Assuming a mix of Type-I and Type-II functional responses, a CRM is 
    \eq{
        \begin{array}{ll}
        \ds \dot x_1 = s_1  - \mu_1 x_1 -  \frac{\alpha_{11}x_1y_1}{V_{11} + x_1}  -   \frac{\alpha_{12}x_1y_2}{V_{12} + x_1} + \gamma_1 y_3  \\[8pt] 
        \ds \dot x_2 = s_2 - \mu_2 x_2  - \alpha_{23} x_2 y_3 + \gamma_2 y_2 \\ 
        \ds \dot x_i = s_i - \mu_i x_i  - \alpha_{i3} x_i y_3 ,
            &\qquad i = 3,4\\[4pt]  
        \ds \dot y_j = y_j \left( \frac{\beta_{1j}x_1}{V_{1j} + x_1}  - \delta_j \right),
            & \qquad j = 1,2 \\[8pt]  
        \ds \dot y_3 = y_3 \left(\beta_{23} x_2 + \beta_{33} x_3 + \beta_{43} x_4  - \delta_3 \right) ,
        \end{array} 
    }
    where all parameters are positive. 
\end{ex}

For the system \eqref{eq:application-CRM-ODE-matrix}, we assume the existence of a non-degenerate coexistence equilibrium $(\vv x^*, \vv y^*) \in \rrpp^{M+N}$. Let $\odot$ denote the component-wise multiplication of two matrices of the same size. Then a coexistence equilibrium is a positive solution to 
\eqn{\label{eq:application-CRM-ss}  
    \left[ \tilde{\mm B} \odot \tilde{\mm F}(\vv x^*) \right]^\top \ones  = \vv \delta 
    \quad \text{and} \quad 
    \vv s - \diag(\vv \mu) \vv x^*  =   \left[  \tilde{\mm A}\odot \tilde{\mm F}(\vv x^*) - \tilde{\mm \Gamma} \right] \vv y^* , 
} 
where  $\ones$ is a constant vector of ones.

\subsubsection{The Jacobian matrix and J-graph at a coexistence equilibrium}
\label{sec:application-CRM_Jac}

In \Cref{sec:application-CRM-appendix}, we show that the Jacobian matrix $\mm J^*$ of \eqref{eq:application-CRM-ODE-matrix} at a coexistence equilibrium $(\vv x^*, \vv y^*) \in \rrpp^{M+N}$ is 
\eqn{\label{eq:application-CRM-Jac}
    \mm J^*
    = \begin{pNiceArray}{r|cr}
            - \mm D & \tilde{\mm \Gamma} - \mm I_1 \\ 
            \hline 
            \mm I_2 & \mm 0
        \end{pNiceArray}, 
}
where $\mm D \in \rr^{M\times M}$ is a positive diagonal matrix, and both $\mm I_1$, $\mm I_2^\top \in \rr^{M\times N}$ have the same sign pattern as $\tilde{\mm A}$. Furthermore, $\mm J^*$ is a signed matrix because $[\tilde{\mm \Gamma}]_{i,j} \neq  0$ implies $[\mm{I}_1]_{i,j} = 0$. See \cite{ButlerODwyer2018} for similar derivations. 

\begin{rmk}
\label{rmk:application-CRM-Jac}
    By the Schur's complement, $\det(\mm J^*) = (-1)^{M+N}\det(\mm D) \det(\mm I_2 \mm D^{-1}(\mm I_1-\tilde{\mm \Gamma}))$, so $\mm J^*$ is non-singular if and only if $\mm I_2 \mm D^{-1}(\mm I_1 - \tilde{\mm \Gamma})$ is non-singular, which is impossible if $N > M$. 
    Hence the $N \leq M$ assumption. While this inequality is often referred to as the Competitive Exclusion Principle (CEP), we prove below the \lq\lq extended\rq\rq\ version, which is an exclusion principle on \emph{subsets} of consumers. 
\end{rmk}

\begin{rmk}
\label{rmk:application-CRM-Jgraph}
    The J-graph of $\mm J^*$ can be derived directly from the community network $\mc G$.
    From \eqref{eq:application-CRM-Jac}, we know each resource node $x_i$ has a negative self-loop, whereas the consumer node $y_j$ does not. From the structure of $\mm I_1$, $\mm I_2$, we add an NFBL $x_i \pnFBL y_j$ in the J-graph for every consumption edge $(x_i, y_j) \in E$. From $\tilde{\mm \Gamma}$, we add a positive edge $y_j \posedge x_i$ for every cross-feeding edge $(y_j, x_i) \in E_\mrm{x'fd}$. See \Cref{fig:ex-CRM} for an example. Since $E \cap E_\mrm{x'fd} = \emptyset$, the resulting J-graph is still bipartite. 
\end{rmk}

The J-graph is defined at a coexistence equilibrium, and that some diagonal entries are zero. The hypotheses in \Cref{cor:main} are not met; therefore, we turn to \Cref{lem:nonzeroterms-subgraphs,lem:Cramer}.

\subsubsection{Competitive Exclusion Principle}
\label{sec:application-CRM-levin1970}

We give an alternative proof of the Competitive Exclusion Principle (CEP)~\cite{Levin1970}: that there cannot be a linearly stable coexistence equilibrium if there exists a subset of $r$ consumers that rely on strictly fewer than $r$ resources. Our proof is a straightforward application of \Cref{lem:nonzeroterms-subgraphs}\ref{lem:nonzeroterms-subgraphs-a} and Hall's Marriage Theorem (see, e.g., \cite[Theorem~5.4]{Saoub2021}). 

To state Hall's Theorem, we recall some terms from graph theory. 
Let $G = (V_x, V_y, E)$ be a bipartite graph with $|V_y| \leq |V_x|$.  
Given any subset $Y\subseteq V_y$, its set of neighbors is 
\eq{
    \mc N_{G}(Y) = \{ x_i \in V_x \colon  \exists y_j \in Y\text{ such that } (x_i, y_j) \in E \}. 
}
A \df{matching} on $G$ is a vertex-disjoint collection of edges; it is a \df{perfect matching on $V_y$} if it covers every node in $V_y$. A perfect matching on $G$ is one that covers the smaller of the vertex sets. Hall's Marriage Theorem states that a perfect matching on $V_y$ exists if and only if for every subset $Y \subseteq V_y$, we have $|Y| \leq |\mc N_{G}(Y)|$.

\begin{thm}
\label{thm:CRM-nondeg}
    Consider the CRM \eqref{eq:application-CRM-ODE-matrix} with $M$ resources and $N$ consumers where $M \geq N$. Suppose there exist $r$ consumers that rely on strictly fewer than $r$ resources.
    Then no non-degenerate coexistence equilibrium exists. In particular, no linearly stable coexistence equilibrium exists. 
\end{thm}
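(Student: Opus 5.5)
We show that $\det(\mm J^*)$ vanishes identically as a polynomial in the symbolic entries, so $\mm J^*$ is singular at every coexistence equilibrium. Since a linearly stable equilibrium has a Hurwitz, hence non-singular, Jacobian, the second claim then follows from the first. The tool is \Cref{lem:nonzeroterms-subgraphs}\ref{lem:nonzeroterms-subgraphs-a}: the non-zero terms of $\det(\mm J^*)$ are in bijection with the disjoint cycle covers of the J-graph $G$ of $\mm J^*$. It therefore suffices to show that $G$ has \emph{no} disjoint cycle cover. Then the Leibniz expansion has no non-zero terms, and $\det(\mm J^*) = 0$ for any values of the entries.

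The structure of $G$ comes from \Cref{rmk:application-CRM-Jgraph}. It is bipartite between resources $V_x$ and consumers $V_y$. Each resource carries a negative self-loop, while consumers carry none. The edges $x_i \to y_j$ exist exactly when $(x_i,y_j)\in E$, i.e.\ from the nonzero pattern of $\mm I_2$. The edges $y_j \to x_i$ come from consumption (via $\mm I_1$) or cross-feeding (via $\tilde{\mm\Gamma}$).

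Now take any disjoint cycle cover $H$. Each consumer $y_j$ has no self-loop, so it lies on a cycle of length at least two. Because $G$ is bipartite, the predecessor of $y_j$ on that cycle is some resource $x_i$ with edge $x_i \to y_j$, which forces $(x_i,y_j)\in E$. Vertex-disjointness in $H$ means each vertex has exactly one incoming edge. The map $y_j \mapsto$ (predecessor of $y_j$ in $H$) is therefore injective from $V_y$ into $V_x$, and it uses only consumption edges. In other words, $H$ induces a perfect matching on $V_y$ in the consumption graph $(V_x,V_y,E)$.

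By Hall's Marriage Theorem, such a matching exists only if $|Y|\le|\mc N(Y)|$ for every $Y\subseteq V_y$. The hypothesis supplies a set $Y$ of $r$ consumers with $|\mc N(Y)|<r$, so no such matching exists, hence no disjoint cycle cover exists, and $\det(\mm J^*)\equiv 0$.

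The step needing the most care is the claim that the incoming edge to a consumer must be a consumption edge rather than a cross-feeding one. This holds because the cross-feeding edges, coming from $\tilde{\mm\Gamma}$, are oriented only $y\to x$. The only entries in the consumer rows of $\mm J^*$ are those of the block $\mm I_2$, whose pattern is that of $\tilde{\mm A}^\top$.

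We also need the J-graph to reflect the true nonzero pattern. Entries of $\mm I_2$ that are nonzero symbolically could vanish numerically, but a vanishing entry only removes edges, which cannot create a cycle cover. The conclusion therefore holds for the actual matrix, not just the symbolic one. A second point is that the bijection in \Cref{lem:nonzeroterms-subgraphs} was stated for signed symbolic matrices; $\mm J^*$ is signed, as noted after \eqref{eq:application-CRM-Jac}, and the absence of any term makes cancellation issues moot.
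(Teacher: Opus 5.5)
Your proof is correct and follows the paper's argument. A disjoint cycle cover of the J-graph, which must exist whenever $\det(\mm J^*)\not\equiv 0$ by \Cref{lem:nonzeroterms-subgraphs}\ref{lem:nonzeroterms-subgraphs-a}, pairs each consumer with its predecessor resource along a consumption edge; this is a perfect matching on $V_y$, which Hall's theorem rules out. Arguing directly rather than by contradiction is only cosmetic. One small wording fix: each vertex having in-degree one in $H$ makes the predecessor map well defined, but its injectivity comes from each resource having out-degree one in $H$.
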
 
\begin{proof} 
    Suppose the system has a non-degenerate positive equilibrium $(\vv x^*, \vv y^*)$. Let $\mbc J^*$ be the signed symbolic matrix consistent with $\mm J^*$. Since $\det(\mm J^*) \neq 0$, the polynomial $\det(\mbc J^*)$ is not identically zero; by \Cref{lem:nonzeroterms-subgraphs}\ref{lem:nonzeroterms-subgraphs-a}, the J-graph of $\mbc J^*$ has a disjoint cycle cover $H$. 
    The consumer nodes $y_j$ must be covered by FBLs in $H$ since they have no self-loops (\Cref{rmk:application-CRM-Jgraph}). 

    We claim that the disjoint cycle cover $H$ induces a perfect matching on the set of consumers. To be more precise, 
    let $G = (V_x, V_y, E)$ be the consumption part of the community network, without $E_\mrm{x'fd}$. The bipartite structure of the J-graph implies each FBL in $H$ induces a matching on $G$. For example, the FBL $(x_{i_1} \posedge y_{j_1} \neutedge \cdots \posedge y_{j_{\ell}} \neutedge x_{i_1})$ induces the matching $\{(x_{i_k}, y_{j_k}) \colon k=1,\ldots, \ell\}$. Since $H$ is vertex-disjoint cycle cover, it induces a \emph{perfect} matching on $G$.  
    By Hall's Marriage Theorem, $|Y| \leq |\mc N_{G}(Y)|$ for every subset of consumers $Y$. Therefore, the existence of a non-degenerate positive equilibrium implies there cannot be a subset of $r$ consumers that rely on strictly fewer than $r$ resources. 
\end{proof}

It can be argued that \Cref{thm:CRM-nondeg} is really characterizing structural degeneracy of a positive equilibrium. By Hall's Theorem, that $|Y| \leq |\mc N_{G}(Y)|$ for any $Y \subseteq V_y$ is equivalent to the existence of a perfect matching on $G$. This in turn is equivalent to the existence of a disjoint cycle cover for the J-graph%
    \footnote{Since the perfect matching on $G$ only uses the interaction edges $E$, each edge corresponds to the NFBL in the J-graph. This, along with self-loops on the resource nodes not covered, defines a disjoint cycle cover.
    }%
, i.e., $\det(\mbc J^*) \not\equiv 0$ as a polynomial. Nonetheless, this does not imply the Jacobian matrix of the system is non-singular at an equilibrium, when accounting for the entries of the matrix. 

Note that \Cref{thm:CRM-nondeg} does not depend on the cross-feeding edges. In other words, for the system \eqref{eq:application-CRM-ODE-matrix}, we cannot support more species by adding more cross-feeding. However, it has been observed that cross-feeding interactions may affect stability~\cite{ButlerODwyer2018}.   

\begin{ex}
\label{ex:CRM-2}
    Consider the CRM from \Cref{fig:ex-CRM}. Both $y_1$ and $y_2$ use $x_1$ as their only resource, so no perfect matching exists on the set of consumers.  By \Cref{thm:CRM-nondeg}, this system cannot have a linearly stable coexistence equilibrium. This can also be seen from the J-graph in \Cref{fig:ex-CRM-Jgraph}. A disjoint cycle cover must include $y_1 \posedge x_1$, but then there cannot be a vertex-disjoint cycle involving $y_2$. Therefore, $\det(\mm J^*) = 0$ at any coexistence equilibrium. 

    The mapping between perfect matchings and disjoint cycle covers breaks down if we include $E_\mrm{x'fd}$ in the bipartite graph to which we applied Hall's Theorem. Essentially we are limited by the edges from resources to consumers.  
    For example, there is no way to complete the perfect matching $\{ (y_1,x_1), (y_2,x_2), (y_3,x_3)\}$ to a disjoint cycle cover of the J-graph without including either $x_2 \posedge y_1$ or $x_3 \posedge y_1$, neither of which is an edge. 
\end{ex}

\section{Discussion}
\label{sec:conclusion}

In this paper, we investigated necessary motifs underlying non-monotonic steady state responses. We proved that a stable non-monotonic response requires either an IFFL or a combination of a PFBL and an NFBL, where the PFBL is vertex-disjoint from an input--output path and the NFBL lies on a walk from input to output.  

The proof relies on two mathematical theories. The first is an application of monotone systems theory~\cite[Theorem~1]{AscensaoDattaHanciogluSontagEtAl2016}, which gives a graph-theoretic condition for when the transient dynamics is not monotonic. The second is the algebraic and combinatorial theory of signed matrices~\cite{BrualdiShader1995}. 
The idea of analyzing a steady state response based on the graph of its Jacobian matrix is certainly not new. 
\Cref{prop:main-alg} echoes the theory by Golubitsky and his colleagues on quasi-adaptation (infinitesimal homeostasis)~\cite{WangHuangAntoneliGolubitsky2021, MadeiraAntoneli2022, GolubitskyStewart2018, GolubitskyStewartAntoneliHuangWang2020}. Their framework starts with the matrix, while ours is focused on the graph. In \Cref{sec:app-trimgraph}, we discuss the connection in more depth, and provide a graph-theoretic interpretation to the objects introduced in their works.

We applied our results to three biological contexts, each of which typically admits a representation of the system as a graph.
Regulatory networks used in signaling pathways and gene expressions are essentially the graph of the Jacobian matrix $\mm J$. Therefore, motifs like feedback and feedforward loops have precise mathematical interpretation in terms of $\mm J$. By way of an example, we found a model of T-cell activation from \cite{LeverLimKrugerNguyenEtAl2016} that exhibits biphasic response without an IFFL. Finally, for consumer-resource models, the standard representation as a community network is closely related to the graph of the Jacobian matrix. We combined our framework and a combinatorial theorem to give an alternative proof of the Competitive Exclusion Principle.

The main strength of the framework presented here is that the conditions we obtained are structural; we do not need to solve for the steady state explicitly, nor do we need parameters. However, this strength is also a major limitation. The necessary conditions we obtained are far from tight, especially when the Jacobian matrix $\mm J$ has additional structure; see e.g., \Cref{ex:PA-suff_notNecessary}. 
Different classes of parametric ODE systems can be approached differently.  

First, as we saw in \Cref{sec:application}, when signed directed graphs naturally appear in biological applications, the entries in $\mm J$ tend to be algebraically independent, in the sense that every entry has its own private parameter. In such cases, the framework developed in this paper can be used to design systems that exhibit biphasic response. The term(s) with the wrong sign in the non-principal minor can guide choices of parameter values that enable biphasic response. 

Second, when system comes from a reaction network, the system can be factored $\dot{\xx} = \mm N \vv \nu(\xx; u)$, where the stoichiometric matrix $\mm N$ is a fixed numeric matrix. Thus, $\mm J = \mm N D_{\xx} \vv\nu$. In this case, one can leverage the structure given by this factorization. Perfect adaptation and sign-sensitivity have been explored in this context~\cite{OkadaMochizuki2016, OkadaMochizuki2017, GuptaKhammash2022, Feliu2019}. A more refined version of the machinery developed here may lead to a definition of \lq\lq feedforward loops\rq\rq\ for reaction network models.  
In conclusion, sign- and zero-sensitivity have been investigated by multiple groups, whether for general ODEs or those that arise from reaction networks. As the importance of non-monotonic responses in biology has been recognized, a mathematical theory of non-monotonic responses can prove to be useful for clinicians and biologists.

\section*{Acknowledgments} 

EDS was partially supported by grants AFOSR FA9550-22-1-0316 and NSF/DMS-2052455.

\section*{Disclosure}

Parameters for \Cref{ex:appl-Tcell-noIFFL,ex:GRN-4node} were found by Claude Opus 5.5. The models were subsequently verified by numerical simulation, which generated the figures.

\putbib[cit.bib]
\end{bibunit}


\clearpage 
\setcounter{page}{0}
\appendix
\crefalias{section}{sisection}
\crefalias{subsection}{sisubsection}
\crefalias{subsubsection}{sisubsubsection}
\crefalias{figure}{sifig}   
\crefalias{table}{sitab}  
\crefalias{equation}{sieq}  

\renewcommand{\thesection}{\Alph{section}} 
\renewcommand{\thesubsection}{\thesection.\arabic{subsection}}
\renewcommand{\thesubsubsection}{\thesubsection.\arabic{subsubsection}}

\renewcommand\thefigure{\thesection-\arabic{figure}}
\setcounter{figure}{0}   
\renewcommand\thetable{\thesection-\arabic{table}}
\setcounter{table}{0} 
\renewcommand\theequation{\thesection-\arabic{equation}}
\setcounter{equation}{0} 
	\setcounter{page}{1}
    
\section*{\Huge Appendices}
{\section*{Contents}
    \setlength{\parskip}{-2pt}
    
    \startcontents[sections]
    \printcontents[sections]{l}{1}{\setcounter{tocdepth}{2}}
}
\vspace{-5pt}
\noindent\rule{\textwidth}{0.5pt}


\begin{bibunit}[alpha]

\section{Proof that \eqref{eq:assumption-A2}  implies \eqref{eq:assumption-A1}}
\label{sec:A2-implies-A1}

\setcounter{equation}{0} 

We prove that assumption \eqref{eq:assumption-A2} implies the more technical \eqref{eq:assumption-A1}. 
Under \eqref{eq:assumption-A2}, the Jacobian matrix $\mm J_{u_0}$ evaluated at a steady state $(\xx_0; u_0)$ is Hurwitz.   
By \cite[Theorem~18]{Sontag1998}, there exists a symmetric, positive-definite matrix $\mm P = \mm P(u_0)$ such that 
\eqn{ 
\label{eq:P_mtx-Lyap_eqn}
    \mm J_{u_0}^\top \mm P + \mm P \mm J_{u_0} = -\mm{Id},  
}
where explicitly, $\mm P = \int_0^\infty e^{t\mm J_{u_0}^\top} e^{t\mm J_{u_0}}  \, dt$.
Furthermore, the quadratic function $V_{u_0}(\xx) = \xx^\top \mm P \xx$ is a Lyapunov function for the linearized system $\dot{\xx} = \mm J_{u_0} \xx$. 

Since \eqref{eq:assumption-A2} implies \eqref{eq:assumption-A0}, let $\xx^*$ be a branch of steady state, as in \eqref{eq:assumption-A0}, and let $\tilde{\xx} = \xx - \xx^*(u)$. 
To prove that $V_{u}(\xx) = \tilde{\xx}^\top \mm P \tilde{\xx}$ is a Lyapunov function for $\dot{\xx} = \vv f(\xx; u)$, we first derive a matrix-valued function based on \eqref{eq:P_mtx-Lyap_eqn}.

\begin{lem}
\label{app-lem:inequality-M}
    Let $\mm P$ be the symmetric and positive-definite solution to \eqref{eq:P_mtx-Lyap_eqn}. 
    Consider the function $\mm M \colon \uInt \to M_n(\rr)$, $u \mapsto \mm J_u^\top \mm P + \mm P \mm J_{u}$. Then  
    \begin{enumerate}[label={(\alph*)}]
        \item $\mm M$ is continuous and $\mm M(u_0) = -\mm{Id}$,  and 
        \item for any $\beta \in (0,1)$, there exists $\delta > 0$ such that for all $|u-u_0| < \delta$, the matrix $\mm M(u) + \beta \mm{Id}$ is symmetric and negative-definite, with $\xx^\top \mm M(u) \xx \leq -\beta \norm{\xx}^2$ for all $\xx \in \rr^n$. 
    \end{enumerate}
\end{lem}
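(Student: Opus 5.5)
Part (a) is immediate from the hypotheses. The Jacobian $\mm J_u = D_{\xx}\vv f(\xx^*(u);u)$ is the composition of the continuous map $u\mapsto(\xx^*(u);u)$, where $\xx^*$ is $C^1$ by \eqref{eq:assumption-A0}, with the continuous map $D_{\xx}\vv f$, since $\vv f$ is $C^1$. Hence $u\mapsto \mm J_u$ is continuous. Because $\mm P$ is a fixed matrix, $\mm M(u)=\mm J_u^\top\mm P+\mm P\mm J_u$ is continuous as well. Evaluating at $u_0$, where $\mm J_{u_0}$ is the Hurwitz Jacobian at the steady state $\xx^*(u_0)=\xx_0$, and using \eqref{eq:P_mtx-Lyap_eqn} gives $\mm M(u_0)=-\mm{Id}$. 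Symmetry of $\mm M(u)$ for every $u$ follows from $\mm P=\mm P^\top$, since $(\mm J_u^\top\mm P)^\top=\mm P\mm J_u$.

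For part (b), I will use a perturbation argument on the quadratic form. Fix $\beta\in(0,1)$ and set $\eta=1-\beta>0$. By continuity of $\mm M$ in operator norm, there is $\delta>0$ such that $\norm{\mm M(u)-\mm M(u_0)}<\eta$ whenever $|u-u_0|<\delta$, after shrinking $\delta$ so that this interval lies inside $\uInt_0$. For such $u$ and any $\xx\in\rr^n$, Cauchy--Schwarz gives
\eq{
\xx^\top\mm M(u)\xx = -\norm{\xx}^2 + \xx^\top(\mm M(u)+\mm{Id})\xx \le -\norm{\xx}^2+\eta\norm{\xx}^2 = -\beta\norm{\xx}^2 .
}
This is exactly the claimed inequality.

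For the definiteness of $\mm M(u)+\beta\mm{Id}$, I will use that for symmetric matrices the map to the largest eigenvalue is $1$-Lipschitz in operator norm (Weyl). Then $\lambda_{\max}(\mm M(u))<-1+\eta=-\beta$, so every eigenvalue of $\mm M(u)+\beta\mm{Id}$ is strictly negative. Alternatively, I can take the slightly stronger choice $\norm{\mm M(u)-\mm M(u_0)}<\eta/2$, which gives the strict bound $\xx^\top(\mm M(u)+\beta\mm{Id})\xx\le -(\eta/2)\norm{\xx}^2<0$ for $\xx\neq\vv 0$.

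I do not expect a real obstacle. The only point needing care is making sure $\mm J_u$ is evaluated along the branch $\xx^*$, so that continuity in $u$ genuinely follows from the $C^1$ regularity of both $\vv f$ and $\xx^*$. Strictness of negative-definiteness requires the margin argument just described.
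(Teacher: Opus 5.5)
Your proof is correct and follows essentially the same route as the paper. Both arguments use continuity of $u\mapsto\mm M(u)$ and $\mm M(u_0)=-\mm{Id}$ to get a uniform margin below $-\beta$ near $u_0$. Where the paper appeals to ``continuity of eigenvalues'' of the symmetric matrices $\mm M(u)$, you make that step quantitative with the operator-norm/Weyl bound. You are also slightly more careful than the paper in two places: continuity of $\mm J_u$ needs continuity of the branch $\xx^*$, and $\delta$ must keep $u$ inside $\uInt_0$.
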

\begin{proof}
    The map $u \mapsto \mm M(u)$ is continuous because $\vv f(\xx; u)$ is $C^1$ and matrix operations are continuous. By \eqref{eq:P_mtx-Lyap_eqn}, $\mm M(u_0) = -\mm{Id}$. 
    Since $\mm P^\top = \mm P$, both $\mm M(u)$ and $\mm M(u) + \beta \mm{Id}$ are symmetric for all $u$ and $\beta$. 
    Thus, the eigenvalues of $\mm M(u)$ are all real. Fix any $\beta \in (0,1)$. 
    Because $\sigma(\mm M(u_0)) = \{ -1\}$, by the continuity of eigenvalues, there exists $\delta > 0$ such that for all $|u-u_0|<\delta$, every eigenvalue $\lambda$ of $\mm M(u)$ satisfies $\lambda < -\beta$. Consequently, $\xx^\top \mm M(u) \xx \leq -\beta \norm{\xx}^2$ for all $\xx \in \rr^n$, and so $\mm M(u) + \beta \mm{Id}$ is negative-definite. 
\end{proof}

Let $\uInt_0 = (u_0-\delta, u_0+\delta)$ with $\delta$ from \Cref{app-lem:inequality-M}, which depends on $u_0$ and $\beta$. By moving to a smaller open interval if necessary, we may assume  that $(u_0-\delta, u_0+\delta)$ is contained within the interval $\uInt_0$ from \eqref{eq:assumption-A0}. Let $\uInt_1 = [u_0 - \delta/2, u_0 + \delta/2]$, and let $r > 0$ be such that 
$K = \{ (\xx; u) \colon u \in \uInt_1 , \norm{\xx - \xx^*(u)} \leq r \}$ is a compact subset of $\X \times \uInt$. 

Next, we linearize $\vv f(\xx; u)$ about $(\xx^*(u); u)$. Let 
\eq{
    \omega(s) = \sup\left\{ 
        \norm{\mm J(\xx; u) - \mm J(\xx'; u') }
        \colon (\xx, u), (\xx', u') \in K \text{ and } \norm{\xx-\xx'} + |u-u'| \leq s
        \right\}     
}
be the modulus of continuity of $\mm J$. Note that $\omega$ is non-decreasing and $\lim_{s\to 0^+} \omega(s) = 0$.
Then $\vv f(\xx; u) = \mm J_u \tilde{\xx} + R(\xx; u)$, where the remainder term satisfies  
\eqn{\label{eq:linearize-error}
    \norm{R(\xx; u)} \leq \omega(\norm{\tilde{\xx}})  \norm{\tilde{\xx}} . 
} 

Finally, for any $(\xx, u) \in K$, define $V_u(\xx) = \tilde{\xx}^\top \mm P \tilde{\xx}$ where $\tilde{\xx} = \xx - \xx^*(u)$.

\begin{prop}
\label{app-prop:Lyapunov}
    There exists $\epsilon > 0$, independent of $u \in \uInt_1$, such that $V_u(\xx)$ is a Lyapunov function for $\dot{\xx} = \vv f(\xx; u)$ on $B_\epsilon(\xx^*(u))$. 
    That is, for any $u \in \uInt_1$ and all $\vv a \in B_\epsilon(\xx^*(u))$, the solution $\phi(t, \vv a, u)$ to the initial value problem $\dot{\xx} = \vv f(\xx; u)$, $\xx(0) = \vv a$ satisfies $\lim_{t\to\infty} \phi(t, \vv a, u) = \xx^*(u)$. 
\end{prop}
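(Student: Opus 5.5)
We follow the standard Lyapunov argument for the linearization, keeping every constant uniform in $u \in \uInt_1$. Fix $\beta \in (0,1)$ and the corresponding $\delta$ from \Cref{app-lem:inequality-M}. For $(\xx,u) \in K$, write $\tilde\xx = \xx - \xx^*(u)$ and note that $\vv f(\xx^*(u);u) = \vv 0$. Differentiating along trajectories with $u$ held fixed gives
\eq{
    \dot V_u(\xx) = 2\tilde\xx^\top \mm P \vv f(\xx;u) = \tilde\xx^\top \mm M(u)\tilde\xx + 2\tilde\xx^\top \mm P R(\xx;u).
}
By \Cref{app-lem:inequality-M}, the first term is at most $-\beta\norm{\tilde\xx}^2$. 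By \eqref{eq:linearize-error}, the second term is at most $2\norm{\mm P}\,\omega(\norm{\tilde\xx})\norm{\tilde\xx}^2$.

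Since $\omega(s)\to 0$ as $s \to 0^+$ and $\omega$ does not depend on $u$, we choose $\rho \in (0,r]$ with $2\norm{\mm P}\omega(\rho) \le \beta/2$. Then for all $u \in \uInt_1$ and all $\norm{\tilde\xx}\le\rho$,
\eq{
    \dot V_u \le -\tfrac{\beta}{2}\norm{\tilde\xx}^2 \le -\tfrac{\beta}{2\lambda_{\max}(\mm P)} V_u .
}

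Next we trap trajectories in this region. Sublevel sets of $V_u$ are ellipsoids centered at $\xx^*(u)$ with the same shape for every $u$, because $\mm P$ is fixed. Let $c = \lambda_{\min}(\mm P)\rho^2$, so that $\{V_u < c\}$ lies in the closed ball of radius $\rho$ about $\xx^*(u)$. Set $\epsilon = \rho\sqrt{\lambda_{\min}(\mm P)/\lambda_{\max}(\mm P)}$; then $B_\epsilon(\xx^*(u)) \subseteq \{V_u < c\}$. This $\epsilon$ is independent of $u$, and it gives the required continuity (indeed constancy) of $u \mapsto \epsilon(u)$ on $\uInt_1$.

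For $\vv a$ in this ball, the solution stays in $\{V_u<c\}$ as long as it exists, because $V_u$ is non-increasing there. That sublevel set is contained in the compact set $K$, so the solution exists for all $t \ge 0$. Grönwall's inequality then gives $V_u(\phi(t,\vv a,u)) \le V_u(\vv a)e^{-\beta t/(2\lambda_{\max}(\mm P))}$, and hence $\phi(t,\vv a,u)\to\xx^*(u)$ exponentially.

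The main obstacle is justifying \eqref{eq:linearize-error} uniformly. We will use the integral form of the mean value theorem:
\eq{
    R(\xx;u) = \int_0^1 \big(\mm J(\xx^*(u)+\theta\tilde\xx;u)-\mm J(\xx^*(u);u)\big)\tilde\xx\,d\theta .
}
This requires the whole segment from $\xx^*(u)$ to $\xx$ to lie in $K$, which holds because $K$ is a union of balls in $\xx$ of radius $r$ about $\xx^*(u)$. A secondary point to check is that $\mm J_u$ in $\mm M(u)$ is evaluated at $\xx^*(u)$. The continuity of $\mm M$ then follows from $\xx^*$ being $C^1$ and $\mm J$ being continuous, which is what \Cref{app-lem:inequality-M} uses.
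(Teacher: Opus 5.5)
Your proposal is correct and follows essentially the same route as the paper. Both arguments use the quadratic Lyapunov function $V_u$ built from the fixed matrix $\mm P$, the estimate $\dot V_u \le -\beta\|\tilde{\xx}\|^2 + 2\|\mm P\|\,\omega(\|\tilde{\xx}\|)\,\|\tilde{\xx}\|^2$, and the same $u$-independent $\epsilon$ obtained by shrinking the radius by $\sqrt{\lambda_{\min}/\lambda_{\max}}$. You also spell out details the paper leaves implicit: keeping the radius at most $r$ so the remainder bound on $K$ applies, forward invariance of the sublevel set, global existence, exponential decay via the $\beta/2$ margin, and the integral mean-value justification of the remainder estimate.
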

\begin{proof} 
    Since $D_{\xx} V_u(\xx) 
    = 2 \, \mm P \tilde{\xx}$, the time derivative of $V_u$ along a trajectory $\xx(t)$ of the system \eqref{eq:syst_main} is  
    \eq{
        \frac{d}{dt} V_u (\xx(t))
        &= \left( 2\mm P \tilde{\xx}  \right)^\top \left( 
            \mm J_u \tilde{\xx} + R(\xx; u)
            \right) 
        \\&= \tilde{\xx}^\top \mm J_u^\top \mm P \tilde{\xx} + \tilde{\xx}^\top \mm P \mm J_u \tilde{\xx} + 2 \tilde{\xx}^\top \mm P R(\xx; u)
        \\&= \tilde{\xx}^\top \mm M(u) \tilde{\xx} + 2 \tilde{\xx}^\top \mm P R(\xx;u) 
        \\&\leq - \beta \norm{\tilde{\xx}}^2 + 2 \norm{\mm P} \omega(\norm{\tilde{\xx}}) \norm{\tilde{\xx}}^2 , 
    }
    where the inequality holds for all $(\xx, u) \in K$ by \Cref{app-lem:inequality-M} and \eqref{eq:linearize-error}. For the fixed $\beta \in (0,1)$ from \Cref{app-lem:inequality-M}, choose $\epsilon' > 0$ such that $2 \norm{\mm P} \omega(\epsilon')  < \beta $. 
    Then for any $0 < \norm{\tilde{\xx}} < \epsilon'$, we have  $\dot V_u (\xx(t)) \leq \norm{\tilde{\xx}}^2\left(  2 \norm{\mm P} \omega(\epsilon') - \beta \right) < 0$, with equality if and only if $\tilde{\xx} = \vv 0$, i.e.,  $\xx(t) = \xx^*(u)$. 
    Finally, let $\epsilon = \epsilon' \sqrt{\lambda_{\min}/\lambda_{\max}}$, where $\lambda_{\min}$ and $\lambda_{\max}$ are the smallest and largest eigenvalues of $\mm P$ respectively. Then for any $\vv a \in B_\epsilon(\xx^*(u))$, we have $\phi(t, \vv a, u) \to \xx^*(u)$. 
\end{proof}

Note that $\epsilon$ depends on $u_0$ and $\uInt_1$, via $\mm P$, $\beta$, $\omega$. In particular, $\epsilon$ is constant on $\uInt_1$, so $u \mapsto \epsilon(u)$ is constant (thus continuous) on $\uInt_1$. Then assumption~\eqref{eq:assumption-A1} is true where $\uInt_0$ is replaced with, say, the interior of $\uInt_1$.

\section{Proof that \eqref{eq:assumption-A1} implies convergence from nearby steady states}
\label{sec:A1-implies-convergence}

\setcounter{equation}{0} 

We prove that \eqref{eq:assumption-A1} is sufficient to guarantee that the system will still converge to the steady state curve following a small perturbation in $u$. 

\begin{prop}
\label{prop:A1-implies-convergence}
    Consider the system \eqref{eq:syst_main} under assumption \eqref{eq:assumption-A1}. Let $B_{\epsilon(u)}(\xx^*(u)) \subseteq \X$ be the open ball around $\xx^*(u)$ from \eqref{eq:assumption-A1} that lies within the basin of attraction of $\xx^*(u)$. 
    For any $u_0 \in \uInt_0$, there exists $\delta > 0$ such that for all $\abs{u - u_0} < \delta$, we have $\xx^*(u_0) \in B_{\epsilon(u)}(\xx^*(u))$. 
\end{prop}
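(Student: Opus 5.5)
The plan is a continuity argument. It uses two facts: the branch $\xx^*$ is continuous, being $C^1$ on $\uInt_0$ by \eqref{eq:assumption-A0}, and the radius map $u \mapsto \epsilon(u)$ is continuous and strictly positive by \eqref{eq:assumption-A1}. The statement asks for $\norm{\xx^*(u_0) - \xx^*(u)} < \epsilon(u)$ whenever $\abs{u-u_0} < \delta$. So I will find $\delta$ making the left side small while keeping the right side bounded away from zero.

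First, fix $u_0 \in \uInt_0$ and set $\epsilon_0 = \epsilon(u_0) > 0$. Continuity of $\epsilon$ at $u_0$ gives $\delta_1 > 0$ such that $\abs{u-u_0}<\delta_1$ implies $\abs{\epsilon(u) - \epsilon_0} < \epsilon_0/2$, and hence $\epsilon(u) > \epsilon_0/2$. Since $\uInt_0$ is open, I will shrink $\delta_1$ if necessary so that $(u_0-\delta_1, u_0+\delta_1) \subseteq \uInt_0$, which keeps both $\xx^*(u)$ and $\epsilon(u)$ defined.

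Second, continuity of $\xx^*$ at $u_0$ gives $\delta_2 > 0$ such that $\abs{u-u_0}<\delta_2$ implies $\norm{\xx^*(u) - \xx^*(u_0)} < \epsilon_0/2$. Taking $\delta = \min(\delta_1,\delta_2)$, every $u$ with $\abs{u-u_0}<\delta$ satisfies
\[
\norm{\xx^*(u_0) - \xx^*(u)} < \tfrac{\epsilon_0}{2} < \epsilon(u),
\]
that is, $\xx^*(u_0) \in B_{\epsilon(u)}(\xx^*(u))$, which is what we want.

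There is no real obstacle here. The only point needing care is that both $\epsilon(\cdot)$ and $\xx^*(\cdot)$ must refer to the same branch on the same interval $\uInt_0$, and that $\epsilon(u_0)>0$. This is exactly why \eqref{eq:assumption-A1} demands continuity of $u\mapsto\epsilon(u)$. Without it, the radii could shrink to zero faster than $\xx^*(u)\to\xx^*(u_0)$, and the inclusion could fail. Since the ball $B_{\epsilon(u)}(\xx^*(u))$ is assumed to lie in the basin of attraction of $\xx^*(u)$, we then also get $\phi(t,\xx^*(u_0),u)\to\xx^*(u)$, as used in the proof of \Cref{prop:main-stable_biphasic}.
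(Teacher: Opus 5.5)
Your proof is correct and matches the paper's own argument: use continuity of $\epsilon(\cdot)$ at $u_0$ to get $\epsilon(u) > \epsilon(u_0)/2$, then use continuity of $\xx^*$ to get $\norm{\xx^*(u)-\xx^*(u_0)} < \epsilon(u_0)/2$, and take the smaller $\delta$. Your extra step of shrinking $\delta_1$ so the interval stays inside $\uInt_0$ is harmless; the paper leaves it implicit.
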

\begin{proof}
    By the continuity of $\epsilon(u)$, there exists $\delta_1 > 0$ such that whenever $\abs{u-u_0} < \delta_1$, we have $\abs{\epsilon(u) - \epsilon(u_0)} < \frac{\epsilon(u_0)}{2}$, which implies $\epsilon(u) > \frac{\epsilon(u_0)}{2}$. By \eqref{eq:assumption-A0}, $\xx^*(u)$ is continuous, so there exists $0 < \delta \leq \delta_1$ such that whenever $|u-u_0| < \delta$, we have $\norm{\xx^*(u)-\xx^*(u_0)} < \frac{\epsilon(u_0)}{2} < \epsilon(u)$. Hence, $\xx^*(u_0) \in B_{\epsilon(u)}(\xx^*(u))$. 
\end{proof}

\section{Detailed analysis of \Cref{ex:PFBL-not_suff}} 
\label{app:ex-PFBL-not_suff}
\newcommand{\parsgnA}{{\alpha}}
\newcommand{\parsgnB}{{\beta}}

\setcounter{equation}{0} 

The following calculations were done in Mathematica. We first solve for the steady states. From $\dot x_1 = 0$, we have $x_1^* = u$. From $\dot x_4 = 0$, we get $x_4^* = u^2 + x_3^*$. Similarly, from $\dot x_3 = 0$, we have $(x_2^*)^2 = x_3^*/a$, where $a > 0$. Substitute this relation into $\dot x_2 = 0$ yields $x_3^2 + \left( 1 - 2/a \right) x_3 + u = 0$, which gives  
\eq{
    x_3^* = \left( \frac{1}{a} - \frac{1}{2} \right) \pm \sqrt{\left( \frac{1}{a} - \frac{1}{2} \right)^2 - u}. 
}
A real solution for $x_3^*$ exists if and only if $u \leq u_c = \left(\frac{1}{a} - \frac{1}{2} \right)^2 $. Since  $0 <  a  < 2$, $x_3^*$ yields two possible branches of equilibria. Because $x_2^* = \pm \sqrt{ x_3^*/a}$, there are four branches of equilibria in total, two of which are negative. 

In summary, the real steady states are given by 
\eq{
    x_1^*( u)  &= u \\
    x_2^*(u) &= \parsgnA \sqrt{\frac{2-a}{2a^2} + \frac{\parsgnB}{a} \sqrt{ \left( \frac{2-a}{2a} \right)^2 - u }}  \\
    x_3^*(u) &= \frac{2-a}{2a} + \parsgnB \sqrt{ \left( \frac{2-a}{2a} \right)^2 - u } \\
    x_4^*(u) &= u^2 + \frac{2-a}{2a} + \parsgnB \sqrt{ \left( \frac{2-a}{2a} \right)^2 - u }, 
}
where $\parsgnA$, $\parsgnB \in \{-1, 1\}$. The positive branches correspond to $\parsgnA = 1$. See \Cref{fig:app-ex6-ss} for plots of steady states for $a \in\{ 0.6, 0.8, 1.0\}$.

\begin{figure}[h]
\centering 
    \includegraphics[height=2in]{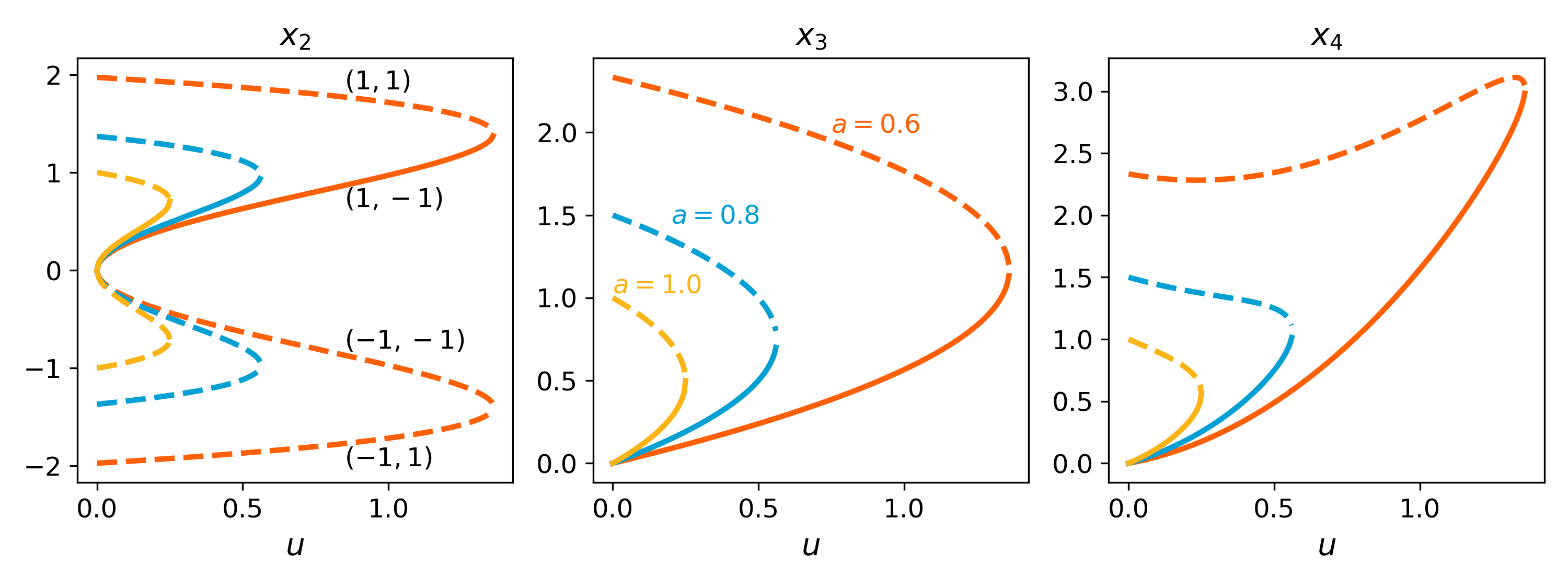} 
\caption{
    The four steady states branches for the system in \Cref{ex:PFBL-not_suff}. In the plot for $x_2^*$, from top to bottom, the branches correspond to $(\alpha,\beta) = (1,1)$, $(1,-1)$, $(-1,-1)$, and $(-1,1)$. The positive branches correspond to $\alpha = 1$, while the stable branch corresponds to $(\alpha,\beta) = (1,-1)$. 
}
\label{fig:app-ex6-ss}
\end{figure}

The Jacobian matrix of the system is 
\eq{
\mm J = 
    \begin{pmatrix}
     -1 & 0 & 0 & 0 \\
     1 & -4 x_2 & 2 x_3+1 & 0 \\
     0 & 2 a x_2 & -1 & 0 \\
     2 x_1 & 0 & 1 & -1 \\
    \end{pmatrix}, 
}
which has fixed sign structure on the domain $\rrpp^4$, and for $0 < a < 2$. Direct computation yields the eigenvalues:
\eq{
    \lambda_1 &= \lambda_2 = -1 \\ 
    \lambda_3 &= \frac{1}{2}\left(-(1+4 x_{2})  -\sqrt{(1+4x_2)^2 - 8x_2(2-2ax_3-a)}\right)  \\
    \lambda_4 &= \frac{1}{2}\left(-(1+4 x_{2}) + \sqrt{ (1+4x_2)^2 - 8x_2(2-2ax_3-a) }\right). 
}
At steady state, we have 
\eq{
    - 8x_2^*(2-2ax_3^*-a)
    &= 16 a x_2^* \left( x_3^* - \frac{2-a}{2a} \right) 
    \\&= 16 a x_2^* \left( \beta \sqrt{\left( \frac{2-a}{2a} \right)^2 - u} \right) 
    \\&= 16  \alpha \beta \cdot a |x_2^*|  \sqrt{u_c - u }, 
}
where $\alpha, \beta \in \{ -1, 1\}$ specify the branch. 
Thus, 
\eq{
    2 \lambda_{4} &= -(1+4x_2^*) + \sqrt{(1+4x_2^*)^2 + 16 \alpha \beta \cdot a |x_2^*| \sqrt{u_c - u} } .
}

Focusing on the positive equilibria ($\alpha = 1$), the branch corresponding to $(\alpha,\beta) = (1,1)$ is always unstable, since 
\eq{
    2 \lambda_4 = -(1+4x_2^*) + \underbrace{\sqrt{(1+4x_2^*)^2 + 16 a |x_2^*| \sqrt{u_c - u} }}_{>-(1+4x_2^*)} > 0. 
}
For the remaining positive branch, $(\alpha,\beta) = (1,-1)$, we have 
\eq{
    2 \lambda_{3,4} = \underbrace{-(1+4x_2^*)}_{<-1} \pm \sqrt{(1+4x_2^*)^2 - 16a  |x_2^*| \sqrt{u_c - u} } . 
}
If $(1+4x_2^*)^2 - 16 a |x_2^*| \sqrt{u_c - u} \geq 0$, then $\lambda_{3,4} < 0$. If $(1+4x_2^*)^2 - 16 a |x_2^*| \sqrt{u_c - u} < 0$, then $\mrm{Re}(\lambda_{3,4}) = -(1+4x_2^*) < 0$. Therefore,  only one of the two positive branches is stable. 

Finally, to see that biphasic response is possible only when $a < \sqrt{3}-1$, we use the steady state parametrization of $x_4^*$ with $\alpha = \beta = 1$. The derivative $d_u x_4^* = 2u - (2\sqrt{u_c - u})^{-1}$, which has a root if and only if $u_c -u = 1/(16u^2)$. For $u_c$ sufficiently small, there is no positive solution, but for larger $u_c$, there are two positive solutions. 

Finding the smallest $u_c$ for which $u_c - u = 1/(16 u^2)$ has a real positive solution is a calculus problem. Let $u_m$ be the unique $u_c$ for which $u_c - u$ meets $1/(16 u^2)$ tangentially; we claim $u_m = 3/4$. From intersection of the two curves and their tangents, we have two equations:  $u_m -u = 1/(16 u^2)$ and $-1 = -2/(16u^3)$. This simplifies to $u_m - u = \frac{u}{2} \frac{1}{16 u^2} = \frac{u}{2}$. In other words, the intersection point between the two curves occurs at $u = 2u_m/3$. Back substituting, we have 
\eq{
    &\qquad u_m -u = \frac{1}{16 u^2}
    \\ \iff & \qquad 
    \frac{u_m}{3} = \frac{1}{16} \left( \frac{9}{16u_m^2} \right)
    \\ \iff & \qquad 
    u_m^3 = \left( \frac{3}{4} \right)^3. 
}
Therefore, biphasic response is possible when $u_c > 3/4$. Finally, from $u_m = \left( \frac{2-a}{2a}\right)^2$, we can solve for the smallest positive $a = \frac{2}{\sqrt 3 + 1} = \sqrt{3} - 1$.

\section{Condition for biphasic response in \Cref{ex:CRN-rateconstant}} 
\label{app:ex-CRN-rateconstant}

\setcounter{equation}{0} 

We prove that $4 \kk_1 \kk_5 > 3 \kk_2^2$ is a necessary and sufficient condition for the system in \Cref{ex:CRN-rateconstant} to exhibit stable biphasic response. 
Stability follows from the positive-determinant and negative-trace conditions of the Jacobian matrix. We focus on the biphasic response. 

In the main text, we showed that the sensitivity changes sign at some $u > 0$ if and only if $\kk_2 - 2\kk_5 x_1^*(u)$ changes sign as $u$ increases. 
Setting $\dot x_1 =  \kk_1 - \kk_2 x_1 - \kk_5 x_1^2 - u x_1$ to zero, the unique positive steady state is given by 
\eq{
    2\kk_5 x_1^* = -(\kk_2 + u) + \sqrt{(\kk_2 + u)^2 + 4\kk_5\kk_1} . 
}
Then $\kk_2 - 2\kk_5 x_1^*(u)$ has a root if and only if 
\eq{\begin{array}{lrl}
    & \kk_2 &=  -(u+\kk_2) + \sqrt{(u+\kk_2)^2 + 4 \kk_1 \kk_5} \\
    \iff & \qquad u + 2 \kk_2 &= \sqrt{(u+\kk_2)^2 + 4 \kk_1 \kk_5} \\ 
    \iff &\qquad (u + 2 \kk_2)^2 &= (u+\kk_2)^2 + 4 \kk_1 \kk_5 \\
    \iff & \qquad u^2 + 4 u\kk_2 + 4 \kk_2^2 &= u^2 + 2 u\kk_2 + \kk_2^2 + 4 \kk_1 \kk_5 \\
    \iff & \qquad 2u \kk_2 + 3 \kk_2^2 &= 4 \kk_1 \kk_5 \\
    \iff & \qquad u &= \frac{4 \kk_1 \kk_5 - 3 \kk_2^2}{2\kk_2 }, 
\end{array}
}
which yields a positive solution $u$ if and only if $4 \kk_1 \kk_5 > 3 \kk_2^2$. Moreover, as $u$ increases past this threshold, the sensitivity changes sign.

\section{General kinetics for regulations in a signaling pathway}
\label{sec-app:rmk:FFL-motif}

    In \Cref{rmk:FFL-motif}, we claimed that an IFFL according to our definition must contain within it an IFFL with disjoint interior. We provide an argument for this here. 
    
    Let $(P, Q)$ be an IFFL, and let $S$ be the set of common nodes between them. Order $S = \{s_0 \coloneqq u, s_1, s_2,\ldots, s_k, s_f \coloneqq x_n\}$ according to $P$, and similarly let $\{s_0, s_{\sigma(1)},\ldots, s_{\sigma(k)}, s_f\}$ be the ordering with respect to $Q$ for some fixed $\sigma \in S_k$. 
    Denote by $<_P$ and $<_Q$ the orderings of $S$ according to $P$ and $Q$ respectively. 
    There are two cases to consider.
    
    In the case $\sigma = \mrm{id}$, i.e., the same ordering, the IFFL $(P,Q)$ can be decomposed into segments, each of which is either a path, or a feedforward loop with disjoint interior from $s_i$ to $s_{i+1}$. Since $P \neq Q$, there must be at least one feedforward loop, and because $\sgn(P) \neq \sgn(Q)$, an odd number of these must be an IFFL. 

    Now suppose the orders on $S$ are not the same. If for some $\ell_0 \in [k]$, we have $\sigma(i) = i$ for all $i \leq \ell_0$, then we may treat the initial segment from $s_1$ to $s_{\ell_0}$ similar to the previous case. If there is an IFFL, we are done; otherwise, proceed to the next step. Without loss of generality, let $\ell_0$ be the largest such index. 

    To simplify notation, treat $\ell_0$ as the input node, i.e., we relabel $s_{\ell_0+i}$ as $s_i$ for every $i \geq 0$. Then $\sigma(1) \neq 1$. Decompose each I/O path into three parts: $P \coloneqq P_1 \cup P_2 \cup P_3$, where $P_1 = (s_0 \neutedge s_1)$, $P_2$ is the path from $s_1$ to $s_{\sigma(1)}$ along $P$, and $P_3$ is the path from $s_{\sigma(1)}$ to $s_f$ along $P$. Similarly decompose $Q \coloneqq Q_1 \cup Q_2 \cup Q_3$. See \Cref{fig:rmk:FFL-motif}\hyperref[fig:rmk:FFL-motif]{(a-b)} for a cartoon. 

    Then there are at least two other I/O paths defined by $P' \coloneqq Q_1 \cup P_3$ and $Q' \coloneqq P_1 \cup Q_3$ (\Cref{fig:rmk:FFL-motif}\hyperref[fig:rmk:FFL-motif]{(c-d)}). Note that these are paths, since the nodes along $Q_3$ must appear along $P_2 \cup P_3$; similarly the nodes of $P_3$ appear on the complement of $Q_1$. Therefore, besides $(P,Q)$, we have four feedforward loops. 

    Consider the feedforward loop defined by $P$ and $P'$ (\Cref{fig:rmk:FFL-motif}\hyperref[fig:rmk:FFL-motif]{(a,c)}). If $P_1 \cup P_2$ and $Q_1$ form an IFFL, then we have an IFFL with disjoint interior, and we are done. If this forms a CFFL, then $\sgn(P_1) \sgn(P_2) = \sgn(Q_1)$. Substituting this into $\sgn(P) = - \sgn(Q)$, we have $\sgn(P_3) = -\sgn(Q_2) \sgn(Q_3)$, and so $P_3$ and $Q_2 \cup Q_3$ form an IFFL from $s_{\sigma(1)}$ and $s_f$ (\Cref{fig:rmk:FFL-motif}\hyperref[fig:rmk:FFL-motif]{(b-c)}. By induction on the number of nodes on the IFFL, we may repeat this argument until we find an IFFL with disjoint interior. 

\begin{figure}[th]
\centering 
\begin{subfigure}[b]{0.45\textwidth}
    \centering\figdraftboxwidth
    \begin{tikzpicture}
    \tikzset{decoration={snake,amplitude=.4mm}}
    \begin{scope}[shift={(0,0)}] 
        \node at (0,1.755) {};
        \node (0) at (0.5,0) {$\bullet$}; 
            \node at (0.west) {\footnotesize $s_0$}; 
        \node (1) at (2.5,1) {$\bullet$}; 
            \node at (1.north) {\footnotesize $s_1$}; 
        \node (1b) at (2.5,-1) {$\bullet$}; 
            \node at (1b.south) {\footnotesize $s_{\sigma(1)}$}; 
        \node (f) at (5.125,0) {$\bullet$}; 
            \node at (f.east) {\footnotesize $s_f$}; 
        \draw[posEdge] (0) to[bend left]  node [midway, above] {\footnotesize $P_1$} (1);
        \draw[posEdge, opacity=0.085] (0) to[bend right]  node [midway, below] {\footnotesize $Q_1$} (1b); 
        \draw[posEdge, decorate, opacity=0.085] (1) to[bend left]  node [midway, above] {\footnotesize $Q_3$} (f);
        \draw[posEdge, decorate] (1b) to[bend right]  node [midway, below] {\footnotesize $P_3$} (f); 
        \draw[posEdge, decorate] (1) to[bend left]  node [midway, right] {\footnotesize $P_2$} (1b);
        \draw[posEdge, decorate, opacity=0.085] (1b) to[bend left]  node [midway, left] {\footnotesize $Q_2$} (1);
    \end{scope}
    \end{tikzpicture}
        \caption{}
    \label{fig:rmk:FFL-motif-a}
\end{subfigure}
\begin{subfigure}[b]{0.45\textwidth}
    \centering\figdraftboxwidth
    \begin{tikzpicture}
    \tikzset{decoration={snake,amplitude=.4mm}}
    \begin{scope}[shift={(0,0)}] 
        \node at (0,1.755) {};
        \node (0) at (0.5,0) {$\bullet$}; 
            \node at (0.west) {\footnotesize $s_0$}; 
        \node (1) at (2.5,1) {$\bullet$}; 
            \node at (1.north) {\footnotesize $s_1$}; 
        \node (1b) at (2.5,-1) {$\bullet$}; 
            \node at (1b.south) {\footnotesize $s_{\sigma(1)}$}; 
        \node (f) at (5.125,0) {$\bullet$}; 
            \node at (f.east) {\footnotesize $s_f$}; 
        \draw[posEdge, opacity=0.085] (0) to[bend left]  node [midway, above] {\footnotesize $P_1$} (1);
        \draw[posEdge] (0) to[bend right]  node [midway, below] {\footnotesize $Q_1$} (1b); 
        \draw[posEdge, decorate] (1) to[bend left]  node [midway, above] {\footnotesize $Q_3$} (f);
        \draw[posEdge, decorate, opacity=0.085] (1b) to[bend right]  node [midway, above] {\footnotesize $P_3$} (f); 
        \draw[posEdge, decorate, opacity=0.085] (1) to[bend left]  node [midway, right] {\footnotesize $P_2$} (1b);
        \draw[posEdge, decorate] (1b) to[bend left]  node [midway, left] {\footnotesize $Q_2$} (1);
    \end{scope}
    \end{tikzpicture}
        \caption{}
    \label{fig:rmk:FFL-motif-b}
\end{subfigure}

\begin{subfigure}[b]{0.45\textwidth}
    \centering\figdraftboxwidth
    \begin{tikzpicture}
    \tikzset{decoration={snake,amplitude=.4mm}}
    \begin{scope}[shift={(0,0)}] 
        \node at (0,1.755) {};
        \node (0) at (0.5,0) {$\bullet$}; 
            \node at (0.west) {\footnotesize $s_0$}; 
        \node[opacity=0.085] (1) at (2.5,1) {$\bullet$}; 
            \node[opacity=0.085] at (1.north) {\footnotesize $s_1$}; 
        \node (1b) at (2.5,-1) {$\bullet$}; 
            \node at (1b.south) {\footnotesize $s_{\sigma(1)}$}; 
        \node (f) at (5.125,0) {$\bullet$}; 
            \node at (f.east) {\footnotesize $s_f$}; 
        \draw[posEdge, opacity=0.085] (0) to[bend left]  node [midway, above] {\footnotesize $P_1$} (1);
        \draw[posEdge] (0) to[bend right]  node [midway, below] {\footnotesize $Q_1$} (1b); 
        \draw[posEdge, decorate, opacity=0.085] (1) to[bend left]  node [midway, above] {\footnotesize $Q_3$} (f);
        \draw[posEdge, decorate] (1b) to[bend right]  node [midway, below] {\footnotesize $P_3$} (f); 
        \draw[posEdge, decorate, opacity=0.085] (1) to[bend left]  node [midway, right] {\footnotesize $P_2$} (1b);
        \draw[posEdge, decorate, opacity=0.085] (1b) to[bend left]  node [midway, left] {\footnotesize $Q_2$} (1);
    \end{scope}
    \end{tikzpicture}
        \caption{}
    \label{fig:rmk:FFL-motif-c}
\end{subfigure}
\begin{subfigure}[b]{0.45\textwidth}
    \centering\figdraftboxwidth
    \begin{tikzpicture}
    \tikzset{decoration={snake,amplitude=.4mm}}
    \begin{scope}[shift={(0,0)}] 
        \node at (0,1.755) {};
        \node (0) at (0.5,0) {$\bullet$}; 
            \node at (0.west) {\footnotesize $s_0$}; 
        \node (1) at (2.5,1) {$\bullet$}; 
            \node at (1.north) {\footnotesize $s_1$}; 
        \node[opacity=0.085] (1b) at (2.5,-1) {$\bullet$}; 
            \node[opacity=0.085] at (1b.south) {\footnotesize $s_{\sigma(1)}$}; 
        \node (f) at (5.125,0) {$\bullet$}; 
            \node at (f.east) {\footnotesize $s_f$}; 
        \draw[posEdge] (0) to[bend left]  node [midway, above] {\footnotesize $P_1$} (1);
        \draw[posEdge, opacity=0.085] (0) to[bend right]  node [midway, below] {\footnotesize $Q_1$} (1b); 
        \draw[posEdge, decorate] (1) to[bend left]  node [midway, above] {\footnotesize $Q_3$} (f);
        \draw[posEdge, decorate, opacity=0.085] (1b) to[bend right]  node [midway, below] {\footnotesize $P_3$} (f); 
        \draw[posEdge, decorate, opacity=0.085] (1) to[bend left]  node [midway, right] {\footnotesize $P_2$} (1b);
        \draw[posEdge, decorate, opacity=0.085] (1b) to[bend left]  node [midway, left] {\footnotesize $Q_2$} (1);
    \end{scope}
    \end{tikzpicture}
        \caption{}
    \label{fig:rmk:FFL-motif-d}
\end{subfigure}

\caption{
    Given an IFFL $P \coloneqq \bigcup_i P_i$ and $Q \coloneqq \bigcup_i Q_i$, there are at least four I/O paths. Wavy edges represent a path of unspecified length.
}
\label{fig:rmk:FFL-motif}
\end{figure}

\section{General kinetics for regulations in a signaling pathway}
\label{sec:application-Tcell-appendix}

\setcounter{equation}{0} 

In \Cref{sec:application-Tcell}, we remarked that \Cref{thm:T-cell} holds not only under mass-action kinetics, but also under more general kinetics assumption, including the Hill-type (\lq\lq saturating\rq\rq) kinetics briefly considered in the appendix of \cite{LeverLimKrugerNguyenEtAl2016}. We give a version of \Cref{thm:T-cell} below for any kinetic rate function satisfying \eqref{eq:activation_fcn-appendix}. \Cref{rmk:application-T-cell-minimal} discusses the core mathematical structure underlying \hyperref[thm:T-cell]{Theorems~\ref{thm:T-cell}} and \ref{thm:T-cell-appendix}.

Concerning the kinetic proofreading model of the T-cell receptor (TCR), we keep it as described in \Cref{sec:application-Tcell-KPR}, including the assumption of mass-action kinetics. 
Recall that at the globally stable equilibrium, we have 
\eq{
    [\cf{C}_i]^* = A_i \underbrace{\left( L_T + \rho + \alpha - \sqrt{(L_T +  \rho + \alpha)^2 - 4 \rho L_T} \right)}_{U(L_T)} ,
}
which is a monotonically increasing function of $L_T$.

For the intracellular regulatory network $G^\circ$, the only change we make concerns the rate functions associated to each regulatory edge. Consider any function $\lambda \colon \rrp^2 \to \rrp$ satisfying 
    \eqn{\label{eq:activation_fcn-appendix}
        \lambda(0, y) = \lambda(x, 0) = 0, \quad 
        \deriv{\lambda}{1} \coloneqq \frac{\partial \lambda}{\partial x} > 0, 
        \quad 
        \deriv{\lambda}{2} \coloneqq  \frac{\partial \lambda}{\partial y} > 0. 
    }
Under mass-action kinetics, $\lambda(x, y) = k xy$ for some constant $k>0$ satisfies the above property. However, we can also assume Michaelis--Menten or Hill-type kinetics: 
    \eq{
        \lambda(x, y) = \frac{k x^m y}{V + x^m}  
    }
with parameters $k, V, m > 0$. 

The reactions to be included with each regulatory edge remain unchanged. For example, an activation edge $\cf{X} \posedge \cf{Y}$ refers to the reaction $\cf{X}_p + \cf{Y} \FRxn{} \cf{X}_p + \cf{Y}_p$, with rate function $\lambda([\cf{X}_p], [\cf{Y}])$, whereas an inhibitory edge $\cf{X} \negedge \cf{Y}$ represents the reaction $\cf{X}_p + \cf{Y}_p \FRxn{} \cf{X}_p + \cf{Y}$, with rate function $\lambda([\cf{X}_p], [\cf{Y}_p])$. 

Note the only difference between activation and inhibition lies in the second variable of $\lambda$: $[\cf{Y}]$ for activation and $[\cf{Y}_p]$ for inhibition. 
The function $\lambda$ is activatory in the Hill-type sense. However, the \lq\lq inhibition\rq\rq\ for $\cf{X} \negedge \cf{Y}$ comes from the reaction\footnote{For models that use inhibitory Hill-type functions, see \Cref{sec:application-GRN} instead.} ($\cf{X}_p$ catalyzing the conversion away from $\cf{Y}_p$). This is reflected in its rate function after substituting in the conservation law: $\lambda([\cf{X}_p], [\cf{Y}]) = \lambda([\cf{X}_p], Y_T - [\cf{Y}_p])$, where $Y_T$ is the total amount of $\cf{Y}$.

Just as it was done in \cite{LeverLimKrugerNguyenEtAl2016}, we allow a mix of non-saturating mass-action kinetics and saturating regulatory functions.

\subsection{Analogue of \Cref{thm:T-cell} under general kinetics assumption}
\label{sec:application-Tcell-appendix-gen-kinetics}

\begin{thm}
\label{thm:T-cell-appendix}
    Consider a model of T-cell activation, consisting of (i) the kinetic proofreading model for a TCR with $N$ sites; (ii) an intracellular signaling pathway with $n$ intracellular signaling proteins, and (iii) regulation of some intracellular proteins by TCR complexes. Assume regulation by the TCR complexes happen in a consistent manner. For $G^\circ$, assume any general kinetics satisfying \eqref{eq:activation_fcn-appendix}. 
    Suppose the amount of cytokines released by the T-cell is proportional to $\cf{X}_{np}$, and the Jacobian matrix of the reduced system is Hurwitz at any positive equilibrium.

    If $[\cf{X}_{np}]$ exhibits stable biphasic response with respect to the total amount of antigen $L_T$, then the signaling network $G^\circ$ must contain either an IFFL, or a combination of an NFBL and a PFBL, where the PFBL is vertex-disjoint from an I/O path, and the NFBL is downstream of $\cf{C}_j$ for some $j$ and upstream of $\cf{X}_n$. 
\end{thm}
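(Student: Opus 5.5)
The plan is to mirror the proof of \Cref{thm:T-cell}: reduce the whole model to a system satisfying the hypotheses of \Cref{cor:main}, and check that its loop-free J-graph is $\tilde{G}^\circ$. As before, we assume the kinetic proofreading subsystem sits at its globally attracting equilibrium~\cite{Sontag2001}. We substitute $[\cf{C}_j] = A_j U(L_T)$, where $U$ is strictly increasing, and use the conservation laws $[\cf{X}_i]+[\cf{X}_{ip}] = T_i$ to reduce to the variables $x_i = [\cf{X}_{ip}]$ on a box $\prod_i (0,T_i)$. Each regulatory edge contributes one term to $\dot x_i$:
\begin{itemize}
\item an activation $\cf{X}_j \posedge \cf{X}_i$ contributes $+\lambda_{ji}(x_j, T_i - x_i)$;
\item an inhibition $\cf{X}_j \negedge \cf{X}_i$ contributes $-\lambda_{ji}(x_j, x_i)$;
\item a TCR regulation contributes the same kind of term with $A_kU(L_T)$ in the first slot.
\end{itemize}
The spontaneous conversion contributes $\gamma_i(T_i - x_i) - \delta_i x_i$.

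Next I compute the augmented Jacobian entrywise using \eqref{eq:activation_fcn-appendix}. For an activating edge, $\partial_{x_j}$ gives $\deriv{\lambda_{ji}}{1} > 0$ and $\partial_{x_i}$ gives $-\deriv{\lambda_{ji}}{2} < 0$. For an inhibiting edge, $\partial_{x_j}$ gives $-\deriv{\lambda_{ji}}{1} < 0$ and $\partial_{x_i}$ gives $-\deriv{\lambda_{ji}}{2} < 0$. Hence every off-diagonal entry has exactly the sign of the corresponding edge of $G^\circ$, and is identically zero when there is no edge; no auto-regulation means no two edges land in the same off-diagonal slot.

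Every diagonal entry equals $-\gamma_i - \delta_i$ minus a sum of positive terms, so it is strictly negative. The $L_T$-column entry of row $i$ is $U'(L_T)\sum_k A_k \deriv{\lambda_{ki}}{1}$ with a sign fixed by the row of $\mm \Gamma$; consistency of TCR regulation rules out mixed signs. So merging the $\cf{C}_j$ into one input node gives exactly $\tilde{G}^\circ$ as the loop-free J-graph. Since the graph conditions concern only $\tilde G^\circ$ versus $G^\circ$ as defined via merging, the conclusion transfers.

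The remaining step is checking the standing hypotheses of \Cref{cor:main}. The Hurwitz assumption gives \eqref{eq:assumption-A2}, hence \eqref{eq:assumption-A1} by \Cref{sec:A2-implies-A1}. The sign conditions must hold on all of $\X\times\uInt_0$ (\Cref{rmk:Cramer-at-ss-dynamical}), and here they do on the open box since the partials of $\lambda$ are everywhere positive.

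The main obstacle I expect is a boundary subtlety: \eqref{eq:activation_fcn-appendix} gives strict positivity of partials, but we must confirm "strictly non-negative/non-positive" in the paper's sense on the open set where trajectories live. I will handle this by restricting $\X$ to the open box $(0,T_1)\times\cdots\times(0,T_n)$, which is forward invariant since $\dot x_i>0$ at $x_i=0$ and $\dot x_i<0$ at $x_i=T_i$. With that settled, \Cref{cor:main} yields either an IFFL or the stated PFBL/NFBL combination.
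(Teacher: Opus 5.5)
Your proposal is correct and follows essentially the same route as the paper: substitute $[\cf{C}_j]=A_jU(L_T)$, reduce via the conservation laws, and check entrywise that the augmented Jacobian has negative diagonal and that its off-diagonal and input-column signs match $\tilde{G}^\circ$, then invoke \Cref{cor:main}; your explicit restriction to the open box, where the partials in \eqref{eq:activation_fcn-appendix} are genuinely positive, is a detail the paper leaves implicit. One small misattribution: the simplicity of $G^\circ$ is what gives at most one contribution per off-diagonal slot, whereas the no-auto-regulation hypothesis is what keeps an indefinite-sign term such as $\deriv{\lambda}{1}-\deriv{\lambda}{2}$ (from auto-activation) off the diagonal.
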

\begin{proof}
    The proof follows closely that of \Cref{thm:T-cell}. As before, let $\mm \Gamma \in \rr^{n \times N}$ be the adjacency matrix of the part of $G^\circ$ that governs how $\cf{C}_j$'s regulate $\cf{X}_i$'s, and let $\mm \Xi \in  \rr^{n \times n}$ be the adjacency matrix for how the intracellular species $\cf{X}_i$ regulate each other. The full ODE system is 
        \eq{
        -\dot{[\cf{X}_i]} = \dot{[\cf{X}_{ip}]} &=  
            \gamma_i [\cf{X}_i] - \delta_i [\cf{X}_{ip}] 
            + \underbrace{\sum_{\substack{j=1, \\ [\mm \Gamma]_{i,j} = 1}}^N \tilde{\lambda}_{ji}([\cf{C}_j], [\cf{X}_i] )
                - \sum_{\substack{j=1, \\ [\mm \Gamma]_{i,j} = -1}}^N \tilde{\lambda}_{ji}([\cf{C}_j], [\cf{X}_{ip}] )
                }_{\text{Regulation by $\cf{C}_j$ on $\cf{X}_i$}}
            \\&\qquad + \underbrace{\sum_{\substack{j=1, \\ [\mm \Xi]_{i,j} = 1}}^n \lambda_{ji}([\cf{X}_{jp}], [\cf{X}_i] )
                - \sum_{\substack{j=1, \\ [\mm \Xi]_{i,j} = -1}}^n \lambda_{ji}([\cf{X}_{jp}], [\cf{X}_{ip}] )
                }_{\text{Regulation by $\cf{X}_j$ on $\cf{X}_i$}},
    }
    where $\gamma_i$, $\delta_i > 0$ are constants, and $\tilde{\lambda}_{ji}$, $\lambda_{ji}$ satisfy \eqref{eq:activation_fcn-appendix}. 
    Using the conservation laws $[\cf{X}_i] + [\cf{X}_{ip}] = T_i$ to eliminate the variables of the inactive forms, we reduce the system to one on $(0,T_i)^n \subseteq \rrpp^n$, where $x_i =  [\cf{X}_{ip}]$: 
    \eq{
    \begin{split}
        \dot{x}_i &= \gamma_i (T_i - x_i) - \delta_i x_i
            + \sum_{j \colon [\mm \Gamma]_{i,j} = 1} \tilde{\lambda}_{ji} ([\cf{C}_j], T_i -x_i )
                - \sum_{j \colon [\mm \Gamma]_{i,j} = -1} \tilde{\lambda}_{ji} ([\cf{C}_j] ,   x_i )
           \\&\qquad + \sum_{j \colon [\mm \Xi]_{i,j} = 1} \lambda_{ji}( x_j , T_i-x_i)
                - \sum_{j \colon [\mm \Xi]_{i,j} = -1} \lambda_{ji} (x_j ,x_i) .  
    \end{split}
    }

    In terms of the partial derivatives, for $r \in \{1,2\}$, we write $\deriv{\lambda_{ji}}{r}$ for the derivative with respect to the $r$th variable, e.g., $\deriv{\lambda_{ji}}{1} = \frac{\partial \lambda_{ji}}{\partial [\cf{C}_j]}$. A similar notation is used for $\tilde{\lambda}_{ji}$. 
    Then the augmented Jacobian matrix of this reduced system has as its diagonal entries $[\mm A]_{0,0} = -1$, and for $1 \leq i \leq n$, 
    \eq{\begin{split}
        [\mm A]_{i,i} &= -\gamma_i - \delta_i 
            - \sum_{j \colon [\mm \Gamma]_{i,j} = 1}  \deriv{\tilde{\lambda}_{ji}}{2}([\cf{C}_j], T_i - x_i)
            - \sum_{j \colon [\mm \Gamma]_{i,j} = -1} \deriv{\tilde{\lambda}_{ji}}{2}([\cf{C}_j], x_i)
            \\&\qquad  
            - \sum_{j \colon [\mm \Xi]_{i,j} = 1} \deriv{\lambda_{ji}}{2} (x_j, T_i - x_i)
            - \sum_{j \colon [\mm \Xi]_{i,j} = -1} \deriv{\lambda_{ji}}{2} (x_j, x_i) < 0. 
    \end{split}}
    For $i\neq j \in [n]$, we have $[\mm A]_{0,j} = 0$ and 
    \eq{
        [\mm A]_{i,j} &= \left\{
            \begin{array}{cl}
                \deriv{\lambda_{ji}}{1}(x_j, T_i - x_i)  & \text{ if } [\mm\Xi]_{i,j} = 1, \\
                -\deriv{\lambda_{ji}}{1}(x_j, x_i) & \text{ if } [\mm\Xi]_{i,j} = -1, \\
               0  & \text{ if } [\mm\Xi]_{i,j} = 0 .
            \end{array}
            \right. 
    }
    Finally, using the equilibrium relations $[\cf{C}_j] = A_j U(L_T)$ from \eqref{eq:Tcell-Ci-ss}, we have 
    \eq{
        [\mm A]_{i,0} &= \left\{ 
        \begin{array}{cl}
            0 & \text{ if } [\mm \Gamma]_{i,\cdot} = \vv 0, \\ 
             U'(L_T) \sum_{j\colon [\mm \Gamma]_{i,j} \neq 0}  A_j \deriv{\tilde{\lambda}_{ji}}{1}  & \text{ if } [\mm \Gamma]_{i,\cdot} \neq \vv 0 ,  \, [\mm \Gamma]_{i,\cdot} \geq \vv 0,   \\
             -U'(L_T) \sum_{j\colon [\mm \Gamma]_{i,j} \neq 0}  A_j \deriv{\tilde{\lambda}_{ji}}{1}   & \text{ if }  [\mm \Gamma]_{i,\cdot} \neq \vv 0 , \,  [\mm \Gamma]_{i,\cdot} \leq \vv 0  .
        \end{array}
        \right. 
    }
    Thus, the proof goes through just as in \Cref{cor:main}. 
\end{proof}

\begin{rmk}
\label{rmk:application-T-cell-minimal}
    There are only two ingredients that are absolutely needed for the proofs of \hyperref[thm:T-cell]{Theorems~\ref{thm:T-cell}} and \ref{thm:T-cell-appendix}.  First, the mandatory contribution to the augmented Jacobian matrix $\mm A$ by a regulatory edge $\cf{X}_j \neutedge \cf{X}_i$ is only in the $(i,j)$ entry. Moreover, the sign of the entry matches that of the regulatory edge. It is possible that this edge also contributes a negative term to the diagonal of $\mm A$. Note that for some species, the spontaneous conversion between forms may not be necessary for the relevant diagonal entry to be negative.  
\end{rmk}

Finally, we simply note that other mechanisms are also permitted. For example,  $\cf{X} \posedge \cf{Y}$ can represent the two reactions  
    \eq{
        \cf{X}_p + \cf{Y} \FRxn{} \cf{X}_p + \cf{Y}_p 
        \quad \text{and} \quad 
        \cf{X} + \cf{Y}_p \FRxn{} \cf{X} + \cf{Y}, 
    }
where the rate function for each reaction satisfies \eqref{eq:activation_fcn-appendix}. Similarly $\cf{X} \negedge \cf{Y}$ can refer to the two reactions  
    \eq{
        \cf{X}_p + \cf{Y}_p \FRxn{} \cf{X}_p + \cf{Y} 
        \quad \text{and} \quad 
        \cf{X} + \cf{Y} \FRxn{} \cf{X} + \cf{Y}_p .
    }

\section{Details about \Cref{ex:appl-Tcell-noIFFL}}
\label{app:ex:appl-Tcell-noIFFL}

The parameters for the model of T-cell activation are as follow. For the kinetic proofreading model, the rate constants are all set to $1$, and the total amount of TCRs is $\rho = 5$, while $L_T \in (0.01,80)$. 
The total amount of each intracellular protein is set to $1$, and the rate constants for the spontaneous conversions are 
    $\gamma_1 = 2$, $  \delta_1 = 1$, 
    $\gamma_2 = 0.01$, $  \delta_2 = 0.05$, 
    $\gamma_3 = 0.02$, $  \delta_3 = 0.02$, 
    $\gamma_4 = 0.001$, $ \delta_4 = 1$.  
Finally, for the regulations, the rate constants are listed in \Cref{tab:ex-noIFFL-scheme}.  
    
    \begin{table}[h!]
    \centering 
    \begin{tabular}{cc}
    Regulatory edge & Rate constant \\ 
    \hline
    $\cf{C}_1 \negedge \cf{X}_1$ & $10$ \\
    $\cf{X}_1 \posedge \cf{X}_2$ & $4$ \\
    $\cf{X}_2 \negedge \cf{X}_1$& $25$ \\
    $\cf{X}_2 \negedge \cf{X}_3$& $5$ \\
    $\cf{X}_3 \negedge \cf{X}_2$& $420$ \\
    $\cf{X}_2 \posedge \cf{X}_4$& $1.25$ \\
    $\cf{X}_4 \posedge \cf{X}_2$&  $5$\\
    \hline
    \end{tabular}
    \caption{Rate constants for the regulatory network in \Cref{fig:ex-noIFFL-scheme}.}
    \label{tab:ex-noIFFL-scheme}
    \end{table}

\Cref{fig:ex-noIFFL-evals} contains a plot of the eigenvalues $\lambda_i$ of the reduced system, indicating that the system is asymptotically stable for each $L_T$ used. The top contains plots of the real parts; the top-right plot zooms in to close to the zero. The bottom contains plots of the eigenvalues in the complex plane; the bottom-right zooms in close to the origin. The code for the simulation is available at  \cite[Example~5.3 simulation]{github}.

\begin{figure}[h!]
\centering 
    \includegraphics[scale=0.75]{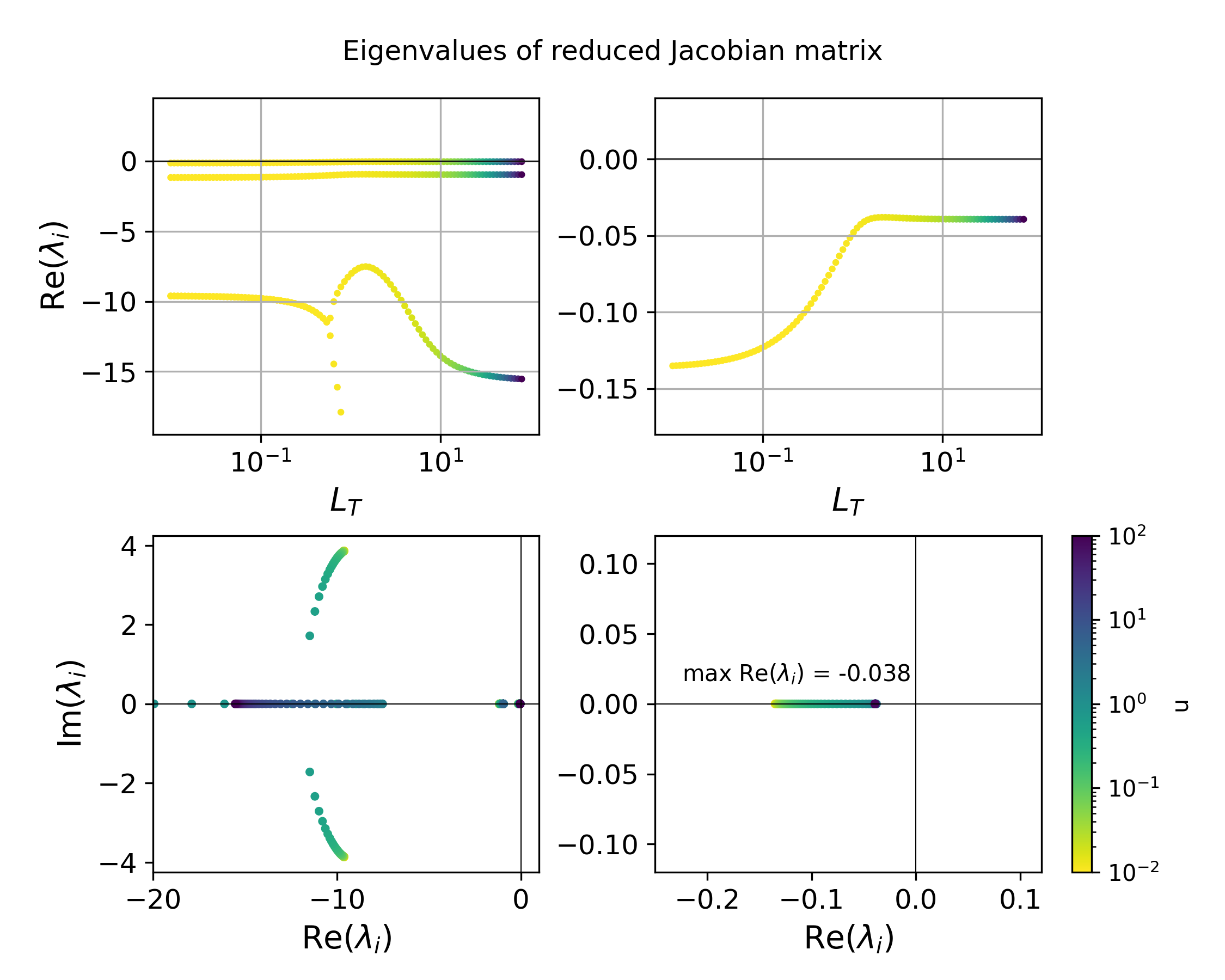} 
\caption{
    The eigenvalues of the reduced system for $\cf{X}_i$, $\cf{X}_{ip}$, $i\in[4]$ from \Cref{ex:appl-Tcell-noIFFL}. ({\sc Top}) The real parts of the eigenvalues. ({\sc Bottom}) Plots of eigenvalues in the complex plane. ({\sc Right}) Zoomed in views. The system is never singular, where the largest real part of the eigenvalues is less than $-0.03$. 
}
\label{fig:ex-noIFFL-evals}
\end{figure}

\clearpage 
\section{Details about \Cref{ex:GRN-4node}}
\label{app:ex:GRN-4node}

The parameters for \Cref{ex:GRN-4node} are as follow:  $\gamma_2 = \gamma_3 = 0.02$, $\delta_1 = 1$, $\delta_2 = \delta_3 = 0.1$, 
    $(k_1, V_1,m_1) = (1,1,4)$, 
    $(k_2, V_2,m_2) = (1,10,4)$, 
    $(k_3, V_3,m_3) = (0.5,3,2)$,  and 
    $(k_4, V_4,m_4) = (0.5,1,5)$. 
The eigenvalues, shown in \Cref{fig:ex:GRN-4node}, are always negative. In other words, the system is stable. The code for the simulation is available at  \cite[Example~5.4 simulation]{github}.

\begin{figure}[h!]
\centering 
    \includegraphics[scale=0.8]{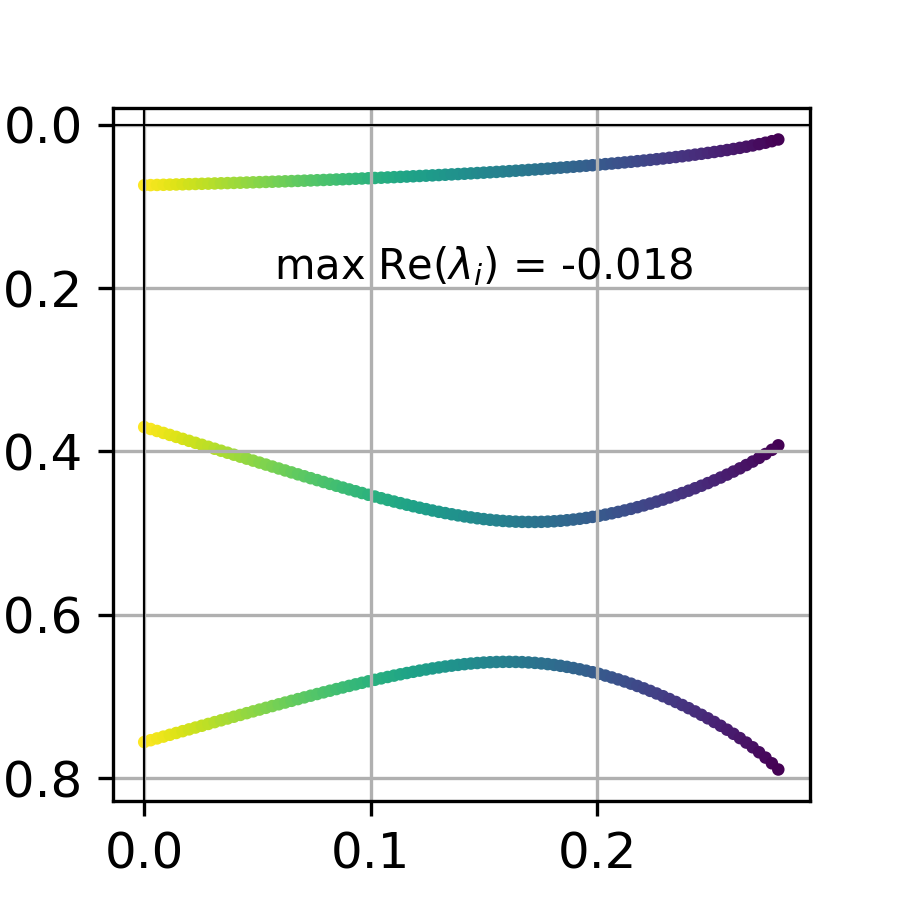} 
\caption{
    The eigenvalues of the system from \Cref{ex:appl-Tcell-noIFFL}. 
}
\label{fig:ex:GRN-4node}
\end{figure}

\section{Jacobian matrix of the CRM in \Cref{sec:application-CRM}}
\label{sec:application-CRM-appendix}

\setcounter{equation}{0} 

We prove here that the Jacobian matrix of \eqref{eq:application-CRM-ODE-R}--\eqref{eq:application-CRM-ODE-C} at a positive equilibrium $(\vv x^*, \vv y^*)$ is 
\eq{
    \mm J^* = \begin{pNiceArray}{r|cr}
            - \mm D & \tilde{\mm \Gamma} - \mm I_1 \\ 
            \hline 
            \mm I_2 & \mm 0
        \end{pNiceArray},
}
where 
\eq{
    \mm D &= \diag\left(\vv \mu + \left[ \tilde{\mm A} \odot \tilde{\mm F}'(\vv x^*) \right] \vv y^* \right) \in \rr^{M \times M}, \\ 
    \mm I_1 &= \tilde{\mm A} \odot \tilde{\mm F}(\vv x^*) \in \rr^{M\times N}, \\ 
    \mm I_2 &= \diag(\vv y^*) \left[ \tilde{\mm B} \odot \tilde{\mm F}'(\vv x^*) \right]^\top \in \rr^{N\times M}. 
}

The upper-left $M\times M$ block comes from the derivatives of $\dot x_i$ with respect to $x_{i'}$; similarly the lower-right $N\times N$ block comes from the equations for $y_j$'s. 
The off-diagonal entries in both the top-left and bottom-right corners are zero, due to the bipartite structure of the community network. The diagonal entries of the top-left corner are given by $- \left[\mu_i + \sum_j \alpha_{ij} f_{ij}'(x_i) y_j  \right]$ from \eqref{eq:application-CRM-ODE-R}. The diagonal of the bottom-right is given by $\left[ \tilde{\mm B}\odot \tilde{\mm F}(\vv x)\right]^\top \ones  - \vv \delta $, which is zero at $\vv x^*$; see \eqref{eq:application-CRM-ss}.

The off-diagonal blocks of $\mm J^*$ refer to the interactions between resources and consumers. For the bottom-left block, we consider the dependence of $\dot y_j$ on $x_i$, where $[\mm J^*]_{M+j,i} =  y_j \beta_{ij} f'_{ij}$. For the upper-right block, look to how $\dot x_i$ depends on $y_j$: 
\eq{
    [\mm J^*]_{i,M+j} = \gamma_{ij} - \alpha_{ij} f_{ij}
    = \left\{ 
    \begin{array}{cl}
         -\alpha_{ij} f_{ij} & \text{ if } \alpha_{ij} > 0 ,  \\
         \gamma_{ij}  & \text{ if } \gamma_{ij} > 0, \\
         0 & \text{ if } \alpha_{ij} = \gamma_{ij} = 0.
    \end{array}\right. 
}
Note that by assumption, $\gamma_{ij} > 0$ requires $\alpha_{ij} = 0$. Either way, the upper-right block is given by $\tilde{\mm \Gamma} - \tilde{\mm A} \odot \tilde{\mm F}$.

\section{Reducing to a subgraph when calculating sensitivity and connection to works by Golubitsky et al.}
\label{sec:app-trimgraph}

\setcounter{equation}{0}

The purpose of this section is two-fold. First, in \Cref{rmk:trimgraph} we noted that not necessarily every edge in the J-graph contributes to $d_u{x^*_n}$. Below we discuss two manners in which this occurs (\Cref{sec:trimming,sec:edges-not-in-cover}). 

Second, an observation made in \Cref{sec:edges-not-in-cover} naturally leads to the factoring of the non-principal minor (\Cref{sec:factoring}). Since this closely echoes the framework developed in \cite{WangHuangAntoneliGolubitsky2021}, we discuss in \Cref{sec:lit-golubitsky} the connection between their approach and ours. In particular, we provide a graph-theoretic interpretation for the objects defined in \cite{WangHuangAntoneliGolubitsky2021} and related works.

\subsection{Removing nodes that are not downstream of input and upstream of output}
\label{sec:trimming}

The claim of this section is that the only nodes that contribute to the sensitivity $d_u{x^*_n}$ via \Cref{lem:Cramer} are those that are downstream of the input $u$ and upstream of the output $x_n$. Terms associated to other nodes may appear in the non-principal minor $\mm A[\hat 1, \widehat{n+1}]$ and the Jacobian $\det(\mm J)$ simultaneously, but these terms are canceled out in the rational expression for $d_u{x^*_n}$. 

The proof follows \cite[Theorem~1]{AscensaoDattaHanciogluSontagEtAl2016}; this process is analogous to identifying the observable and reachable nodes in the \lq\lq Kalman decomposition\rq\rq\ reduction in linear control theory. 

First, partition the variables in terms of their reachability from $u$ and to $x_n$. Let $X$ be the (ordered) set of nodes downstream of $u$ and upstream of $x_n$; without loss of generality, we may assume $x_n$ belongs to $X$. Let $Y$ be the set of nodes upstream of $x_n$ but not downstream of $u$; $W$ for those downstream of $u$ but not upstream of $x_n$, and $V$ for those neither downstream of $u$ nor upstream of $x_n$. Rearranging the variables, \eqref{eq:syst_main} can be written as 
\eq{
    \dot Y &= G(Y) \\
    \dot V &= H(Y, V) \\ 
    \dot X &= F(Y, X; u) \\
    \dot W &= K(Y,V,X,W; u), 
}  
with augmented Jacobian matrix 
\eq{
    \mm A = \begin{pNiceArray}{c|cccc}[first-row,first-col,
      code-for-first-row=\scriptstyle\color{gray},
      code-for-first-col=\scriptstyle\color{gray}]
      & u & Y & V & X & W \\
u   & -1 & \vv 0^\top & \vv 0^\top & \vv 0^\top & \vv 0^\top  \\ \hline
Y   & \vv 0 & \mm J_{11} & \mm{0} & \mm{0} & \mm{0}   \\
V   & \vv 0 & \mm J_{21} & \mm J_{22}  & \mm{0} & \mm{0} \\ 
X & \partial_u F & \mm J_{31} & \mm{0} & \mm J_{33} & \mm{0} \\
W & \partial_u K & \mm J_{41} & \mm J_{42} & \mm J_{43} & \mm J_{44}
\end{pNiceArray} . 
}
Note that $\det (\mm J) = \det(\mm J_{11}) \det(\mm J_{22}) \det(\mm J_{33}) \det(\mm J_{44})$.

Let $X_{j'} = x_n$, where $j \coloneqq |Y| + |V| + j'$, so column $j+1$ of $\mm A$ picks out the output variable. Denote by $\mm M \coloneqq \mm J_{33}(j' \leftarrow \deriv{F}{u})$ the matrix where we replaced column ${j'}$ in $\mm J_{33}$ by the vector $\deriv{F}{u}$. Then the relevant non-principal minor from \Cref{lem:Cramer} can be factored as $\mm A[\hat 1, \widehat{j+1}] = (-1)^{j-1} \det(\mm J_{11}) \det(\mm J_{22}) \det(\mm M)\det(\mm J_{44})$, leading to a simplification of \eqref{eq:Cramer}: 
\eqn{\label{eq:Cramer-2-after_pruning}
    d_u{x^*_{n}} 
        = (-1)^{j} \frac{\mm A^*[\hat 1, \widehat{j+1}]}{\det(\mm J^*) }
        = \frac{-\det(\mm M^*) }{\det(\mm J_{33}^*)}. 
}
In other words, for the purpose of computing the sensitivity $d_u{x_n^*}$, we do not need the entire J-graph $G$, but only the induced subgraph $G[X\cup\{u\}]$, where $X$ is the set of nodes that are both downstream of $u$ and upstream of $x_n$. The subgraph $G[X\cup\{u\}]$ is called the \lq\lq core network\rq\rq\ in \cite{WangHuangAntoneliGolubitsky2021}.

The \lq\lq Kalman decomposition\rq\rq\ reduction can be carried out for the symbolic matrix $\mbc A$ instead of $\mm A$.

\subsection{Removing edges that do not appear in any disjoint I/O cover}
\label{sec:edges-not-in-cover}

What we describe next concerns the determinant of signed matrices (and its minors). More accurately, it concerns only the non-zero pattern of the matrix and the results apply equally well to the permanent as to the determinant. 

In this section, suppose we have an $m \times m$ symbolic matrix $\mbc A$ where $m > 1$, with strictly non-zero diagonal. Any non-zero entry is an indeterminate, i.e., $[\mbc A]_{ij} = a_{ij}$,  where $a_{ij}$ is non-zero.  Assume that every node is downstream of $1$, which we take as input, and upstream of $m$, which we treat to be the output. Since we are concerned with $\mbc A[\hat 1, \hat m]$, we may assume without loss of generality that $[\mbc A]_{1,j} = 0$ for $2 \leq j \leq m$, and that $\det(\mbc A) \not\equiv 0$. Let $G$ be the J-graph of $\mbc A$, and let $\mc E$ be the union of disjoint I/O covers. 

We described in \Cref{rmk:cycles-permutation-cover} how every non-zero term in the expansion of the non-principal minor $\mbc A[\hat 1, \hat m]$ corresponds to a disjoint I/O cover of the J-graph $G$, and for any $a_{ij}$ that appears in $\mbc A[\hat 1, \hat m]$, the corresponding edge $i \neutedge j$ must belong to an I/O cover. Therefore, the collection of non-zero terms in the expansion of the minor $\mbc A[\hat 1, \hat m]$ are in a one-to-one correspondence with the set of disjoint I/O covers of $\mc E$, which is in a one-to-one correspondence with the set of disjoint I/O covers of $G$. 
Note however, an edge in $G\setminus \mc E$ may be relevant when calculating $\det(\mbc A)$.

\begin{figure}[h]
\centering\setlength\tabcolsep{0.5em}
\medskip 
\begin{tabular}[b]{cc}
	\raisebox{6pt}{
		\subcaptionbox{\label{fig:factoring-Jgraph}}{
			\includegraphics{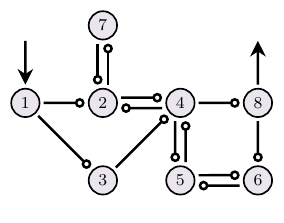}
		}
	}
	&
	\raisebox{0.8cm}{
		\multirow[b]{2}{*}[3.cm]{\subcaptionbox{\label{fig:factoring-covers}}{
			\includegraphics{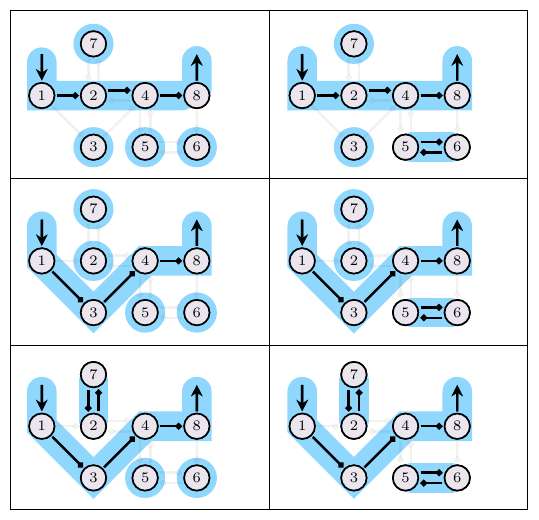}
		}}
	}
	\\[20pt]
	{
		\subcaptionbox{\label{fig:factoring-union}}{
			\includegraphics{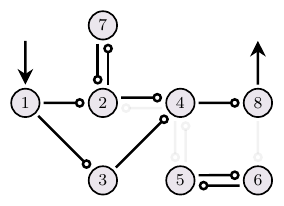}
		}
	}
  \end{tabular}
  \vspace{0.15cm}
\caption{(a) The J-graph from \Cref{ex:factoring} with self-loops omitted for simplicity. (b) The six disjoint I/O covers. (c) The union of the disjoint I/O covers $\mc E$, which has two connected components.  
}
\label{fig:factoring}
\end{figure}

\begin{ex}
\label{ex:factoring}
    Consider the minor $\mbc A [\hat 1, \hat 8]$ of the matrix 
    \eq{ 
        \mbc A = \begin{pNiceArray}{ccccccccc}
             a_{11} & 0 & 0 & 0 & 0 & 0 & 0 & 0 \\
              a_{21} & a_{22} & 0 &  a_{24} & 0 & 0 & a_{27}& 0 \\ 
             a_{31} & 0 & a_{33} & 0 & 0 & 0 & 0& 0 \\ 
             0 & a_{42} & a_{43} & a_{44} & a_{45} & 0 & 0& 0 \\ 
             0 & 0 & 0 & a_{54} &  a_{55} & a_{56} & 0& 0 \\ 
             0 & 0 & 0 & 0 & a_{65} & a_{66} & 0& a_{68} \\
             0 &  a_{72} & 0 & 0 & 0 & 0  & a_{77} & 0\\
             0 & 0 & 0 & a_{84} & 0 & 0 & 0 & a_{88} 
        \end{pNiceArray}. 
    }
    Its J-graph is shown in \Cref{fig:factoring-Jgraph} with self-loops omitted for simplicity. 
    Since the J-graph has six disjoint I/O covers (\Cref{fig:factoring-covers}), the non-principal minor has six terms: 
    \eq{
        \mbc A[\hat 1, \hat 8] 
            = ( a_{56} a_{65} - a_{55} a_{66} ) 
                \left[ \vphantom{\sum}\right. 
                a_{84}
                ( a_{27}a_{31}a_{43}a_{72} 
                    -a_{22}a_{31}a_{43}a_{77} 
                    -a_{21}a_{33}a_{42}a_{77})   
                \left. \vphantom{\sum}\right]
        .
    }  
    Notice that $a_{54}$ does not appear in $\mbc A[\hat 1, \hat 8]$ since $4 \neutedge 5$ is not in any of the disjoint I/O covers. In particular, $4\neutedge 5$ is not in the union of all disjoint I/O covers $\mc E$ (\Cref{fig:factoring-union}). A similar observation can be made for $a_{45}, a_{68}, a_{24}$. 
    
    One might notice that the irreducible factor $(a_{56}a_{65} - a_{55}a_{66})$ in $\mbc A[\hat 1, \hat 8]$ picks up precisely the strongly connected component $5 \neutFBL 6$ in $\mc E$. Meanwhile, the remaining factor, which is associated to the other connected component of $\mc E$, is reducible. In the next section, we discuss next how factors of the non-principal minor is related to $\mc E$. 
\end{ex}

\subsection{Factoring $\mbc A[\hat 1, \hat m]$ and the union of disjoint I/O covers}
\label{sec:factoring}

In \Cref{ex:factoring}, we saw that $\mbc A [\hat 1, \hat 8]$ has three irreducible factors. One of these factors is associated to the strongly connected component $5 \neutFBL 6$. The other two irreducible factors are associated to the connected component with the input and output nodes. We argue that this holds in general. 

A parallel effort at relating factors of $\mbc A[\hat 1, \hat m]$ and the graph has been carried out in \cite{WangHuangAntoneliGolubitsky2021} and related works. Their approach is more algebraic, while ours is graph-centric. The connection between what is discussed in this section and \cite{WangHuangAntoneliGolubitsky2021} is explored in more depth in \Cref{sec:lit-golubitsky}.

We first observe the graph-theoretic structure of the union of disjoint I/O covers $\mc E$.  
The input node $1$ and the output node $m$ must lie within the same connected component, which we call the \df{I/O component} $\mc I_\mrm{I/O}$. Although $\mc I_\mrm{I/O}$ is not strongly connected, $\mc I_\mrm{I/O} \cup \{ m \neutedge 1\}$ is strongly connected. Any other connected components in $\mc E$ must be strongly connected. 

We start with the strongly connected components, each of which corresponds to an irreducible factor of $\mbc A[\hat 1, \hat m]$. Moreover, such a factor is a principal minor of $\mbc A$.

\begin{lem}
\label{lem:appendix-factor-nonIOcomponents}
    Suppose $\mc I$ is a strongly connected component of $\mc E$. Let $\gamma$ be the set of nodes of $\mc I$. Then $\mc I$ corresponds to $\mbc A[\gamma]$, which is an irreducible factor of $\mbc A[\hat 1, \hat m]$.
\end{lem}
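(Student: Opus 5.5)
We will show two things: $\mbc A[\gamma]$ divides $\mbc A[\hat 1,\hat m]$ with a cofactor free of the variables indexed by $\gamma$, and $\mbc A[\gamma]$ is irreducible. Throughout we use \Cref{lem:nonzeroterms-subgraphs}: terms of $\mbc A[\hat 1,\hat m]$ correspond to disjoint I/O covers of $G$, and these are exactly the disjoint I/O covers of $\mc E$.

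The first step is a splitting of each cover. Let $H$ be a disjoint I/O cover of $G$. Every edge of $H$ lies in $\mc E$, so each connected component of $H$ (a cycle, or the I/O path) lies inside a single connected component of $\mc E$. Since $\mc I$ is a connected component of $\mc E$ that is not the I/O component, the I/O path avoids $\gamma$. Hence $H$ restricted to $\gamma$ is a disjoint cycle cover $H_\gamma$ of $\mc I$, and the rest $H'$ is a disjoint I/O cover of $\mc E\setminus\mc I$. Conversely, every pair consisting of a disjoint cycle cover of $\mc I$ and a disjoint I/O cover of $\mc E \setminus \mc I$ glues to a disjoint I/O cover of $G$. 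Signs multiply correctly by \Cref{lem:nonzeroterms-sign}: the number of positive cycles adds, and the factor $(-1)^{|\gamma|}$ is absorbed consistently.

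Next we identify the two factors. The step needing care is that the disjoint cycle covers of $\mc I$ should be exactly those of $G[\gamma]$. We argue this as follows: any cycle cover $C$ of $G[\gamma]$ can be combined with some $H'$, using the complement of some fixed $H$, to give a disjoint I/O cover of $G$. Therefore every edge of $C$ lies in $\mc E$, and hence in $\mc I$. So $\mbc A[\gamma]$, computed by \Cref{lem:nonzeroterms-subgraphs}\ref{lem:nonzeroterms-subgraphs-a}, equals the sum over cycle covers of $\mc I$. Summing over the pairs then gives
\[
\mbc A[\hat 1,\hat m]=\pm\,\mbc A[\gamma]\cdot Q,
\]
where $Q$ is the sum over disjoint I/O covers of $\mc E\setminus\mc I$ and involves no variable $a_{ij}$ with $i,j\in\gamma$.

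The main obstacle is irreducibility. We plan to invoke the classical fact that the determinant of a symbolic matrix whose associated bipartite graph is matching-covered, i.e.\ \emph{fully indecomposable} (every edge in some perfect matching, and connected), is irreducible; cf.\ \cite{BrualdiShader1995}. Here every edge of $\mc I$ lies in a cycle cover of $\mc I$, which corresponds to a perfect matching, and $\mc I$ is strongly connected. It remains to check full indecomposability directly.

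Suppose $\mbc A[\gamma]=FG'$. Each variable has degree at most one, so the variables split between $F$ and $G'$, and the variables in a single row or column all go to one factor. The diagonal variable $a_{kk}$ lies in row $k$ and column $k$, so row $k$ and column $k$ go to the same factor. An edge $k\to l$ puts column $k$ and row $l$ in the same factor. Strong connectivity of $\mc I$ then forces all rows and columns into one factor, so the other factor is a constant. If the diagonal of $\mc I$ were absent, we would instead use the matching-covered argument.
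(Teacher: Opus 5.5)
Your proof is correct. The divisibility half is the paper's argument. Splitting each disjoint I/O cover into a cycle cover of $\mc I$ and an I/O cover on $\gamma^c$ is the paper's factorization $\sigma=\tau\rho$ with $S_m^* = S_\gamma^*\times S_{\gamma^c}^*$. Your check that every cycle cover of $G[\gamma]$ glues to the $\gamma^c$-part of a fixed I/O cover, and so lies in $\mc E$ and hence in $\mc I$, supplies a step the paper leaves implicit: all of $S_\gamma^*$, not just a subset, occurs in the product.

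The genuine difference is irreducibility. The paper cites Schneider's Corollary~7.2, i.e.\ the chain ``$\mc I$ strongly connected $\Rightarrow$ $\mbc A(\gamma)$ irreducible $\Rightarrow$ (nonzero diagonal) fully indecomposable $\Rightarrow$ $\mbc A[\gamma]$ irreducible.'' You prove it from scratch:
\begin{itemize}
\item multilinearity puts each variable in exactly one factor;
\item two variables sharing a row or column cannot be separated, since their product would then occur in some monomial;
\item $a_{kk}$ links row $k$ to column $k$;
\item each edge $k\to l$ of $\mc I$ links column $k$ to row $l$, and by your splitting step every such edge occurs in some cycle cover of $G[\gamma]$, so its variable appears in $\mbc A[\gamma]$.
\end{itemize}
Connectivity of $\mc I$ then leaves a constant cofactor.

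Your route is self-contained and shows exactly where the nonzero diagonal is used. The paper's $2\times 2$ counterexample is precisely the case where rows and columns fall into two classes, $\{\text{row } 1, \text{column } 2\}$ and $\{\text{row } 2, \text{column } 1\}$, giving the factors $b$ and $c$. The paper's citation is shorter and places the result in the full-indecomposability framework. That framework is what you would need if the diagonal were not full, a case you only mention and which does not arise in this setting.
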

\begin{proof}
    First, observe that in every disjoint I/O cover $H$ of the J-graph, the connected components are supported either on $\gamma$ or on $\gamma^c$. Those supported on $\gamma$ must be cycles in $H$. In other words, $\mc I$ is the union of vertex-disjoint cycles covering $\gamma$, and $\mc I$ corresponds to $\mbc A[\gamma]$.

    Next, we argue that $\mbc A[\gamma]$ divides $\mbc A[\hat 1, \hat m]$. 
    Let $S_m^*$ be the subset of permutations for which the corresponding term $T_\sigma \neq 0$ in $\mbc A[\hat 1, \hat m]$. Similarly, let $S_\gamma^*$ be the set of permutations in $S_\gamma$ for which the corresponding term $T_\tau \neq 0$ in $\mbc A[\gamma]$. Every $\sigma \in S_m^*$ can be factored as $\sigma = \tau \rho$, where $\tau \in S_\gamma$ and $\rho \in S_{\gamma^c}$. Algebraically, we have $T_\sigma = T_\tau T_\rho$ for $\sigma \in S_m^*$, so $\tau \in S_\gamma^*$. 
    Because every component of $H$ is supported on either $\gamma$ or $\gamma^c$, we have independence between $\tau$, $\rho$, in the sense that if $\tau_1 \rho_1$, $\tau_2 \rho_2$ are two permutations in $S_m^*$, then $\tau_i \rho_j$ for $i,j \in \{1,2\}$ produce four elements of $S_m^*$. Therefore, $S_m^* = S_\gamma^* \times S_{\gamma^c}^*$ for some subset $S_{\gamma^c}^*$, and $\mbc A[\gamma]$ divides $\mbc A[\hat 1, \hat m]$. 

    Finally, that $\mbc A[\gamma]$ is irreducible as a polynomial of $a_{ij}$ follows from \cite[Corollary~7.2]{Schneider1977}, because $\mbc A(\gamma)$ has strictly non-zero diagonal, and $\mbc A(\gamma)$ is irreducible as a matrix since $\mc I$ is strongly connected. 
\end{proof}

We remark that Corollary~7.2 of \cite{Schneider1977} truly requires $\mbc A(\gamma)$ to have strictly non-zero diagonal. A simple counterexample is 
    \[ \begin{pmatrix}
        0 & b \\ 
        c & 0
    \end{pmatrix}. 
    \]
Therefore, we cannot directly apply \cite[Corollary~7.2]{Schneider1977} to the non-principal minors of $\mbc A$. 

The full argument behind \cite[Corollary~7.2]{Schneider1977} for an arbitrary symbolic matrix $\mbc B$ is as follow. First, $\det(\mbc B)$ is irreducible as a polynomial if and only if $\mbc B$ is a \emph{fully indecomposable}%
    \footnote{A matrix $\mbc B$ is said to be \emph{fully indecomposable} if there do not exist permutation matrices $\mm P, \mm Q$ such that $\mm P \mbc B \mm Q$ is upper block-triangular.}
matrix~\cite[Theorem~9.2.4]{BrualdiRyser1991}. Full indecomposability is equivalent to \emph{irreducibility}%
    \footnote{A matrix $\mbc B$ is said to be \emph{irreducible} if there does not exist a permutation matrix $\mm P$ such that $\mm P \mbc B \mm P^\top$ is upper block-triangular. Note that full indecomposability implies  irreducibility.} 
only when the matrix has strictly non-zero diagonal~\cite[Theorem~4.2.3]{BrualdiRyser1991}. Finally, $\mbc B$ is irreducible if and only if its directed graph is strongly connected~\cite[Theorem~3.2.1]{BrualdiRyser1991}.

The discussion concerning factors associated to $\mc I_\mrm{I/O}$ can be quite complicated. We use an example to illustrate the key idea, and state the result in \Cref{lem:app-factoring-IOcomponents}.

\begin{figure}[h]
\centering
    \includegraphics[]{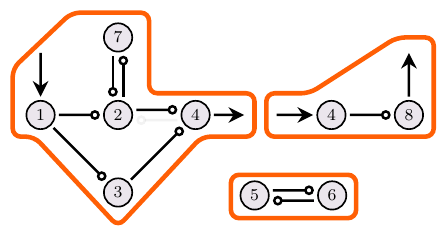}
\caption{Decomposing the graph $\mc E$ from \Cref{fig:factoring-union} into subgraphs corresponding to the irreducible factors of $\mbc A[\hat 1, \hat 8]$. 
}
\label{fig:factoring2}
\end{figure}

\begin{ex}
\label{ex:factoring2}
    Consider \Cref{ex:factoring} again. In the I/O component $\mc I_\mrm{I/O}$ in \Cref{fig:factoring-union}, notice that every I/O path must pass through the nodes $1$, $4$, and $8$. We cleave $\mc I_\mrm{I/O}$ at these nodes, thus splitting it into two subgraphs: one that lies between $1$ and $4$, and another between $4$ and $8$. 
    
    The latter can be seen as its own J-graph with input $4$ and output $8$, which corresponds to the non-principal minor $\mbc B[\hat 1, \hat 2]$, where 
    \eq{
        \mbc B = \mbc A(\{4,8\}) = 
        \begin{pNiceArray}{cc}[first-row,first-col,
            code-for-first-row=\scriptstyle\color{gray},
            code-for-first-col=\scriptstyle\color{gray}]
            & 4 & 8 \\
        4&    a_{44} & 0 \\
        8&    a_{84} & a_{88} 
        \end{pNiceArray}. 
    }
    This returns precisely the irreducible factor $a_{84}$ of $\mbc A[\hat 1, \hat 8]$. 
    The subgraph between $1$ and $4$ can also be seen as its own J-graph with input $1$ and output $4$. This corresponds to the non-principal minor $\mbc C[\hat 1, \hat 4]$, where  
    \eq{
        \mbc C = \mbc A(\{1,2,3,4,7\}) = \begin{pNiceArray}{ccccc}[first-row,first-col,
            code-for-first-row=\scriptstyle\color{gray},
            code-for-first-col=\scriptstyle\color{gray}]
            & 1 & 2 & 3 & 4 & 7 \\
        1&     a_{11} & 0 & 0 & 0 & 0  \\
        2&      a_{21} & a_{22} & 0 &  a_{24} & a_{27} \\ 
        3&     a_{31} & 0 & a_{33} & 0 & 0  \\ 
        4&     0 & a_{42} & a_{43} & a_{44} &  0  \\ 
        7&     0 &  a_{72} & 0 & 0  & a_{77}
        \end{pNiceArray}, 
    }
    This returns the other irreducible factor of $\mbc A[\hat 1, \hat 8]$. Together with $5 \neutFBL 6$, these correspond to the three irreducible factors of $\mbc A[\hat 1, \hat 8]$ (\Cref{fig:factoring2}).
\end{ex}

The basic idea is to cleave $\mc I_\mrm{I/O}$ at nodes through which every I/O path passes. This decomposes $\mc I_\mrm{I/O}$ into an ordered sequence of subgraphs, each of which can be treated as the J-graph of an input--output system. These subgraphs correspond to the irreducible factors of $\mbc A[\hat 1, \hat m]$.

\begin{lem}
\label{lem:app-factoring-IOcomponents}
    Let $\mc I_\mrm{I/O}$ be the I/O component of $\mc E$. Let $V_*$ be the set of nodes through which every I/O path passes through. 
    \begin{enumerate}[label={(\alph*)}]
        \item\label{lem:app-factoring-IOcomponents-a} 
            There is a total ordering on $V_*$ using the reachability relation, i.e., $i \lesssim j$ if and only if there is a (possibly empty) path from $i$ to $j$ in $\mc I_\mrm{I/O}$. 
    \end{enumerate} 
    Let $V_* = \{ i_1 \coloneqq 1 \lesssim i_2 \lesssim \cdots \lesssim i_\ell \coloneqq m\}$, where $\ell \geq 2$. 
    \begin{enumerate}[label={(\alph*)}]
    \setcounter{enumi}{1}
        \item\label{lem:app-factoring-IOcomponents-b} 
            The relation $\lesssim$ defines a pre-order on all the nodes of $\mc I_\mrm{I/O}$. In particular, any node $i \not\in V_*$ lies strictly between $i_k$ and $i_{k+1}$ for a unique $1 \leq k < \ell$. Similarly, any edge in $\mc I_\mrm{I/O}$ lies strictly between $i_k$ and $i_{k+1}$ for some $k$. 

        \item\label{lem:app-factoring-IOcomponents-c}  
            For each $1 \leq k < \ell$, let $\mc I_k$ be the subgraph induced by $\{i \colon i_k \lesssim i \text{ and } i \lesssim i_{k+1}\}$. Then each $\mc I_k$ corresponds to a factor of $\mbc A[\hat 1, \hat m]$; moreover, the factor is a non-principal minor of $\mbc A$. 
    \end{enumerate} 
\end{lem}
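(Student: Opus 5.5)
We plan to work throughout with the augmented graph $\mc E^+ = \mc I_\mrm{I/O} \cup \{m \neutedge 1\}$, which is strongly connected, and with the bijection of \Cref{lem:nonzeroterms-subgraphs}\ref{lem:nonzeroterms-subgraphs-c} between non-zero terms of $\mbc A[\hat 1, \hat m]$ and disjoint I/O covers. Every edge of $\mc I_\mrm{I/O}$ lies on some disjoint I/O cover, and hence on an I/O path or on a cycle that is vertex-disjoint from an I/O path.

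For \ref{lem:app-factoring-IOcomponents-a}, fix any I/O path $P$. Every node of $V_*$ lies on $P$, so $P$ induces a linear order on $V_*$ in which $i$ precedes $j$ exactly when the segment of $P$ from $i$ to $j$ is a path. It remains to exclude a path from $j$ to $i$ when $i$ precedes $j$ on $P$. Suppose such a path $R$ exists. We will splice the initial segment of $P$ up to its first node in $R$, then follow $R$ toward $i$, and then follow $P$ after $i$. Because $i\in V_*$, this route should yield an I/O path that avoids $j$, or else force a repeated vertex; either way this contradicts $j \in V_*$. This splicing argument needs care, and if it becomes messy we would instead rely on \eqref{lem:app-factoring-IOcomponents-a} only as a reachability statement, which is all the lemma asserts. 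Antisymmetry then follows from the linear order given by $P$.

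For \ref{lem:app-factoring-IOcomponents-b}, take a node $v\notin V_*$ of $\mc I_\mrm{I/O}$. Using the covers, $v$ is reachable from $1$ and reaches $m$ within $\mc I_\mrm{I/O}$. We define $k$ as the largest index with $i_k \lesssim v$, and we claim that $v \lesssim i_{k+1}$ and that $v$ is not reachable from $i_{k+1}$. Both claims should follow because any walk from $1$ through $v$ to $m$ must pass through every cut vertex, and it passes through them in the order of \ref{lem:app-factoring-IOcomponents-a}. A single edge is handled by applying this argument to its two endpoints.

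For \ref{lem:app-factoring-IOcomponents-c}, which we expect to be the \textbf{main obstacle}, the plan is to show that every disjoint I/O cover restricted to $\mc I_\mrm{I/O}$ decomposes uniquely into pieces supported on the node sets of $\mc I_1,\dots,\mc I_{\ell-1}$, which overlap only at the cut vertices. Each piece is then a disjoint I/O cover of $\mc I_k$, with input $i_k$ and output $i_{k+1}$. The key claim is that no cycle of such a cover crosses a cut vertex $i_k$. A crossing cycle would have to contain $i_k$, but $i_k$ already lies on the I/O path, and vertex-disjointness forbids this. Any other crossing cycle would create reachability backwards across $i_k$, which contradicts \ref{lem:app-factoring-IOcomponents-b}. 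Conversely, concatenating I/O covers of the pieces gives a valid cover, so the set of covers is a product $S^*_{(1)}\times\cdots\times S^*_{(\ell-1)}$, exactly as in the proof of \Cref{lem:appendix-factor-nonIOcomponents}. By \Cref{lem:nonzeroterms-subgraphs}\ref{lem:nonzeroterms-subgraphs-c} applied to $\mbc A(\gamma_k)$, where $\gamma_k$ is the node set of $\mc I_k$, each piece corresponds to the non-principal minor $\mbc A(\gamma_k)[\hat{i_k}, \widehat{i_{k+1}}]$. The product identity holds up to sign, and we track the signs via \Cref{lem:nonzeroterms-sign}. The step we distrust most is the unique-decomposition claim, specifically whether cycles entirely between two cut vertices but using off-component nodes can interfere. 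We resolve this by working within $\mc I_\mrm{I/O}$, since other components factor off by \Cref{lem:appendix-factor-nonIOcomponents}.
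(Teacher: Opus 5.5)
\paragraph{The gap is in part (a).} Your argument there does not establish antisymmetry, and parts (b) and (c) depend on it.

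The splice gives no contradiction. The route formed by $P$ from $1$ to $w$, then $R$ from $w$ to $i$, then $P$ from $i$ to $m$ contains the tail of $P$ after $i$, which passes through $j$. So it is just another I/O walk through $j$.

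The fallback, that antisymmetry ``follows from the linear order given by $P$'', confuses two different relations. The order of $V_*$ along one path is not the same as reachability inside $\mc I_\mrm{I/O}$, and it is the latter that must be antisymmetric.

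In fact no argument that treats $R$ as an arbitrary directed path can succeed, because the claim is false in $G$. In \Cref{ex:factoring}, $G$ contains $8\to 6\to 5\to 4$ (entries $a_{68},a_{56},a_{45}$) and the cycle $2\to 4\to 2$. So in $G$ the super-simple node $8$ reaches $4$, and $4$ lies on a cycle.

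What excludes such paths in $\mc E$ is a fact you state in your first paragraph but never use. Every non-loop edge of $\mc E$ lies on an I/O path or on a cycle of some disjoint I/O cover. Such a cycle avoids $V_*$, since every node of $V_*$ lies on that cover's I/O path.

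Antisymmetry on $V_*$ alone would not be enough for (b) and (c). You also need that no node outside $V_*$ is mutually reachable with a cut vertex. Otherwise the node sets of $\mc I_{k-1}$ and $\mc I_k$ can overlap beyond $i_k$, and the unique decomposition in (c) fails. Your ``walk passes the cut vertices in the order of (a)'' step silently assumes this.

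\paragraph{The missing lemma.} For $v\in V_*$, let $S_v$ be the set of nodes that reach $m$ in $G-v$. Then $v\notin S_v$, and every node after $v$ on an I/O path lies in $S_v$. The claim is that $S_v$ is closed under non-loop edges of $\mc E$.

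To see this, let $a\to b$ be such an edge with $a\in S_v$.
\begin{itemize}
\item Suppose it lies on an I/O path $P$. Then $a$ cannot precede $v$ on $P$: the segment of $P$ from $1$ to $a$, followed by a path from $a$ to $m$ in $G-v$, would be an I/O walk avoiding $v$. Hence $b$ follows $v$ on $P$, and $b\in S_v$.
\item Suppose it lies on a cycle $C$ of a cover. Then $C$ avoids $v$, and $b$ reaches $a$ along $C$, so $b\in S_v$.
\end{itemize}
The non-loop out-edges of $v$ in $\mc E$ lie only on I/O paths, so they also land in $S_v$.

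Consequently, in $\mc E$ there is no closed walk through any cut vertex. There is also no walk from $i_{k'}$ back to $i_k$ for $k<k'$, since $i_{k'}\in S_{i_k}$.

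This lemma gives what each part needs:
\begin{itemize}
\item It gives antisymmetry in (a).
\item Combined with your maximal-$k$ argument, it gives $v\not\lesssim i_k$, uniqueness of $k$, and the edge statement in (b).
\item It is exactly what rules out a cover cycle straddling $i_k$ in (c).
\end{itemize}
With the lemma in place, your product-of-covers argument for (c), modeled on the proof of \Cref{lem:appendix-factor-nonIOcomponents}, goes through. The paper itself omits this proof and defers to the block-triangularization (Frobenius--K\"onig) argument of Wang, Huang, Antoneli, and Golubitsky.
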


Since the proof mirrors that of \cite[Section~1.7.4]{WangHuangAntoneliGolubitsky2021}, we omit it for simplicity.

\subsection{Connection to literature}
\label{sec:lit-golubitsky}

Since versions of \Cref{lem:app-factoring-IOcomponents,lem:appendix-factor-nonIOcomponents} appear in \cite{WangHuangAntoneliGolubitsky2021} and related works, we take this opportunity to discuss their works in relation to \Cref{sec:factoring}. The main difference lies in how we approached our result. Throughout this work, we repeatedly employ the Leibniz formula for determinant expansion, and factoring permutations according to its cycle decomposition. In contrast, \cite{WangHuangAntoneliGolubitsky2021} appeals to Frobenius-K\"onig theory and applies permutation matrices $\mm P, \mm Q$ such that $\mm P \mbc A(\hat 1, \hat m)\mm Q$ is block upper-triangular; see the discussion on full indecomposability vs. irreducibility in the footnote after \Cref{lem:appendix-factor-nonIOcomponents}. 

We note that while \cite{GolubitskyStewartAntoneliHuangWang2020, WangHuangAntoneliGolubitsky2021, deOliveiraMadeiraAntoneli2024} were concerned with quasi-adaptation, parallel work was done for perfect adaptation~\cite{AraujoLiotta2018}. 

As is done in \cite{WangHuangAntoneliGolubitsky2021}, we assume the control parameter only influences the variable $x_1$. In this case, there is a simple relation between treating $u$ vs. $x_1$ as input; therefore, we use the version of J-graph without an explicit node for $u$. On the general case where $u$ may affect multiple variables, we briefly comment on it in \Cref{rmk:multiple-controls}. 

Nodes that appear in any I/O path are called \emph{simple nodes} in \cite{WangHuangAntoneliGolubitsky2021}, and those that appear in \emph{every} I/O path are called \emph{super-simple}. In \Cref{ex:factoring}, the nodes $2,3$ are simple but not super-simple, while $V_* = \{1,4,8\}$ is the set of super-simple nodes. The nodes $5,6,7$ are \emph{appendage} nodes. In particular, $\mc I_\mrm{I/O}$ may contain both simple and appendage nodes; however, the strongly connected components of $\mc E$ contain only appendage nodes. 

The \lq\lq appendage\rq\rq\ language also applies to subgraphs of $\mc E$\footnote{Technically subgraphs of the J-graph.}. The \emph{appendage subnetwork} $H$ is the subnetwork consisting of all the appendage nodes and their incident edges. Each strongly connected component of $H$ is called an \emph{appendage path component}. In \Cref{ex:factoring}, $5 \neutFBL 6$ is an appendage path component of the appendage subnetwork, so is $\{7\}$.  

Looking at \Cref{fig:factoring-union}, we would like to say the node $7$ is different from $5$ and $6$, since $7$ belongs to I/O component $\mc I_\mrm{I/O}$. 
This distinction is accomplished by asking whether there exists a cycle that contains the appendage node and a simple node that is not super-simple%
    \footnote{Formally, let $\mc K$ be an appendage path component. Then a cycle is \emph{$\mc K$-bad} if it contains both a node in $\mc K$ and a simple node that is not super-simple. The component $\mc K$ is said to satisfy the \emph{no cycle condition} if there are no $\mc K$-bad cycles containing nodes in $\mc K$~\cite{WangHuangAntoneliGolubitsky2021}. In \Cref{ex:factoring}, the cycle $2 \neutFBL 7$ is said to be $7$-bad, while the cycle $5 \neutFBL 6$ is said to satisfy the no-cycle condition since the only cycles containing the node $5$ (and $6$) either uses super-simple nodes, or no simple nodes at all.}%
. In the language of \Cref{sec:factoring}, this is equivalent to asking whether the appendage node belongs to the I/O component $\mc I_\mrm{I/O}$ or a strongly connected component of $\mc E$. 
The former (e.g., $G[\{7\}]$), they call a \emph{cycle appendage path component}; the latter (e.g., $G[\{5,6\}]$) is a \emph{no cycle appendage path component}.

Finally, the subgraph induced by the simple nodes and nodes in a cycle appendage path component (i.e., $\mc I_\mrm{I/O}$) is the \emph{structural subnetwork}~\cite[Definition~1.18]{WangHuangAntoneliGolubitsky2021}. This subgraph is associated to non-principal minors of $\mbc A$ (\Cref{lem:app-factoring-IOcomponents}). In contrast, the no cycle appendage path components (i.e., the strongly connected components of $\mc E$) are associated to principal minors (\Cref{lem:appendix-factor-nonIOcomponents}). 

The authors then classify the type of quasi-adaptation based on the degree of the irreducible factor as well as whether it is associated to a structural subnetwork or no cycle appendage path component~\cite[Section~1.9]{WangHuangAntoneliGolubitsky2021}. 
A degree-1 appendage block (corresponding to an irreducible factor $f_{ii}$) is said to be a \emph{null degradation}. A degree-1 structural block (corresponding to an irreducible factor $f_{ij}$) is called \emph{Haldane}, e.g., $f_{84}$ in \Cref{ex:factoring}. Higher degree appendage blocks are simply called \emph{appendage homeostasis of degree-$k$}, e.g., $(f_{56}f_{65}-f_{55}f_{66})$ in \Cref{ex:factoring}. Finally, higher degree structural blocks are classified based on whether or not the subgraph contains a cycle. Without a cycle, it is simply called a \emph{feedforward loop}, but with a cycle, it is called \emph{degree-$k$ structural homeostasis}, e.g., in \Cref{ex:factoring}, the irreducible factor with three terms correspond to a \emph{degree-4 structural homeostasis}.

\begin{rmk}
\label{rmk:multiple-controls} 
    In \cite{MadeiraAntoneli2022}, the framework of \cite{WangHuangAntoneliGolubitsky2021} was extended to systems where the control parameter directly influences more than one state variable. There, the theory and terminology largely remain the same. The only difference is that a degree-$k$ structural homeostasis where the first node is the control parameter\footnote{The node for the control parameter $u$ is technically not shown, so it appears as half-edges.} is called a \emph{input counterweight homeostasis}. 
\end{rmk}

\putbib[cit.bib]
\end{bibunit}

\end{document}